\documentclass[10pt]{article} 
\usepackage[preprint]{tmlr}

\usepackage{amsmath,amssymb,amsthm}
\usepackage{booktabs}
\usepackage{array}
\usepackage{longtable}
\usepackage{graphicx}
\usepackage[bookmarks=false]{hyperref}
\usepackage{url}


\theoremstyle{plain}
\newtheorem{proposition}{Proposition}
\newtheorem{theorem}[proposition]{Theorem}
\newtheorem{corollary}[proposition]{Corollary}
\newtheorem{lemma}[proposition]{Lemma}
\theoremstyle{definition}

\theoremstyle{remark}
\newtheorem{remark}[proposition]{Remark}

\newcommand{\R}{\mathbb{R}}
\newcommand{\E}{\mathbb{E}}
\newcommand{\Var}{\operatorname{Var}}
\newcommand{\Cov}{\operatorname{Cov}}
\newcommand{\tr}{\operatorname{tr}}
\newcommand{\rank}{\operatorname{rank}}
\newcommand{\spn}{\operatorname{span}}
\newcommand{\range}{\operatorname{range}}
\newcommand{\Nor}{\mathcal{N}}
\newcommand{\PhiN}{\Phi_{\mathcal{N}}}
\newcommand{\Pperp}{P_{\perp}}
\newcommand{\Ppar}{P_{\parallel}}
\newcommand{\Rstd}{\widetilde{R}}
\newcommand{\Upop}{U_{\mathrm{pop}}}

\begin{document}

\title{When Noise Estimation Hides Basis Misspecification in\\
Repeated Bayesian Inverse Problems}
\author{\name Rares Dimitrie Grozavescu \email rg625@cam.ac.uk \\
        \addr University of Cambridge
        \AND
        \name Mark Girolami \\
        \addr University of Cambridge}
\maketitle

\begin{abstract}
Basis-restricted priors in Bayesian inverse problems can lose coverage when the truth has components outside the basis. We show that estimating the observation-noise variance can hide this loss. Under a linear forward model, when the in-span prior variance dominates the noise, the maximum-likelihood noise estimate absorbs the out-of-basis energy in the complement of the model range. Residual-magnitude and observation-coverage checks then stay near nominal while field coverage falls. We study repeated problems sharing one forward operator and one basis, fixed independently of the tested data. After projection onto the complement, and conditionally on the fitted noise scale, every exact test is a test of the scale-free direction of the residuals. We test the shape of their sample spectrum with John's sphericity statistic. Under Gaussian noise its null model is exact at finite sample size, and we derive its null mean and its power at proportional dimension. On synthetic problems and in a preregistered GEBCO topography study, the test detects structured out-of-basis variation that cross-validation and observation-coverage checks largely miss. It cannot detect Gaussian out-of-basis variation that is isotropic in the complement, since that is indistinguishable from a change of noise scale.
\end{abstract}

\section{Introduction}

Physical inverse problems recover a field $a$ from indirect, noisy observations
$y=Ha+\varepsilon$, where $H$ is a known forward operator and $\varepsilon$ is
Gaussian noise of scale $\sigma$. A common recipe restricts the field prior to a
fixed low-dimensional basis $\Phi$: a truncated spectral expansion, a
reduced-order basis, a set of physical modes, or the output range of a trained
decoder. The noise scale is fitted alongside the field, and calibration is
checked on held-out observations.

When the true field has components outside $\spn(\Phi)$, the restricted prior
oversmooths. Oversmoothing priors give credible sets whose frequentist coverage
degenerates, and rougher priors give conservative sets of the correct order
\citep{knapik2011bayesian,szabo2015frequentist}.

Write the true field as $a=\Phi w^\star+\xi_\perp$ with $\xi_\perp\perp\spn(\Phi)$. The
observations are then
\begin{equation*}
  y = H\Phi w^\star + H\xi_\perp + \varepsilon .
\end{equation*}
Let $\Pperp$ project onto the orthogonal complement of $\range(H\Phi)$, which
has dimension $n-r$. The projection removes the modelled term:
\begin{equation*}
  \Pperp y = \Pperp H\xi_\perp + \Pperp\varepsilon .
\end{equation*}
Orthogonal to $\range(H\Phi)$ the model has no structure, so $\sigma$ is the only
parameter left to explain the residual there. When the in-span prior scale
dominates the noise, the maximum-likelihood estimate is
\begin{equation*}
  \hat\sigma^2 \approx \|\Pperp y\|^2/(n-r),
\end{equation*}
so it absorbs the out-of-basis field energy (Proposition~\ref{prop:absorption}).
Observation coverage then stays near nominal. Across our sweep it
moved by $0.02$ while field coverage fell by more than $0.5$
(Figure~\ref{fig:sweep}, Appendix~\ref{app:sweep}); Figure~\ref{fig:twospaces}
shows one fit. Checks that read residual magnitude, such as
held-out observation coverage, therefore pass.

\begin{figure}[!t]
\centering
\includegraphics[width=\textwidth]{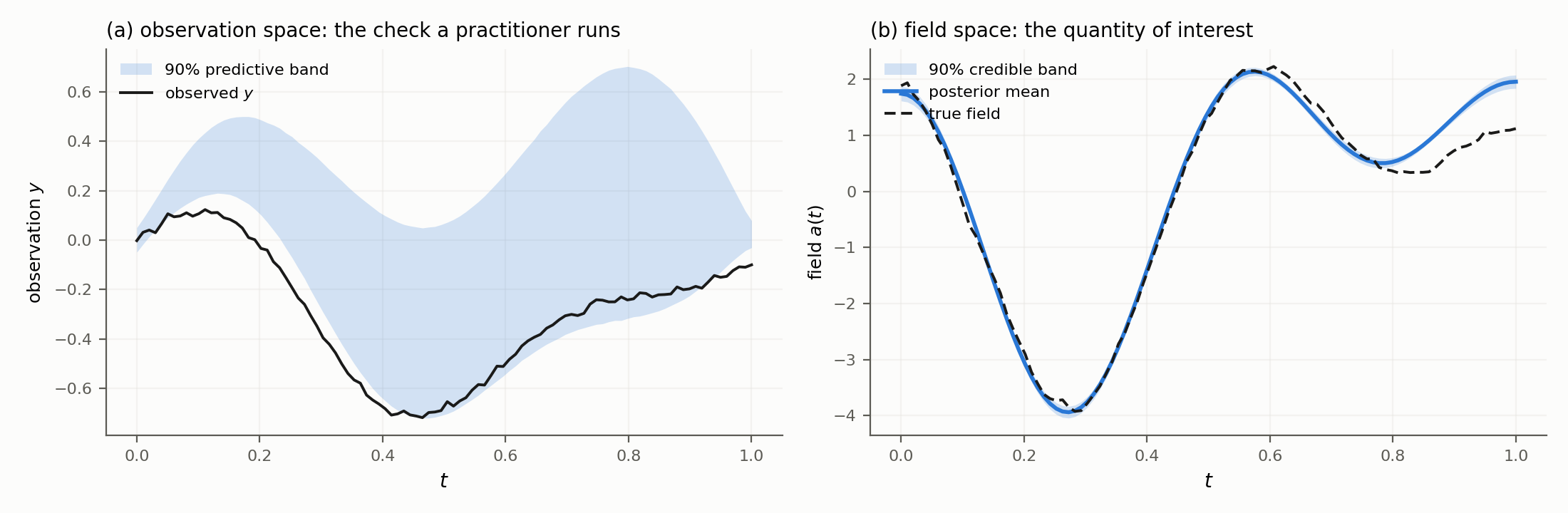}
\caption{The dissociation on a single fit: one model, one dataset, one nominal
level. (a) In observation space the predictive band covers the data and the check
passes. (b) The field posterior of the same model for the same case. The band is
narrow and the truth lies outside it over much of the domain. A practitioner sees
(a); the quantity of interest is (b). Each panel shows the test case at the
median field RMSE among the 100 held-out cases for its configuration.}
\label{fig:twospaces}
\end{figure}

The ratio $\rho=\sigma_\perp^2/\sigma^2$, where $\sigma_\perp^2$ is the
per-dimension observation energy outside $\range(H\Phi)$, would reveal the
misspecification. Computing it requires the true noise scale. The
fitted scale cannot stand in for it, because it has absorbed the excess.
Absorption matches the magnitude of the complement residual but not its shape:
structured out-of-basis energy makes the complement covariance anisotropic.

The method needs repeated realisations: $N$ observation vectors through one
operator $H$ and one basis $\Phi$, with an independent field behind each. The
basis must be fixed independently of the realisations being tested. The test
does not apply to a single observation vector.

Our contributions are the following.
\begin{itemize}\itemsep2pt
\item We identify absorption of the noise scale as the channel through which
basis misspecification escapes observation-space checks
(Proposition~\ref{prop:absorption}). Pinning $\sigma$ at its true value removes
the concealment but not the collapse of field coverage.
\item We show that, conditionally on the fitted noise scale, every exact test on
the complement is a test of the scale-free direction of the residuals
(Proposition~\ref{prop:forced}). We test the shape of their sample spectrum with
John's statistic. Its null model and null mean are exact at finite sample size
(Proposition~\ref{prop:sphericity}, Lemma~\ref{lem:nullmean}), and its power at
proportional dimension is derived for Gaussian data (Theorem~\ref{thm:prop}).
Its blind spot is Gaussian energy that is isotropic in the complement.
\item We measure power over $1047$ simulated configurations and in a
preregistered study of GEBCO topography under three forward operators.
\end{itemize}

John's statistic, its high-dimensional theory and its use in invariant
detection are classical (Section~\ref{sec:related}). What is new is the absorption
mechanism in basis-restricted inverse problems, the reason the residual direction
is the information that remains, and the link from that direction to credible-set
failure in field space. Table~\ref{tab:status} gives the status of each result.

\section{Setup}\label{sec:setting}

Let the physical domain be discretised on $M$ points, with $a\in\R^M$ the field
and $y\in\R^n$ the observations. The forward operator is linear, and
\begin{equation}
  y = Ha + \varepsilon, \qquad \varepsilon\sim\Nor(0,\sigma^2 I_n).
\end{equation}
We observe $N$ vectors $y_1,\dots,y_N$ from one forward problem, with one
operator, one noise scale and one basis, and an independent field realisation
behind each. Examples are tomographic scans of many specimens through one scanner, strain
fields on one test rig, and an airborne gravimeter flown over many survey tiles
(Section~\ref{sec:real}).

Write $\Phi\in\R^{M\times d}$ for the basis. We assume that $\Phi$ is fixed
independently of the $N$ realisations tested. A basis learned from the same
realisations makes the complement data-dependent, and the finite-sample null is
then no longer exact.

The true field decomposes as $a = \Phi w^\star + \xi_\perp$ with
$\xi_\perp\perp\spn(\Phi)$, and the model is well specified if and only if
$\xi_\perp=0$. The fitted prior writes $a=\Phi w$. The theory needs only a coefficient prior
that is Gaussian given a latent label; the Gaussian-mixture prior of our
experiments has this form (proof of Proposition~\ref{prop:absorption}). The fitted
model also has a basis-restricted Gaussian process residual with covariance
$\Phi S'\Phi^\top$, so its image under $H$ lies in $\range(G)$
(Appendix~\ref{app:setup}). The coefficient prior matters little: the mixture adds
little to reconstruction beyond the restriction itself, with a measurable gain on
one of nine benchmarks (Appendix~\ref{app:negative}). The contribution concerns the
basis restriction, the fitted noise scale and the complement test.

In observation space the model lives in $G := H\Phi$, with $r := \rank(G)$. Let
$\Ppar$ project onto $\range(G)$ and $\Pperp=I-\Ppar$. Given the latent label, the
model's marginal covariance of $y$ is $K_y=GSG^\top+\sigma^2I_n$, where $S$ is the
coefficient covariance plus $S'$. Define
\begin{equation}
  \sigma_\perp^2 := \tfrac{1}{n-r}\tr\!\left[\Pperp\left(HK_\perp H^\top+\sigma^2 I\right)\Pperp\right],
\end{equation}
with $K_\perp=\Cov(\xi_\perp)$. This is the per-dimension observation energy the
model cannot see; its counterpart on $\range(G)$ is
$\sigma_\parallel^2:=\tfrac1r\tr[\Ppar(HK_\perp H^\top+\sigma^2I)\Ppar]$. We
assume $\Pperp H\,\E[\xi_\perp]=0$, which holds when $\xi_\perp$ has mean zero. The
misspecification ratio is $\rho := \sigma_\perp^2/\sigma^2$. It is estimable
given $H$, $\Phi$ and a known $\sigma$, with no ground-truth field. The observable law depends on the coefficient covariance and the
basis-restricted residual only through their sum, so we report no separate
covariance recovery.

\section{Absorption of the Noise Scale}\label{sec:theory}

\begin{proposition}[Absorption]\label{prop:absorption}
Under the model of Section~\ref{sec:setting} with linear $H$ and
basis-restricted covariance, $\Pperp G=0$ and $\Pperp K_y \Pperp =
\sigma^2\Pperp$, so the log-likelihood separates as $\ell(\Theta) =
\ell_\parallel(\Theta) + \ell_\perp(\sigma)$ with
\begin{equation}
  \ell_\perp(\sigma) = -\tfrac{n-r}{2}\log(2\pi\sigma^2) - \tfrac{\|\Pperp y\|^2}{2\sigma^2}.
\end{equation}
If the in-span prior scale dominates the noise, so that the dependence of
$\ell_\parallel$ on $\sigma$ is negligible, the maximum-likelihood estimate is
$\hat\sigma^2 = \|\Pperp y\|^2/(n-r)$, which has expectation $\sigma_\perp^2$, so
$\hat\sigma^2/\sigma^2=\rho$ in expectation and $\hat\sigma/\sigma\approx\sqrt{\rho}$,
with no free parameters.
\end{proposition}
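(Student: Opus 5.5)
The plan is to condition on the latent label throughout, so that the prior on $y$ is Gaussian with covariance $K_y=GSG^\top+\sigma^2I_n$. For a Gaussian mixture prior, the log-likelihood then has to be treated component by component, or the argument has to be applied to the complete-data likelihood given the label. I will first record the algebraic facts. Since $\Pperp$ is the orthogonal projector onto $\range(G)^\perp$, we have $\Pperp G=0$. Hence $\Pperp K_y\Pperp=\Pperp GSG^\top\Pperp+\sigma^2\Pperp=\sigma^2\Pperp$, and likewise $\Pperp K_y\Ppar=\Pperp GSG^\top\Ppar+\sigma^2\Pperp\Ppar=0$. The basis-restricted Gaussian process residual has covariance $\Phi S'\Phi^\top$, whose image under $H$ is $GS'G^\top$. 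It is therefore already absorbed into $S$ and does not break this block structure.

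Next I will choose an orthonormal basis adapted to $\range(G)\oplus\range(G)^\perp$, writing $Q=[Q_\parallel\;Q_\perp]$ with $Q_\perp^\top Q_\perp=I_{n-r}$ and $Q_\perp Q_\perp^\top=\Pperp$. In these coordinates $K_y$ is block diagonal, with blocks $Q_\parallel^\top K_yQ_\parallel$ and $\sigma^2I_{n-r}$. The mean of $y$ under the model is $G\mu$, which lies in $\range(G)$, so its $\perp$-block is zero. The Gaussian density of $Q^\top y$ therefore factorises; the Jacobian of $Q$ is $1$. The $\perp$ factor is $\Nor(0,\sigma^2I_{n-r})$ evaluated at $Q_\perp^\top y$, and its log is exactly $\ell_\perp(\sigma)$, because $\|Q_\perp^\top y\|^2=\|\Pperp y\|^2$. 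The remaining factor defines $\ell_\parallel(\Theta)$. This gives the separation for a single observation. For $N$ i.i.d.\ observations the same argument applies to the sum, with $\|\Pperp y\|^2$ replaced by $\sum_i\|\Pperp y_i\|^2$ and $n-r$ by $N(n-r)$.

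For the MLE, I will differentiate $\ell_\perp$ in $\sigma^2$: $-\tfrac{n-r}{2\sigma^2}+\tfrac{\|\Pperp y\|^2}{2\sigma^4}=0$ gives $\hat\sigma^2=\|\Pperp y\|^2/(n-r)$. This requires invoking the dominance hypothesis. The $\parallel$ block is $Q_\parallel^\top GSG^\top Q_\parallel+\sigma^2I_r$, and when $GSG^\top\gg\sigma^2$ on $\range(G)$, the $\sigma$-derivative of $\ell_\parallel$ is of relative order $\sigma^2/\lambda_{\min}$. I will simply treat it as negligible, as the statement allows. This step is the only non-exact part of the argument and is the main obstacle. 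A rigorous version would bound the perturbation of the stationary point by the ratio of $\partial_{\sigma^2}\ell_\parallel$ to the curvature of $\ell_\perp$, which is $\tfrac{n-r}{2\sigma^4}$. The shift in $\hat\sigma^2$ is then $O\!\left(r\sigma^2/((n-r)\lambda_{\min}(GSG^\top|_{\range G}))\right)\cdot\sigma^2$.

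Finally, I will take the expectation under the true law $y=Ha+\varepsilon$ with $a=\Phi w^\star+\xi_\perp$. Then $\Pperp y=\Pperp H\xi_\perp+\Pperp\varepsilon$, because $\Pperp H\Phi=\Pperp G=0$. Using $\Pperp H\E[\xi_\perp]=0$ and the independence of $\varepsilon$ from $\xi_\perp$, we get $\E\|\Pperp y\|^2=\tr[\Pperp(HK_\perp H^\top+\sigma^2I)\Pperp]=(n-r)\sigma_\perp^2$ by definition. Hence $\E\hat\sigma^2=\sigma_\perp^2$ and $\E\hat\sigma^2/\sigma^2=\rho$. The statement $\hat\sigma/\sigma\approx\sqrt\rho$ follows by concentration of $\|\Pperp y\|^2/(n-r)$ about its mean when $n-r$ (or $N(n-r)$) is large. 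Its relative variance is $2\tr(C^2)/\tr(C)^2$ with $C=\Pperp(HK_\perp H^\top+\sigma^2I)\Pperp$, which tends to $0$ unless $C$ is dominated by a few directions. A delta-method step for the square root then completes the argument.
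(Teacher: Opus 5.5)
Your outline matches the paper's in its main steps: the identities $\Pperp G=0$, $\Pperp K_y\Ppar=0$ and $\Pperp K_y\Pperp=\sigma^2\Pperp$, the orthogonal change of variables, the stationary point of $\ell_\perp$, and the expectation under the true law via $\Pperp H\Phi=0$ and $\Pperp H\,\E[\xi_\perp]=0$.

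\textbf{The gap: the latent label.} The $\ell(\Theta)$ in the statement is the marginal log-likelihood of the model of Section~\ref{sec:setting}. Its coefficient prior is Gaussian only given a label (a Gaussian mixture in the experiments). Conditioning on the label proves the separation only for the conditional likelihood. Neither of your fallbacks closes this. The logarithm of a mixture is not a sum of per-component separations. The complete-data likelihood is not the function whose maximiser the statement describes. The missing observation is that the complement factor $\Nor(Q_\perp^\top y;0,\sigma^2I_{n-r})$ is identical for every component. Each component has mean $Gm_k\in\range(G)$ and covariance $GS_kG^\top+\sigma^2I_n$ with a common $\sigma$. The factor therefore comes out of $\sum_k\pi_k(\cdot)$, or out of any integral over the label, and gives exactly $\ell=\log\sum_k\pi_k\,\Nor(Q_\parallel^\top y;Q_\parallel^\top Gm_k,Q_\parallel^\top K_{y,k}Q_\parallel)+\ell_\perp(\sigma)$. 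This one line is how the paper's proof makes the separation exact for the mixture, not just per component.

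\textbf{The maximiser: a different route.} You hold the in-span covariance fixed and bound the shift of the stationary point by the $\sigma$-score of $\ell_\parallel$ divided by the curvature of $\ell_\perp$. This is essentially the paper's Remark~\ref{rem:firstorder}. Two caveats need stating. First, the ratio must be evaluated at the absorbed scale $s^\ast=\|\Pperp y\|^2/(n-r)$, which under misspecification can sit far above the true $\sigma^2$. Second, the range-block misfit must be at most of the order of the in-span scale. That fails when out-of-span energy leaks into $\range(G)$ and the fixed in-span covariance does not account for it.

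The paper instead treats the in-span covariances as free. With $G$ of full column rank, each component's range covariance ranges over all $V\succeq\sigma^2I_r$. The supremum of $\ell_\parallel$ over the in-span parameters then does not depend on $\sigma$ while $\sigma^2\le\lambda^\ast=\min_k\lambda_{\min}(\hat V_k)$, with $\hat V_k$ the unconstrained maximising range covariances. So $s^\ast$ is the \emph{exact} maximiser whenever $s^\ast\le\lambda^\ast$. That route turns ``dominance'' into a checkable inequality, gives exactness, and absorbs range-block leakage into the fitted covariance. Yours quantifies the departure when the in-span covariance is fixed. Either suffices for the statement as phrased.

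\textbf{Concentration.} Your step for $\hat\sigma/\sigma\approx\sqrt\rho$ goes beyond the paper. However, $2\tr(C^2)/\tr(C)^2$ is the relative variance only for Gaussian $\Pperp H\xi_\perp$, which the proposition does not assume. In general a fourth-cumulant term enters, so concentration is better taken from averaging over the $N$ realisations.
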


\begin{proof}
Let $U_\parallel$ and $U_\perp$ be orthonormal bases of $\range(G)$ and its
complement, so that $\Ppar=U_\parallel U_\parallel^\top$ and
$\Pperp=U_\perp U_\perp^\top$. We distinguish the model law, used in the
likelihood, from the true law of $y$, used in the expectation.

\emph{Model covariance.} Conditionally on the latent label (a mixture component, or
more generally the variable on which the coefficient prior is Gaussian), the model
has $y\sim\Nor(Gm,\,K_y)$ with $K_y=GSG^\top+\sigma^2I_n$ (Section~\ref{sec:setting}),
where $S$ collects the coefficient covariance and the basis-restricted residual,
whose image under $H$ lies in $\range(G)$. Since $\Pperp G=0$,
$\Pperp K_y\Ppar=\sigma^2\Pperp\Ppar=0$ and $\Pperp K_y\Pperp=\sigma^2\Pperp$. In the coordinates $[U_\parallel,U_\perp]$ the
covariance is therefore block-diagonal, and $U_\perp^\top Gm=0$, so $U_\perp^\top y$
has mean zero under the model.

\emph{Separation.} The change of variables to $[U_\parallel,U_\perp]$ is orthogonal,
so each conditional density factorises as
$\Nor(U_\parallel^\top y;U_\parallel^\top Gm,U_\parallel^\top K_yU_\parallel)\,
\Nor(U_\perp^\top y;0,\sigma^2I_{n-r})$. The second factor is the same for every
component of the Gaussian mixture, so it factors out of the mixture sum; the same
holds for an integral over any conditionally Gaussian prior. Taking logarithms
gives $\ell=\ell_\parallel(\Theta)+\ell_\perp(\sigma)$, with $\ell_\perp$ as stated,
because $\|U_\perp^\top y\|=\|\Pperp y\|$. The separation is exact.

\emph{Maximiser.} $\ell_\parallel$ depends on $\sigma$ only through
$U_\parallel^\top K_yU_\parallel=U_\parallel^\top GSG^\top U_\parallel+\sigma^2I_r$.
This step needs $G$ to have full column rank. Then, if the in-span covariances are
unconstrained, the range covariance $V_k$ of each mixture component ranges over
every $V\succeq\sigma^2I_r$, so the supremum of $\ell_\parallel$ over the in-span
parameters does not depend on $\sigma$ while
$\sigma^2\le\lambda^\ast:=\min_k\lambda_{\min}(\hat V_k)$, where $\hat V_k$ are the
unconstrained maximising range covariances of the components. For a mixture this
is a condition on each component, not on the marginal $\Cov(U_\parallel^\top y)$. When
$\|\Pperp y\|^2/(n-r)\le\lambda^\ast$, which is what dominance of the in-span prior
scale means here, setting $\partial\ell_\perp/\partial\sigma=0$ gives the exact
maximiser $\hat\sigma^2=\|\Pperp y\|^2/(n-r)$; otherwise $\ell_\parallel$ also moves
$\hat\sigma$ and the identification is approximate. With $N$ realisations the
log-likelihoods add, so $\hat\sigma^2$ is the average of $\|\Pperp y_i\|^2/(n-r)$.

\emph{Expectation.} Under the true law, $y=H\Phi w^\star+H\xi_\perp+\varepsilon$ with
$\varepsilon\sim\Nor(0,\sigma^2I_n)$ independent of $(w^\star,\xi_\perp)$. The
identity $\Pperp H\Phi=\Pperp G=0$ removes every term involving $\Phi$, so
$\Pperp y=\Pperp H\xi_\perp+\Pperp\varepsilon$ and
\[
  \E\|\Pperp y\|^2
  =\tr\!\big[\Pperp(HK_\perp H^\top+\sigma^2I)\Pperp\big]
  +\big\|\Pperp H\,\E[\xi_\perp]\big\|^2 .
\]
Under the assumption $\Pperp H\,\E[\xi_\perp]=0$ of Section~\ref{sec:setting},
$\|\Pperp y\|^2/(n-r)$ has expectation $\sigma_\perp^2$ by the definition of
$\sigma_\perp^2$. The
identification of the maximiser needs the dominance condition.
\end{proof}

Across $31$ sweep points the measured $\hat\sigma/\sigma$ (of the implemented noise
update, Appendix~\ref{app:absorption}) matched $\sqrt\rho$ to within $3.4\%$ with no trend in the residual, and to $0.5\%$
after a finite-$n$ degrees-of-freedom correction (Appendix~\ref{app:absorption}).
Proposition~\ref{prop:absorption} characterises the exact maximiser of the stated
likelihood. The implemented update maximises $\log p(y\mid\hat w)$ at the
posterior-mean coefficient and normalises by $n$ rather than $n-r$. Its $\hat\sigma$ is
therefore approximately $\sqrt{(n-r)/n}$ times that of the exact maximiser, a factor
that tends to $1$ as $r/n\to0$ (Appendix~\ref{app:absorption}).

\section{A Scale-Free Test on the Complement}\label{sec:sphericity}

\begin{table}[!t]
\caption{Status of the main-text results. \textsc{exact} holds at every finite
$(N,p)$ with no asymptotics and no unstated condition; \textsc{approximation} is
exact in a named regime and approximate outside it, with the measured departure
given; \textsc{empirical} was measured over the stated grid and is not
extrapolated beyond it. Table~\ref{tab:status-full} (Appendix~\ref{app:status}) covers every result
in the paper.}
\label{tab:status}
\begin{center}\small
\begin{tabular}{>{\raggedright\arraybackslash}p{0.30\textwidth}%
                >{\raggedright\arraybackslash}p{0.16\textwidth}%
                >{\raggedright\arraybackslash}p{0.46\textwidth}}
\toprule
result & status & condition it rests on \\
\midrule
Prop.~\ref{prop:sphericity}(a)--(d): scale invariance, the null model, detectability, the blind spot & \textsc{exact} & Gaussian $\varepsilon$. (b) needs $\xi_\perp=0$ and the model mean in $\range(G)$; (c) needs $A$ with $k$ equal nonzero eigenvalues; (d) needs Gaussian $\xi_\perp$ and $A=(\rho-1)I_p$. (c) and (d) are statements about $\operatorname{median}(\Rstd\mid H_0)$, not its mean \\
Prop.~\ref{prop:forced}: given $\|Z\|_F^2$, every exact test on the complement is a test of direction & \textsc{exact} & Gaussian $\varepsilon$; completeness of $\|Z\|_F^2$ and Neyman structure \citep{lehmann2005testing} \\
Calibration of that null by parametric bootstrap & \textsc{empirical} & Monte-Carlo, not exact; $1000$ replicates at each $(N,p)$; pooled false-positive rate $0.0509$ \\
Lem.~\ref{lem:nullmean}: $\E[U\mid H_0]=(p-1)(p+2)/(\nu p+2)$ & \textsc{exact} & Gaussian $\varepsilon$; every $(\nu,p)$, $p>\nu$ included; median $0.041\%$ over $59$ pairs \\
Prop.~\ref{prop:absorption}: $\hat\sigma/\sigma=\sqrt\rho$ & \textsc{approximation} & separation is exact; the maximiser needs the in-span prior scale to dominate the noise. The implemented update normalises by $n$ (App.~\ref{app:absorption}); matched to $3.4\%$, or $0.5\%$ with the $\sqrt{(n-r)/n}$ factor \\
False-positive rate $0.0509$; clause (d) at the nominal rate & \textsc{empirical} & $98$ null cells at $59$ $(N,p)$ pairs, $19\,600$ draws; $98$ isotropic cells, detection $0.045$ \\
\bottomrule
\end{tabular}
\end{center}
\end{table}

Let $U_\perp$ be an orthonormal basis for the complement of $\range(G)$, let
$p:=n-r$, and let $z:=U_\perp^\top y$. Then $\Cov(z)=\sigma^2(I_p+A)$ with
$A:=U_\perp^\top HK_\perp H^\top U_\perp/\sigma^2$ and $\tr(A)/p=\rho-1$. From
$N$ held-out observations let $\lambda_1\ge\dots\ge\lambda_p$ be the sample
covariance eigenvalues and $\bar\lambda$ their mean, and define
\begin{equation}\label{eq:stats}
  U := \tfrac1p\textstyle\sum_{j}\big(\tfrac{\lambda_j}{\bar\lambda}-1\big)^{2},
  \quad
  \hat\rho := \tfrac{\bar\lambda}{\operatorname{median}(\lambda)},
  \quad
  R := \tfrac{\lambda_1}{\operatorname{median}(\lambda)} .
\end{equation}
The test statistic is $U$, John's sphericity statistic \citep{john1971some}.
$\hat\rho$ is an auxiliary estimate of the size of the excess, and $R$ is the
ratio of the leading eigenvalue to the bulk. We standardise $R$ by its null
median,
\begin{equation}\label{eq:Rstd}
  \Rstd := R/R_0(N,p),
  \quad
  R_0(N,p) := \operatorname{median}\big(R \mid \Cov(z)\propto I_p\big),
\end{equation}
with $R_0$ from the same bootstrap that calibrates $U$, so no $\sigma$ is needed.
Appendix~\ref{app:aux} discusses $\hat\rho$, $R$ and $\Rstd$ and gives
detection thresholds in $\Rstd$.

Write $Z\in\R^{N\times p}$ for the stacked complement residuals.

\begin{proposition}[Exact tests are conditionally tests of direction]\label{prop:forced}
Under correct specification the rows of $Z$ are i.i.d.\ $\Nor(0,s^2I_p)$ with
$s=\sigma$ unknown, and $T:=\|Z\|_F^2$ is complete sufficient for $s$.
\textbf{(i)} Every test with level exactly $\alpha$ at every $s>0$ satisfies
$\E[\phi\mid T]=\alpha$, so for almost every value of $T$ it is a level-$\alpha$
test of the direction $Z/\|Z\|_F$, whose null law is free of $s$.
\textbf{(ii)} A magnitude referred to the noise scale fitted on the same
complement is constant: $T/(Np\hat s^2)\equiv1$ for $\hat s^2=T/(Np)$.
\textbf{(iii)} If the out-of-span content leaves the complement law
$\Nor(0,s'^2I_p)$ for some $s'$, no statistic of $Z$ has power above $\alpha$.
\end{proposition}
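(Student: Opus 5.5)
The plan is to treat the null family $\{\Nor(0,s^2I_{Np}) : s>0\}$ for $\mathrm{vec}(Z)$ as a one-parameter exponential family. Its density is $(2\pi s^2)^{-Np/2}\exp(-T/(2s^2))$ with natural parameter $-1/(2s^2)$, which ranges over the open interval $(-\infty,0)$. By the factorisation theorem $T=\|Z\|_F^2$ is sufficient. Because the natural parameter space contains an open set, $T$ is complete \citep{lehmann2005testing}. The rest of the argument uses the polar decomposition $Z=\sqrt{T}\,D$ with $D:=Z/\|Z\|_F$. By rotational invariance of the spherical Gaussian, $D$ is uniform on the unit sphere $S^{Np-1}$ and independent of $T$ for every $s$. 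Almost surely $T>0$, so $D$ is well defined.

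For \textbf{(i)}, let $\phi$ have size exactly $\alpha$ for every $s$, so that $\E_s[\phi(Z)-\alpha]=0$ for all $s>0$. The conditional expectation $\psi(T):=\E[\phi(Z)\mid T]$ does not depend on $s$, by sufficiency. It satisfies $\E_s[\psi(T)-\alpha]=0$ for all $s$, and completeness then gives $\psi(T)=\alpha$ almost surely; this is Neyman structure. Next write $\phi(Z)=\phi(\sqrt{T}D)$ and use independence of $T$ and $D$. For almost every value $t$, the conditional law of $D$ given $T=t$ is the uniform law on the sphere. Hence the section $d\mapsto\phi(\sqrt t\,d)$ is a test on the direction with $\E_{\mathrm{unif}}[\phi(\sqrt t D)]=\alpha$, i.e.\ a level-$\alpha$ test of the direction whose null law (uniform) is free of $s$. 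The one delicate point is the measure-theoretic step from ``$\psi=\alpha$ a.s.'' to a statement about almost every section. I would handle it with a regular conditional distribution obtained from the product structure of $(T,D)$. This is the main obstacle, but it is routine once the independence of $T$ and $D$ is in hand.

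For \textbf{(ii)}, the claim follows by direct substitution: with $\hat s^2=T/(Np)$, which is the complement-only MLE as in Proposition~\ref{prop:absorption}, we get $T/(Np\hat s^2)=1$ identically. Any magnitude statistic referred to this fitted scale therefore carries no information.

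For \textbf{(iii)}, suppose the rows of $Z$ are i.i.d.\ $\Nor(0,s'^2I_p)$. Then $Z$ has exactly a null law, namely the member of the null family with $s=s'$. Any test with level $\alpha$ for every $s$, in particular any exact test from (i), has rejection probability $\E_{s'}[\phi]\le\alpha$. No statistic of $Z$ can distinguish the alternative from the null at scale $s'$. I would also remark that for the direction tests of (i) the rejection probability equals $\alpha$ exactly, because $D$ is uniform whatever the value of $s'$.
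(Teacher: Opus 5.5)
Your proposal is correct and takes the paper's route. Completeness of $T$ in the one-parameter exponential family gives Neyman structure; independence of $T$ and $Z/\|Z\|_F$ (the polar decomposition used in the proof of Lemma~\ref{lem:nullmean}) makes each conditional test a level-$\alpha$ test of direction; (ii) is direct substitution; and (iii) holds because the alternative lies in the null family. You are more explicit than the paper only on the bounded-completeness and Fubini steps.
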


\begin{proof}
(i) $T$ is complete in a one-parameter exponential family, so an exact test has
Neyman structure \citep[Theorem~4.3.2]{lehmann2005testing}: $\E[\phi\mid T]=\alpha$.
$Z/\|Z\|_F$ is independent of $T$ (Appendix~\ref{app:nullmean}), so conditionally
on $T$ the test is a test of the direction, whose null law is free of $s$. An
exact test may still depend on $T$, but only through which test of direction it
runs at each value of $T$. (ii) follows from the definition of $\hat s^2$.
(iii) Such an alternative lies in the null family.
\end{proof}

Reducing the direction further to the shape of the sample
spectrum uses invariance under rotation and scaling: with multiple snapshots the
maximal invariant is the set of sample eigenvalues divided by their trace
\citep{besson2006}.

Only information about $\sigma$ that absorption cannot reach, such as a known
instrument noise level, repeat measurements or noise-only runs, escapes
Proposition~\ref{prop:forced}. Our setting has none. For a single Gaussian with unconstrained coefficient covariance, $\range(G)$ adds only the one-sided bound
$\sigma^2\le\lambda_{\min}\big(\Cov(U_\parallel^\top y)\big)$, which binds only
where the dominance condition of Proposition~\ref{prop:absorption} fails.

\begin{proposition}[Scale-free detection of orthogonal unmodelled energy]\label{prop:sphericity}
\textbf{(a) Scale invariance.} $U$, $\hat\rho$, $R$ and $\Rstd$ are invariant
under $z\mapsto cz$ for every $c>0$, so all four are computable without knowing
or estimating $\sigma$. Any statistic of the form $\sigma_\perp^2/\hat\sigma^2$
degenerates to $1$ under Proposition~\ref{prop:absorption}; a scale-free
functional of the spectrum cannot.

\textbf{(b) Exact null model.} If $\xi_\perp=0$ and the model mean lies in
$\range(G)$ then $z=U_\perp^\top\varepsilon$ exactly, so $A=0$ and
$\Cov(z)=\sigma^2I_p$. Under Gaussian $\varepsilon$ this holds at every finite
$(N,p)$, with no asymptotic approximation and no condition on $p/N$. The
\emph{null model} is therefore exact. The test as implemented calibrates it by
parametric Monte-Carlo bootstrap at the matching $(N,p)$, which carries the usual
Monte-Carlo error and is not exact. One bootstrap delivers the null law of $U$
and the level $R_0(N,p)$ together.

\textbf{(c) Detectability.} If $A$ has $k$ equal nonzero eigenvalues then
$\tr(A)=p(\rho-1)$ forces each to $p(\rho-1)/k$, so the leading eigenvalues of
$\Cov(z)$ sit at $\sigma^2(1+(\rho-1)p/k)$: detectability is governed by
$(\rho-1)p/k$, not by $\rho$ alone. That quantity requires $\sigma$ and a $k$.
Its observable counterpart $\Rstd$ requires neither, and
$\operatorname{median}(\Rstd\mid H_0)=1$ at every finite $(N,p)$ by
construction. The statement for the mean does not follow and is not used: the
test runs on $U$, and $\Rstd$ only reports the detection threshold.

\textbf{(d) Blind spot.} If $\xi_\perp$ is Gaussian and $A=(\rho-1)I_p$, so the
unmodelled energy is white in the complement, then $\Cov(z)=\sigma^2\rho I_p$ is
spherical for every $\rho$. Then $U$, $\hat\rho$ and $\Rstd$ have exactly their
null distributions, so the test has no power at any $\rho$. In particular
$\operatorname{median}(\Rstd\mid A=(\rho-1)I_p)=1$.
\end{proposition}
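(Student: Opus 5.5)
The plan is to prove the four clauses of Proposition~\ref{prop:sphericity} in order, reducing each to elementary facts about the sample covariance $\hat C=\frac1N\sum_i z_iz_i^\top$ (or its centred version) and its eigenvalues $\lambda_1\ge\dots\ge\lambda_p$.

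For (a), the map $z\mapsto cz$ sends $\hat C\mapsto c^2\hat C$, so every $\lambda_j$, $\bar\lambda$ and $\operatorname{median}(\lambda)$ is multiplied by $c^2$. Hence $U$, $\hat\rho$ and $R$, which are ratios of homogeneous degree-one functions of the spectrum, are unchanged. $R_0(N,p)$ depends only on $(N,p)$ because it is the median of the scale-invariant $R$ under spherical Gaussian data, whose law is free of $\sigma$ by this same invariance. Therefore $\Rstd$ is invariant too. The degeneracy of $\sigma_\perp^2/\hat\sigma^2$ I take from Proposition~\ref{prop:absorption} together with Proposition~\ref{prop:forced}(ii): the estimator $\hat\sigma^2=\|\Pperp y\|^2/(n-r)$ is the empirical $\sigma_\perp^2$, so the ratio is identically one.

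For (b), I substitute $\xi_\perp=0$ into $z=U_\perp^\top y$. Since $U_\perp^\top G=0$ and the mean lies in $\range(G)$, this gives $z=U_\perp^\top\varepsilon$. Because $U_\perp$ has orthonormal columns, $z\sim\Nor(0,\sigma^2 U_\perp^\top U_\perp)=\Nor(0,\sigma^2I_p)$ exactly, with the rows independent across realisations. No asymptotics enter. The bootstrap remark then follows from (a): simulating $\Nor(0,I_p)$ at the matching $(N,p)$ reproduces the exact null law of $U$ and of $R$, and hence $R_0$, up to Monte-Carlo error.

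For (c), the step needing most care is the eigenvalue bookkeeping. Let $A$ be positive semidefinite with exactly $k$ nonzero eigenvalues, all equal to $a$. Then $\tr A=ka=p(\rho-1)$ gives $a=p(\rho-1)/k$. The eigenvalues of $\Cov(z)=\sigma^2(I+A)$ are then $\sigma^2(1+a)$ with multiplicity $k$ and $\sigma^2$ otherwise, which gives the stated spike. The clause $\operatorname{median}(\Rstd\mid H_0)=1$ follows because $R_0$ is defined as the median of $R$ under $H_0$ and $R_0>0$, so dividing by it rescales the median to one. I would note that this needs the median to be unique; I would argue that this holds because $R$ has a continuous positive density under the Gaussian null.

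For (d), I use Gaussianity of both $\xi_\perp$ and $\varepsilon$, taken independent. Then $z$ is a centred Gaussian with $\Cov(z)=\sigma^2(I+(\rho-1)I)=\sigma^2\rho I_p$, which is the null law with scale $s=\sigma\sqrt\rho$. By (a), each statistic has the same law as under $s=\sigma$, namely its null law. This is Proposition~\ref{prop:forced}(iii) specialised to these statistics, and the median identity follows from (c). The only subtle point is joint Gaussianity, which requires independence of $\xi_\perp$ and $\varepsilon$; I would state that independence explicitly.
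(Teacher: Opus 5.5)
Your proposal is correct and follows the paper's proof clause by clause. For (a) you use degree-one homogeneity of the spectral functionals, with $R_0$ depending only on $(N,p)$. For (b) you substitute directly $z=U_\perp^\top\varepsilon$ with $U_\perp^\top U_\perp=I_p$. For (c) you share $\tr A=p(\rho-1)$ equally over the $k$ spikes and use the definition of $R_0$ as the null median. For (d) you reduce, by scale invariance, to the spherical null family of Proposition~\ref{prop:forced}(iii).

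One small remark on your uniqueness caveat in (c). The caveat is not needed to show that $1$ is \emph{a} median of $\Rstd$: that already holds whenever $R_0$ is a median of $R$. The positive-density argument you sketch also breaks down once more than half of the sample eigenvalues vanish (roughly $N\le p/2$). There $\operatorname{median}(\lambda)=0$ and $\Rstd$ is undefined, an edge case the statement's ``every finite $(N,p)$'' also glosses over.
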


\begin{proof}
(a) Each statistic is a ratio of spectral functionals of equal degree, so the
factor $c^2$ cancels; $\sigma_\perp^2/\hat\sigma^2$ degenerates by
Proposition~\ref{prop:absorption}. (b) With $\xi_\perp=0$ and the mean in
$\range(G)$, $z=U_\perp^\top\varepsilon$ and $U_\perp$ has orthonormal columns.
(c) $\tr(A)=p(\rho-1)$ shared equally over $k$ eigenvalues gives each
$p(\rho-1)/k$; $\operatorname{median}(\Rstd\mid H_0)=1$ because $R_0$ is defined as
the null median of $R$. (d) Clause (d) is the null family of
Proposition~\ref{prop:forced}(iii): $z$ is Gaussian with a spherical covariance,
so the scale-invariant statistics have their null laws.
\end{proof}

Non-Gaussian energy that is isotropic in covariance lies outside the null family
(Section~\ref{sec:real}). Clause (c) concerns $k$ equal spikes: only then does
$(\rho-1)p/k$ govern detection. Clause (b) makes the null model exact; the test
calibrates that null by bootstrap, which carries Monte-Carlo error.

\begin{lemma}[Exact null mean]\label{lem:nullmean}
Under $H_0$ of Proposition~\ref{prop:sphericity}(b), with $\nu=N-1$ degrees of
freedom,
\begin{equation}\label{eq:nullmean}
  \E[\,U \mid H_0\,] \;=\; \frac{(p-1)(p+2)}{\nu p+2}.
\end{equation}
\end{lemma}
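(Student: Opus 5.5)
Under $H_0$ the sample covariance is $S=W/\nu$ with $W\sim\mathrm{Wishart}_p(\nu,\sigma^2I_p)$, since centring $N$ Gaussian rows leaves $\nu=N-1$ degrees of freedom. All moments below come from this Wishart law. The plan rewrites $U$ as a ratio of spectral functionals,
\[
  U=\frac1p\sum_j\Big(\frac{\lambda_j}{\bar\lambda}-1\Big)^2
  =\frac{p\,\tr(S^2)}{(\tr S)^2}-1 ,
\]
using $\sum_j\lambda_j=p\bar\lambda$. By Proposition~\ref{prop:sphericity}(a) this is scale invariant, so we may set $\sigma=1$. It therefore suffices to compute $\E\big[\tr(W^2)/(\tr W)^2\big]$ exactly.

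The key step handles the ratio by the scale/direction independence already invoked in Proposition~\ref{prop:forced}. Write $W=X^\top X$ with $X\in\R^{\nu\times p}$ having i.i.d.\ $\Nor(0,1)$ entries. Then $\tr W=\|X\|_F^2$, and $\tr(W^2)/(\tr W)^2$ is a function of the direction $X/\|X\|_F$ only. That direction is uniform on the sphere and independent of $\|X\|_F^2\sim\chi^2_{\nu p}$. Hence
\[
  \E[\tr(W^2)]=\E\Big[\frac{\tr(W^2)}{(\tr W)^2}\Big]\,\E[(\tr W)^2],
\]
and the ratio of expectations is exact at every finite $(\nu,p)$. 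No asymptotics enter, and the case $p>\nu$ needs no separate treatment.

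Both moments are standard. First, $\E[(\tr W)^2]=\E[(\chi^2_{\nu p})^2]=\nu p(\nu p+2)$. Second, $\E[\tr(W^2)]=\sum_{i,j}\E[W_{ij}^2]$. For the diagonal terms, $W_{ii}\sim\chi^2_\nu$, so $\E[W_{ii}^2]=\nu(\nu+2)$. For the off-diagonal terms, $\E[W_{ij}^2]=\nu$. This gives $\E\tr(W^2)=p\nu(\nu+2)+p(p-1)\nu=\nu p(\nu+p+1)$. Therefore
\[
  \E[U]=\frac{p(\nu+p+1)}{\nu p+2}-1=\frac{p^2+p-2}{\nu p+2}=\frac{(p-1)(p+2)}{\nu p+2}.
\]

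The main obstacle is the independence claim for the centred sample, which the paper defers to Appendix~\ref{app:nullmean}. The factorisation through $X$ with $\nu$ rows requires an explicit orthogonal (Helmert) transformation. This transformation removes the mean row, leaves $\nu$ i.i.d.\ $\Nor(0,I_p)$ rows, and keeps $S=X^\top X/\nu$. Once that is in place, the remaining steps are routine Wishart moment identities.
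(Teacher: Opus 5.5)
Your proof is correct and follows the paper's argument essentially step for step. The steps are: a Helmert rotation to $\nu$ i.i.d.\ rows; independence of $\tr W=\|X\|_F^2$ from the direction $X/\|X\|_F$, which turns the expected ratio into a ratio of expectations; and the moments $\E[(\tr W)^2]=\nu p(\nu p+2)$ and $\E\tr(W^2)=\nu p(\nu+p+1)$. The only difference is that you compute $\E\tr(W^2)$ entrywise, from $\chi^2_\nu$ diagonal entries and off-diagonal second moments equal to $\nu$, rather than quoting the Wishart identity $\E[W^2]=\nu(\nu+p+1)I_p$. That keeps the argument self-contained and visibly valid for $p>\nu$.
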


The proof is in Appendix~\ref{app:nullmean}. Across $59$ distinct $(N,p)$ pairs
the formula matched the bootstrapped mean to a median relative error of
$0.041\%$. The null law of $U$ moves with $(N,p)$, so the bootstrap is
recalibrated at each pair.

\paragraph{Decision rule.} The test compares $U$ with its bootstrap null at the
matching $(N,p)$ and works at any $p/N$. At $N=5$ and $p=96$, where the sample
covariance has rank $4$, detection was $0.810$ with the false-positive rate at
nominal. A rejection means a detectable departure of the complement from white
Gaussian noise, caused by omitted field energy or by correlated,
heteroskedastic or non-Gaussian noise. It does not by itself show harmful
reconstruction error: on real topography all four cells without harm also fire.
Failure to reject means that no departure was detected; the test is silent about
white Gaussian complement energy and about energy mapped into $\range(G)$.

\section{Synthetic Experiments}\label{sec:experiments}

\begin{figure}[!t]
\centering
\includegraphics[width=\textwidth]{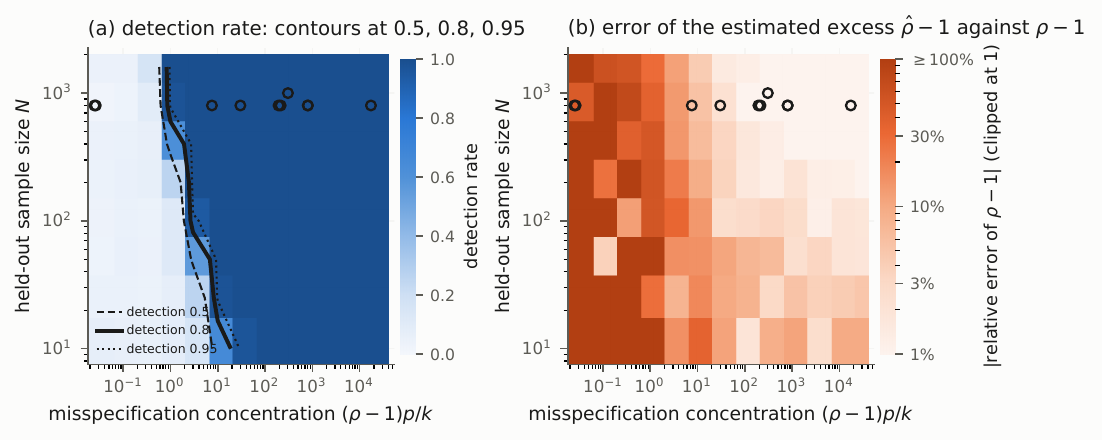}
\caption{Power surface for a single spike at $p=96$, $500$ replicates per cell.
(a) Detection rate, with contours at $0.5$, $0.8$ and $0.95$. (b) Relative error
of the estimated excess $\hat\rho-1$ against $\rho-1$. Simulated cells sit at the population excess $(\rho-1)p/k$. Circles mark the nine
benchmarks and the sweep operating point, placed at their measured excess above the
null, $R-R_0(N,p)$, as a proxy; at the sweep point the two agree ($30.09$ against
$30.72$). The two left of the $0.8$ contour are the correctly specified
benchmarks.}
\label{fig:power}
\end{figure}

Nine benchmarks span integral operators, deconvolution, sensor subsampling and a
2D elliptic PDE, with $p$ from $16$ to $135$. A sweep of $31$ points varies
$\rho$ over a factor of $306$ on one operator family (Appendices~\ref{app:setup}
and~\ref{app:sweep}).

\paragraph{Dissociation and benchmarks.} Over the sweep, field coverage fell from
$0.851$ to $0.331$ while observation coverage stayed near nominal. Pinning $\sigma$ at its true value removed the concealment, not the collapse
(Appendices~\ref{app:coverage} and~\ref{app:sweep}). The test is calibrated
against an exact null. At $\rho=1.32$ it detected at $1.00$, where field
coverage had already fallen to $0.547$. The misspecified benchmarks sit one to four orders of magnitude above the
detection band ($1.17$--$1.52$; Appendix~\ref{app:aux}) and do not stress the test
(Appendix~\ref{app:practice}).

\paragraph{Power.} We drew the complement residual from the law of
Proposition~\ref{prop:sphericity}, specifying only the eigenvalues of $A$; for a
statistic of the sample spectrum this covers every Gaussian alternative. The grid
has $k$ equal spikes for $k$ from $1$ to $p$, polynomial and exponential decay,
and isotropy, with $p$ from $16$ to $250$ and $N$ from $3$ to $3200$: $1047$
cells (Figure~\ref{fig:power}). The false-positive rate over the null cells was $0.0509$ against a
nominal $0.05$ (Appendix~\ref{app:power}). Across $p/N$ from $0.01$ to $10$,
$39$ of $40$ misspecified cells detected at $1.000$, including all six with
$p>N$. The null is calibrated by bootstrap at each $(N,p)$, so a singular sample
covariance does not invalidate the test; the limiting null law of
\citet{liyao2016} supports this asymptotically.

\paragraph{Blind spot.} Over $98$ isotropic cells the detection rate had median
$0.045$, the nominal rate, as clause (d) predicts. At $(N,p)=(800,96)$ and $\rho=1.32$ the test kept power of $0.8$
until the energy was spread over more than $83\%$ of the complement directions.
At $N=100$ that width fell to $50\%$.

\paragraph{Power law.} Theorem~\ref{thm:prop} (Appendix~\ref{app:propthm}) gives
the law of $U$ at proportional $p/\nu$ for Gaussian data and any spectrum of
bounded norm. For $\nu\ge10$ it predicts detection to a mean absolute error of $0.011$, and
$0.008$ with a third-order term.

\section{Real Topography}\label{sec:real}

The simulations fix the spectra. This section measures whether real
misspecification departs from isotropy.

\paragraph{Data and design.} The field is GEBCO 2024 topography
\citep{gebco2024} at $15$ arc-seconds, restricted to directly measured cells and cut into $32\times32$
tiles ($M=1024$). Whole $5^\circ$ blocks go to a basis split or an evaluation
split, with a $32$-cell buffer discarded at every block boundary. The basis is
built from $65$ blocks and $22\,980$ tiles and tested on $23$ blocks and
$8\,108$ tiles. $\Phi$ is the leading-$d$ principal basis of the basis split,
for $d$ in $\{4,8,16,32,64,96\}$. The operators are linearised Bouguer gravity at
$14\times14=196$ stations, its vertical gradient, and pointwise subsampling. With
two noise scales this gives $36$ cells. The design and its admissible outcomes
were preregistered before any data were read; the evaluation set was enlarged after
registration (Appendix~\ref{app:real}).

\paragraph{The complement.} The participation ratio
$\mathrm{PR}=(\sum_j a_j)^2/\sum_j a_j^2$ of the population complement excess
equals $p$ under isotropy and otherwise counts the directions that carry the
excess. Over the $18$ operator-and-$d$ cells
$\mathrm{PR}/p$ runs from $0.036$ to $0.709$, so on this field the
misspecification concentrates. $\mathrm{PR}/p$ rises with $d$ under every
operator. At matched $d$ it is smallest for gravity and largest for pointwise
subsampling, the order in which the operators stop suppressing high wavenumbers.

Per-tile complement energy is also heavy tailed, with a coefficient of variation
of median $2.85$ against $0.108$ for Gaussian noise. Randomising each tile's
complement direction while keeping its norm makes the population covariance
exactly isotropic, which clause~(d) would make undetectable if it were Gaussian.
The test still detects at $0.95$ or more at $N=100$ in $35$ of $36$ cells. A
Gaussian with the same covariance keeps only the anisotropy, and the test detects
at $0.95$ or more in $34$ of $36$. The test does not say which departure it saw.

\paragraph{Detection, harm and competitors.} Absorption reproduces. On one half of every block, $\hat\sigma$ is the closed-form
complement estimate, which equals the maximum-likelihood value because the
range-block bound never binds (Appendix~\ref{app:real}); against $\rho$ from the
other half it gives a median ratio of $1.009$ over the $36$ cells, range $0.998$ to
$1.016$. The
agreement weakens with geographic separation, tracking the $15.7\%$ by which the
excess is larger where the basis was not trained (Appendix~\ref{app:real}).
Observation coverage stays in $0.924$--$0.978$ over all $36$ cells while field
coverage runs from $0.583$ to $0.954$, and $\hat\sigma/\sigma$ reaches $185$.

A cell is harmed if field coverage falls below $0.85$ or field RMSE exceeds
$1.2$ times the in-span floor. Of the $32$ harmed cells, $31$ are detected at
$N=100$ and all $32$ at $N\ge800$. The single miss is gravity at $d=96$, where
the population excess is $4\%$. The four unharmed cells also fire; all four have $\rho>1$, so these are not false
positives, and specificity is not estimable on these data. $\Upop$, the statistic $U$ evaluated on the population spectrum, is an exact
function of $\rho$ and $\mathrm{PR}$
(Proposition~\ref{prop:fielderror}). Computed from the basis split with the
nominal $\sigma$, it gives the right fire/no-fire call in $35$ of $36$ cells.

Cross-validated predictive coverage and held-out observation coverage fire on
$8$ and $5$ of the $32$ harmed cells. A per-index calibration check fires
everywhere until it is given a null, and then fires nowhere. These three read
only the diagonal of the complement covariance. John's dispersion $\Upop$ computed
from that diagonal instead of the eigenvalues is, over the $36$ cells, a
median $128$ times smaller than on the eigenvalues. A
full-covariance goodness-of-fit test with $\Sigma_y$ fixed from training fires
in all $36$ cells, including the unharmed ones, but it is not scale-free.
Inflating the field variance by $15\%$, with the shape exactly correct, makes it
reject on $99\%$ of draws (Appendix~\ref{app:real}).

Although neighbouring tiles are dependent, the test stays calibrated on them. Its
rejection rate under an exact null built from the real tiles, with the out-of-span
content removed so that the complement is $U_\perp^\top\varepsilon$, gives
a false-positive rate of median $0.040$ against a nominal $0.05$ over $18$
distinct measurements (six thinning levels, three complement dimensions). Leakage into $\range(G)$ grows with $d$
(Appendix~\ref{app:fielderror}), and the range-block bound never binds
(Appendix~\ref{app:real}).

\section{Related Work}\label{sec:related}

\paragraph{Credible sets in inverse problems.} \citet{knapik2011bayesian} showed
for linear inverse problems with Gaussian priors that coverage depends jointly on
prior and truth regularity, with asymmetric failure modes.
\citet{szabo2015frequentist} studied adaptive constructions. Reduced-rank and
constrained Gaussian process priors
\citep{solin2020hilbert,jidling2017linearly,spantini2015optimal} supply the
restriction we study. Predictive checks in observation space
\citep{gelman1996posterior} are the diagnostics the mechanism defeats.

\paragraph{Variance absorption under misspecification.}
\citet{bachoc2013cross} analysed maximum-likelihood estimation of Gaussian
process hyperparameters under a misspecified covariance, and
\citet{bachoc2018asymptotic} gave an asymptotic treatment. That a variance
parameter absorbs structure the covariance cannot represent is therefore known
for Gaussian process regression. Here the absorbing parameter lives in
observation space and the failure it conceals lives in field space; in direct
regression the two coincide. Pinning $\sigma$ shows that absorption is the
channel between the two. We also give a rate where the existing results are
qualitative or asymptotic. The likelihood separation is exact. The
identification $\hat\sigma/\sigma\approx\sqrt\rho$ is an approximation that
needs the in-span prior scale to dominate the noise (Table~\ref{tab:status}).
With a degrees-of-freedom correction it matched to $0.5\%$ across two and a half
decades, with no fitted constant.

\paragraph{Lack-of-fit and residual diagnostics.} Classical lack-of-fit tests
compare residual variation with pure error from repeated measurements of the
same unit, or from clusters of near replicates \citep{christensen1989}. Either
source supplies a $\sigma$ the fit cannot touch; our realisations, different
fields through one design, supply none. Residual-pattern checks for inverse
problems also need $\sigma$, including a recent one motivated by the same
attenuation of structured model error \citep{kazlauskaite2026}. These checks
need an external $\sigma$ or read residual magnitude, and absorption defeats
both here: referred to a same-sample $\hat\sigma$, any magnitude check on the
complement is constant (Proposition~\ref{prop:forced}(ii)).

\paragraph{Sphericity and invariant detection.} $U$ is John's statistic
\citep{john1971some}, with null and spiked-alternative theory from
\citet{ledoitwolf2002}, \citet{wangyao2013}, \citet{liyao2016} and
\citet{onatski2013}. Invariant detectors project out a known interference subspace and test
scale-invariantly \citep{scharf1994}; with multiple snapshots the maximal
invariant is our spectrum shape \citep{besson2006}.

\paragraph{Error contrasts and prior--data conflict.} The complement residual is
a vector of error contrasts \citep{theil1965,patterson1971}. Checking the
sampling model through the law of the data given a minimal sufficient statistic
is the principle of \citet{evans2006}.

\paragraph{What this paper adds.} Proposition~\ref{prop:forced} reaches the
direction $Z/\|Z\|_F$ through completeness rather than invariance; invariance
then reduces it to the spectrum shape. The absorption law explains why the
diagnostics practitioners use fail. The interference subspace is the model's own range,
$\range(H\Phi)$, fixed by the forward operator and the basis. Theorem~\ref{thm:prop} gives a power law for general spectra at
proportional $p/\nu$ for Gaussian data, which neither \citet{wangyao2013} nor
\citet{liyao2016} covers. We did not find the construction applied to checking a
basis-restricted prior.

\section{Discussion and Limitations}\label{sec:discussion}

On our nine benchmarks, the rank-$d$ restriction beat a full-rank prior on the
two with $\rho\approx1$, by $8.2\times$ and $3.9\times$ in field RMSE, and lost on the seven
out-of-basis problems, as the oversmoothing dichotomy predicts. We propose no
threshold on $\rho$ (Appendix~\ref{app:scoping}).

The test cannot see error inside the basis. On the Bimodal benchmark our model
and Gaussian process regression with the same kernel both sit well above the
in-span floor $\|a-P_{\spn(\Phi)}a\|_{\mathrm{rms}}$, the best error any
basis-restricted estimator could achieve. That excess lies inside $\spn(\Phi)$,
where no statistic on the complement can reach it: reducing complement anisotropy
does not remove in-span estimation error (Appendix~\ref{app:negative}).

The real-data evidence is one field, one basis construction and three operators.
A field whose out-of-basis content is Gaussian and unstructured would land in
the blind spot of clause~(d). The real-data study does not rule out this case; the one registered field we
examined was not such a field: its complement excess is anisotropic and its
per-tile complement energy is heavy tailed (Section~\ref{sec:real}). Specificity is not estimable on those data, because every
cell is misspecified to some degree, so the false-positive side of the claim
rests on the exact null. The power study measures power against the Gaussian family of
Proposition~\ref{prop:sphericity}, not the law of real misspecification. Non-Gaussian
out-of-span signal is covered by the real-data arms. Non-Gaussian noise
$\varepsilon$ is not, since it would change the null as well as the alternative.

Everything rests on $\Pperp H\Phi=0$, which requires a linear and known $H$.
Extension to nonlinear operators is open. The power law is proved for Gaussian
data only; non-Gaussian behaviour is measured, not derived.

\subsubsection*{AI Use Statement}

Generative AI tools were used during preparation of this manuscript for conceptual discussion and development, mathematical proof checking, literature-search assistance, drafting, editing, and organization. The authors reviewed and verified all AI-assisted material used in the manuscript, including mathematical arguments, citations, and interpretations. Generative AI was not used to generate experimental data or to determine the reported experimental results. The authors take full responsibility for the final content of the submission, including all mathematical statements, experimental results, citations, and interpretations. Generative AI systems are not authors of this work.

\subsubsection*{Reproducibility Statement}

All claims are supported by committed artifacts. The data-generating process,
model, training procedure and evaluation protocol are specified in
Appendix~\ref{app:setup}; all training uses a convergence criterion rather than a
fixed budget, and every reported metric uses a held-out split. Every figure and
table is produced by a released script from a stored JSON artifact, and a
consistency checker pins each quantitative claim in this paper to the artifact it
came from and fails on any disagreement. The checker reports a missing artifact
as pending rather than passing it over, so a partial checkout is visibly
incomplete rather than quietly green. Seeds are fixed and recorded per run.
Code and artifacts are provided in the supplementary material; the data are
regenerated by two scripts in the release rather than shipped.

\bibliography{references}
\bibliographystyle{tmlr}

\appendix
\section*{Appendix}

The appendices follow the order of the main text. Each opens with the claim it
supports.

\section{The Absorption Law in Detail}\label{app:absorption}

This appendix supports Proposition~\ref{prop:absorption} and its validation in
Section~\ref{sec:theory}. The law holds across the sweep, and its small residual
is a finite-$n$ degrees-of-freedom effect.

\begin{table}[htbp]
\caption{Absorption across the sweep: measured $\hat\sigma/\sigma$ against
$\sqrt\rho$, with no fitted constant.}
\label{tab:absorption-app}
\begin{center}\small
\begin{tabular}{rrrr}
\toprule
$\rho$ & $\sqrt{\rho}$ & $\hat\sigma/\sigma$ & rel.\ error \\
\midrule
1.00 & 1.00 & 0.98 & 1.7\% \\
4.36 & 2.09 & 2.04 & 2.0\% \\
8.33 & 2.89 & 2.81 & 2.6\% \\
15.48 & 3.93 & 3.82 & 3.0\% \\
27.97 & 5.29 & 5.13 & 3.1\% \\
49.05 & 7.00 & 6.78 & 3.2\% \\
116.00 & 10.77 & 10.42 & 3.3\% \\
306.01 & 17.49 & 16.99 & 2.9\% \\
\bottomrule
\end{tabular}

\end{center}
\end{table}

\begin{remark}[Where the in-span variance goes]\label{rem:breakdown}
The measured $\hat\sigma/\sigma$ matches
$\sqrt{\rho}$ to within $3.4\%$ at every one of the 31 points, with no trend in
the residual (Table~\ref{tab:absorption-app}). All runs use a convergence
criterion on the marginal likelihood. The EM optimiser needs $470$ epochs to
converge at $\rho=306$ against $63$ at $\rho=1$, and a fixed epoch budget tuned on
the well-specified case stops $\hat\sigma$ $65\%$ short of its maximiser at
$\rho=306$. Tightening the convergence criterion by a further factor of ten moves
the residual only from $3.0\%$ to $2.6\%$.

The residual is a finite-$n$ effect. Holding $\rho$ fixed and varying
$n\in\{25,50,100,200\}$, the measured shortfall is
$-7.7\%,-3.7\%,-2.4\%,-1.3\%$, tracking $-r/(2n)$ at every $n$, which is
$\sqrt{(n-r)/n}-1$ to first order. Section~\ref{sec:setting} normalises
$\sigma_\perp^2$ by the complement dimension $n-r$. The implemented noise update is
not the exact maximiser of Proposition~\ref{prop:absorption}: it maximises
$\log p(y\mid\hat w)$ at the posterior-mean coefficient $\hat w$, jointly with the
residual kernel, and so omits the coefficient uncertainty
$\tr(G\Cov(w\mid y)G^\top)$. The range-block residual $y-G\hat w$ is then small, and
the update normalises the total residual by $n$, giving
$\hat\sigma^2\approx\|y-G\hat w\|^2/n\approx\|\Pperp y\|^2/n$. The two normalisations
differ by exactly $\sqrt{(n-r)/n}$. Rescaling accordingly reduces the error from
$3.8\%$ to $\mathbf{0.6\%}$ across all four values of $n$, and from $2.5\%$ to
$\mathbf{0.5\%}$ on the main $n=100$ sweep. The law is then invariant in $n$, as
its statement requires. Table~\ref{tab:absorption-app} reports the uncorrected
numbers.
\end{remark}

\begin{remark}[First-order departure for a fixed in-span covariance]\label{rem:firstorder}
The proof of Proposition~\ref{prop:absorption} gives the maximiser exactly when the
in-span covariances are fitted freely and the range-block bound does not bind. For
a single Gaussian component with a \emph{fixed} in-span covariance, let $b_j$ be
the eigenvalues of $B=U_\parallel^\top GSG^\top U_\parallel$, let $\hat s_j$ be the
average squared range residual along the $j$th eigenvector, and let
$s^\ast=\|\Pperp y\|^2/(n-r)$. Linearising the score around $s^\ast$ gives, to first
order,
\[
  \frac{\hat\sigma^2-s^\ast}{s^\ast}
  =-\frac{s^\ast}{n-r}\sum_{j=1}^r\frac{b_j+s^\ast-\hat s_j}{(b_j+s^\ast)^2},
  \qquad
  \Big|\frac{\hat\sigma^2-s^\ast}{s^\ast}\Big|
  \le\frac{r\,\max_j|b_j+s^\ast-\hat s_j|}{(n-r)\,\lambda_{\min}(B)} .
\]
When the misfit of the range block is at most of the order of the absorbed scale,
the departure is of order $r\hat\sigma^2/\{(n-r)\lambda_{\min}(B)\}$. This is a
first-order statement, not a bound on the exact maximiser.
\end{remark}

\section{Field Coverage in Two Terms}\label{app:coverage}

This appendix separates the basis restriction from absorption in field coverage
(Sections~\ref{sec:experiments} and~\ref{sec:discussion}). Coverage factors into
a basis-restriction term, which survives pinning $\sigma$, and an absorption
term.

\begin{proposition}[Two-term coverage]\label{prop:coverage}
[\textup{\textsc{conditional approximation}}: the likelihood-dominated regime, as stated.]
Assume that $G$ has full column rank ($r=d$) and that the cross-covariance
$\Cov\big(\varphi(t)^\top(\hat w-w^\star),\,\xi_\perp(t)\big)$ is negligible.
In the likelihood-dominated regime ($\sigma$ small, $n\gg d$), the posterior
coefficient covariance is $\Cov(w\mid y)\approx\hat\sigma^2(G^\top G)^{-1}$ and
the posterior mean is $\hat w\approx (G^\top G)^{-1}G^\top y$, independent of the
scalar noise level. The pointwise credible band variance at $t$ is then
$\hat v(t)=\hat\sigma^2 c(t)$ with
$c(t)=\varphi(t)^\top(G^\top G)^{-1}\varphi(t)$, while the true error
$e(t)=\varphi(t)^\top(\hat w-w^\star)+\xi_\perp(t)$ has variance
$\sigma_\parallel^2 c(t)+\tau^2(t)$, $\tau^2(t)=\E[\xi_\perp(t)^2]$. Hence the
nominal-$(1-\alpha)$ pointwise coverage is
\begin{equation}
  \mathrm{cov}(t)
  = 2\,\PhiN\!\left(q_{1-\alpha/2}\sqrt{\frac{\kappa}{1+\gamma(t)}}\right)-1,
  \qquad
  \kappa=\frac{\hat\sigma^2}{\sigma_\parallel^2},
  \qquad
  \gamma(t)=\frac{\tau^2(t)}{\sigma_\parallel^2 c(t)} .
\end{equation}
\end{proposition}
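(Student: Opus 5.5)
The plan is to treat the statement as a conditional approximation and carry out three computations in order. First I compute the pointwise band variance. Then I compute the variance of the true error at the same point $t$. Finally I take the ratio, which gives the coverage formula. Throughout I work in the likelihood-dominated regime, using Proposition~\ref{prop:absorption} for $\hat\sigma$ and the full-rank assumption $r=d$, so that $(G^\top G)^{-1}$ exists and $\Ppar=G(G^\top G)^{-1}G^\top$.

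\emph{Band variance.} For small $\sigma$ and $n\gg d$ the prior precision is negligible against $G^\top G/\hat\sigma^2$. The Gaussian (conditional on the label) posterior then reduces to least squares: $\hat w\approx(G^\top G)^{-1}G^\top y$ and $\Cov(w\mid y)\approx\hat\sigma^2(G^\top G)^{-1}$. Writing the field at $t$ as $\varphi(t)^\top w$, where $\varphi(t)^\top$ is the $t$-th row of $\Phi$, the band variance is $\hat v(t)=\hat\sigma^2\varphi(t)^\top(G^\top G)^{-1}\varphi(t)=\hat\sigma^2c(t)$. The mean $\hat w$ does not involve $\hat\sigma$, which is the point of the remark that the posterior mean is independent of the scalar noise level.

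\emph{True error variance.} Substitute $y=Gw^\star+H\xi_\perp+\varepsilon$. This gives $\hat w-w^\star=(G^\top G)^{-1}G^\top(H\xi_\perp+\varepsilon)$. The vector $\eta:=H\xi_\perp+\varepsilon$ has covariance $\Sigma=HK_\perp H^\top+\sigma^2I$, so $\Var(\varphi^\top(\hat w-w^\star))=\varphi^\top(G^\top G)^{-1}G^\top\Sigma G(G^\top G)^{-1}\varphi$.

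To reduce this to $\sigma_\parallel^2c(t)$, I replace $U_\parallel^\top\Sigma U_\parallel$ by its average eigenvalue times the identity. Note that $G^\top\Sigma G=G^\top\Ppar\Sigma\Ppar G$, and the average eigenvalue of $U_\parallel^\top\Sigma U_\parallel$ is exactly $\sigma_\parallel^2=\tfrac1r\tr[\Ppar\Sigma\Ppar]$. This isotropy replacement on $\range(G)$ is the main obstacle. It holds exactly when $\Ppar\Sigma\Ppar=\sigma_\parallel^2\Ppar$ and only approximately otherwise. I would state it explicitly as part of the conditional approximation, noting that the direction-averaged band has exactly this variance, rather than claim it in general.

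Next, the error $e(t)$ adds $\xi_\perp(t)$, which has variance $\tau^2(t)$ under the mean-zero assumption of Section~\ref{sec:setting}. The assumed negligible cross-covariance removes the cross term. Together these give $\Var e(t)=\sigma_\parallel^2c(t)+\tau^2(t)$.

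\emph{Coverage.} Treat $e(t)$ as approximately centred Gaussian; this is exact when $\xi_\perp$ and $\varepsilon$ are Gaussian and $\hat w$ is linear in $y$. The band $\pm q_{1-\alpha/2}\sqrt{\hat v(t)}$ then covers with probability $2\PhiN\bigl(q_{1-\alpha/2}\sqrt{\hat v(t)/\Var e(t)}\bigr)-1$. Dividing numerator and denominator by $\sigma_\parallel^2c(t)$ gives $\hat v/\Var e=\kappa/(1+\gamma(t))$, which is the stated formula.

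Finally, I would note the two regimes the factorisation exposes. Pinning $\hat\sigma=\sigma$ leaves the $\gamma$ term, which is the basis-restriction term. Absorption, through Proposition~\ref{prop:absorption}, gives $\hat\sigma^2\approx\sigma_\perp^2$ and so inflates $\kappa$, which is the absorption term.
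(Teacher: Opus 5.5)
Your proposal is correct and follows the same route as the paper. The paper's argument is the derivation compressed into the statement itself: in the likelihood-dominated regime the posterior reduces to least squares, giving $\hat v(t)=\hat\sigma^2c(t)$; dropping the cross term leaves error variance $\sigma_\parallel^2c(t)+\tau^2(t)$; and the coverage of $\pm q_{1-\alpha/2}\sqrt{\hat v(t)}$ is then $2\PhiN\bigl(q_{1-\alpha/2}\sqrt{\kappa/(1+\gamma(t))}\bigr)-1$.

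You are more careful than the paper on two steps it leaves unstated. First, $\varphi^\top(G^\top G)^{-1}G^\top\Sigma G(G^\top G)^{-1}\varphi=\sigma_\parallel^2c(t)$ requires $U_\parallel^\top\Sigma U_\parallel\approx\sigma_\parallel^2I_r$, which holds exactly only on average over directions. Second, $e(t)$ must be centred Gaussian. Folding both into the conditional approximation is the right call, because the in-span variance term is where the paper's later remark on the residual offset locates the measured discrepancy.
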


$\gamma$ is the \textbf{basis-restriction term}: it is a property of the truth
and the geometry and survives pinning $\sigma$ at its true value. $\kappa$ is the
\textbf{absorption term}: pinning gives $\kappa_{\text{pinned}}=\sigma^2/\sigma_\parallel^2$.

\begin{corollary}[Free/pinned identity]\label{cor:identity}
[\textup{\textsc{conditional approximation}}, inherited from Proposition~\ref{prop:coverage}.]
Write $\Psi(c):=\left[\PhiN^{-1}\!\left(\tfrac{1+c}{2}\right)\right]^2$. Since
$\gamma$ is identical in both arms,
\begin{equation}
  \boxed{\;\frac{\Psi(\mathrm{cov}_{\text{free}})}{\Psi(\mathrm{cov}_{\text{pinned}})}
  = \frac{\hat\sigma^2}{\sigma^2}\;}
\end{equation}
with nothing fitted.
\end{corollary}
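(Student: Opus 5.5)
The identity follows by inverting the two-term coverage formula of Proposition~\ref{prop:coverage} in each arm and dividing. The only substantive point is that $\gamma(t)$ and $\sigma_\parallel^2$ are common to the two arms.

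\emph{Inverting the coverage map.} Write the formula of Proposition~\ref{prop:coverage} as $\mathrm{cov}=2\PhiN(x)-1$ with $x=q_{1-\alpha/2}\sqrt{\kappa/(1+\gamma)}\ge 0$. Since $\PhiN$ is strictly increasing, $x=\PhiN^{-1}\big((1+\mathrm{cov})/2\big)$. Squaring gives
\[
\Psi(\mathrm{cov})=q_{1-\alpha/2}^2\,\frac{\kappa}{1+\gamma(t)} .
\]
Squaring loses nothing, because $x\ge0$.

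\emph{Comparing arms.} In the free arm $\kappa_{\text{free}}=\hat\sigma^2/\sigma_\parallel^2$. In the pinned arm $\kappa_{\text{pinned}}=\sigma^2/\sigma_\parallel^2$, by the remark after the proposition. We must check that $\gamma(t)=\tau^2(t)/(\sigma_\parallel^2c(t))$ is the same in both arms:
\begin{itemize}
\item $\tau^2(t)$ depends only on the truth $\xi_\perp$.
\item $c(t)=\varphi(t)^\top(G^\top G)^{-1}\varphi(t)$ depends only on $H$ and $\Phi$.
\item $\sigma_\parallel^2$ is defined from the true law, not the fitted scale.
\item In the likelihood-dominated regime, $\hat w\approx(G^\top G)^{-1}G^\top y$ does not depend on the scalar noise level. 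So the true error $e(t)$ is the same random variable in both arms, and pinning changes only the band width.
\end{itemize}
Dividing the two expressions for $\Psi$ then cancels $q_{1-\alpha/2}^2$, $1+\gamma(t)$ and $\sigma_\parallel^2$, which leaves $\hat\sigma^2/\sigma^2$.

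\emph{Main obstacle.} The delicate step is justifying that $\gamma$ is identical across arms. This needs the regime assumptions of Proposition~\ref{prop:coverage}, namely a negligible cross-covariance and a posterior mean free of $\sigma$. The corollary therefore inherits exactly the \textsc{conditional approximation} status, and I would state that explicitly rather than claim the identity holds exactly.

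I would also note that the identity is pointwise in $t$. It holds whenever $0<\mathrm{cov}_{\text{pinned}}<1$, so that $\Psi$ is finite and nonzero.
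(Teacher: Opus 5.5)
Your proposal is correct and follows the paper's argument. The paper likewise inverts the coverage formula to $\Psi(\mathrm{cov})=q_{1-\alpha/2}^2\,\kappa/(1+\gamma(t))$ (the form it later uses in Remark~\ref{rem:offset}), cancels the arm-independent $\gamma$, and is left with $\kappa_{\text{free}}/\kappa_{\text{pinned}}=\hat\sigma^2/\sigma^2$, inheriting the conditional-approximation status from Proposition~\ref{prop:coverage} exactly as you state.
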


\begin{table}[t]
\centering
\begin{tabular}{rrrr}
\toprule
$\rho$ & $\hat\sigma^2/\sigma^2$ & measured $\Psi$ ratio & error \\
\midrule
1.00 & 0.97 & 0.97 & $+0\%$ \\
1.32 & 1.27 & 1.23 & $-3\%$ \\
2.18 & 2.10 & 1.99 & $-5\%$ \\
4.36 & 4.18 & 3.89 & $-7\%$ \\
10.28 & 9.72 & 8.85 & $-9\%$ \\
33.85 & 31.72 & 27.72 & $-13\%$ \\
58.73 & 55.05 & 47.14 & $-14\%$ \\
136.09 & 127.58 & 102.65 & $-20\%$ \\
306.01 & 288.53 & 206.31 & $-28\%$ \\
\bottomrule
\end{tabular}

\caption{Corollary~\ref{cor:identity} against the sweep, trained to convergence,
with nothing fitted.}
\label{tab:identity}
\end{table}

\begin{remark}[Where the two-term expression stops applying]\label{rem:regime}
Proposition~\ref{prop:coverage} needs $\hat\sigma$ small relative to the prior
scale. Absorption breaks this: $\hat\sigma$ grows as $\sqrt\rho$, reaching
$17\times$ nominal at $\rho=306$. The error of Corollary~\ref{cor:identity} is
below $9\%$ for $\rho\lesssim 10$ and grows with $\hat\sigma/\sigma$, reaching
$-28\%$ at $\rho=306$ (Table~\ref{tab:identity}).

As a prediction of absolute coverage, the pinned-$\sigma$ arm, where the regime
holds, is matched to within $0.0003$ at $\rho=306$. At $\rho=1$ it is
over-predicted by $0.037$, the same as the free arm's $0.038$, so the in-basis
gap is not a regime effect. The median error, $0.003$, hides this shape. The
free-$\sigma$ arm is over-predicted by $0.03$ to $0.057$, growing with
$\hat\sigma$. The basis-restriction term $\gamma$ is therefore matched, and the
discrepancy lies in the absorption term $\kappa$.

The exact coefficient posterior
$C=(G^\top G/\hat\sigma^2+\Sigma_{\text{prior}}^{-1})^{-1}$ changes only the free
arm. The band variance ratio $\varphi^\top C\varphi/\hat\sigma^2 c(t)$ runs from
$0.998$ at $\rho=1$ to $0.840$ at $\rho=306$ with $\sigma$ free, and stays at
$0.999$ with $\sigma$ pinned, where no prediction moves by more than
$4\times10^{-4}$. It removes the growth of the free-arm error, which rises from
$+0.041$ to $+0.067$ under the approximation and is flat at $+0.032$ to $+0.036$
with the exact covariance. The remaining constant offset is not in the band
(Remark~\ref{rem:offset}).
\end{remark}

\begin{remark}[Where the residual offset lives]\label{rem:offset}
The offset comes from in-span leakage of $\xi_\perp$. Splitting $e=P e-\xi_\perp$
on held-out cases, the shrinkage bias contributes a share of $0.00$ of the gap,
the cross-term $-2\E[e_\parallel\xi_\perp]$, which Proposition~\ref{prop:coverage}
assumes negligible, contributes $0.00$, $\tau^2$ is recovered correctly, and excess in-span
\emph{variance} accounts for $1.00$. The model $y=Gw^\star+H\xi_\perp+\varepsilon$
leaks out-of-basis field content into $\range(G)$ and inflates the estimation of
$w$, while $\sigma_\parallel^2 c(t)$ is constant in $\rho$. The in-span second
moment grows from $3.5\times10^{-4}$ to $5.6$ across the sweep against a
predicted $2.98\times10^{-4}$ throughout. On the free-$\sigma$ arm at $\rho\ge2$
the measured error variance exceeds the proposition's error model by a median
factor of $7.3$.

On the free arm the offset is a constant factor on $\hat v/\Var(e)$. Because
$\Psi(\mathrm{cov})=q_{1-\alpha/2}^2\,\hat v/\Var(e)$, the ratio
$\Psi(\text{predicted})/\Psi(\text{measured})$ is the factor by which the error
variance is understated. Across the 31 points on the free-$\sigma$ arm it is
$1.23$ with a coefficient of variation of $0.033$ ($1.22$, CV $0.016$, for
$\rho\ge 2$), whereas the coverage offset itself has CV $0.264$ and the $\Psi$
\emph{difference} CV $1.17$. On the pinned arm the ratio is $1.14$ and decays from
$1.23$ at $\rho=1$ to $1.05$ at $\rho=306$. It therefore does not cancel in the
Corollary's free/pinned ratio, and the residual, $1.26/1.05=1.20$ at $\rho=306$,
is the Corollary's error at the far end of Table~\ref{tab:identity}.

The two shortfalls largely cancel in the coverage ratio. The $\Psi$ ratio
factorises as
$(\hat v_{\mathrm{B}}/v_{\mathrm{actual}})\times(\Var_{\mathrm{actual}}/\Var_{\mathrm{B}})$.
The second factor is the $7.3$ above. The first is $\approx 0.17$ for $\rho\ge2$,
so $0.17\times7.3\approx1.23$, because $\hat\sigma^2c(t)$ is also well below the
model's actual band, which carries the GP residual term as well as the
coefficient posterior. Both shortfalls omit the same leakage-induced in-span
variance, so it enters numerator and denominator in the same direction and
their ratio is far less sensitive to it than either term.

Proposition~\ref{prop:coverage} therefore predicts the coverage curve across two
and a half decades of misspecification up to a constant factor of $1.22$ on the
ratio $\hat v/\Var(e)$, whose source is the omitted in-span leakage of the
out-of-basis field.
\end{remark}

\begin{remark}[Why the collapse begins so early]\label{rem:early}
$\sigma_\parallel^2 c(t)$ is the estimation variance of $d$ coefficients from
$n$ observations at small $\sigma$, so the band is tight because the in-span
problem is easy. $\tau^2$ need not be large to swamp it: at $\rho=1.32$ we
measure $\gamma\approx 4$ already. This is why field coverage falls from $0.851$
to $0.547$ at only $32\%$ excess out-of-basis variance.
\end{remark}

An $n$-sweep ($n\in\{25,50,100,200\}$, 7 values of $\rho$, 5 seeds, both $\sigma$
arms) supports this band-width mechanism. Field coverage at matched $\rho$ is
monotone in $n$ at every point, and the $\rho$ at which coverage falls below any
given level moves monotonically to larger $\rho$ as $n$ falls. The same sweep
gives the finite-$n$ effect of Remark~\ref{rem:breakdown}.
Figure~\ref{fig:sweep}(b) overlays the absolute coverage prediction for both
$\sigma$ arms.

\section{The Misspecification Sweep}\label{app:sweep}

This appendix gives the sweep behind Section~\ref{sec:experiments}: observation
coverage barely moves while field coverage collapses, and pinning $\sigma$
separates the two terms of Appendix~\ref{app:coverage}. We sweep the generator's
residual amplitude over 31 points log-spaced in $\rho$ from $1.00$ to $306.0$,
holding the Bimodal configuration otherwise fixed, with 10 seeds and two arms per
point ($\sigma$ free, $\sigma$ pinned at truth). Here $\rho$ is computed with the
\emph{true} $\sigma$, which a practitioner does not have.

\begin{table}[t]
\centering
\small
\begin{tabular}{rrrrrr}
\toprule
$\rho$ & obs cov$_{90}$ & miscal. & field cov$_{90}$ (free) & field cov$_{90}$ (pinned) & $\hat\sigma/\sigma$ \\
\midrule
1.00 & 0.897 & 0.018 & 0.851 & 0.857 & 0.98 \\
1.32 & 0.890 & 0.018 & 0.547 & 0.501 & 1.13 \\
2.18 & 0.886 & 0.019 & 0.448 & 0.327 & 1.45 \\
4.36 & 0.883 & 0.018 & 0.401 & 0.210 & 2.04 \\
10.28 & 0.878 & 0.016 & 0.377 & 0.131 & 3.12 \\
18.91 & 0.878 & 0.015 & 0.367 & 0.096 & 4.21 \\
33.85 & 0.879 & 0.017 & 0.361 & 0.071 & 5.63 \\
58.73 & 0.879 & 0.020 & 0.355 & 0.054 & 7.42 \\
98.45 & 0.880 & 0.024 & 0.349 & 0.041 & 9.60 \\
136.09 & 0.881 & 0.026 & 0.344 & 0.035 & 11.29 \\
201.02 & 0.881 & 0.029 & 0.337 & 0.029 & 13.74 \\
306.01 & 0.881 & 0.031 & 0.331 & 0.024 & 16.99 \\
\bottomrule
\end{tabular}

\caption{Twelve of 31 sweep points, evaluated on held-out data. Nominal
coverage $0.90$. Across the sweep the widest seed band is $\pm 0.020$ on field
coverage and $\pm 0.068$ on $\hat\sigma/\sigma$.}
\label{tab:sweep}
\end{table}

\paragraph{Observation coverage.}
Observation coverage stays within $0.878$--$0.897$, a range of $0.020$, and is not
monotone: it reaches its minimum at $\rho\approx19$ and recovers slightly. The
miscalibration area behaves the same way, with $0.0185$ at $\rho=1$, a minimum of
$0.0146$ near $\rho=15$, and $0.0309$ at $\rho=306$. Over the same range
field coverage falls by $0.519$, $26$ times more. Observation coverage first drops
below $0.88$ at $\rho=8.3$, where field coverage has already fallen to $0.38$, and
it never falls below $0.875$. When field coverage has halved, at $\rho=3.5$,
observation coverage has moved by $0.014$. A practitioner sees one point on this
curve, and a held-out observation coverage of $0.881$ against a nominal $0.90$ is
indistinguishable from mild under-coverage with benign causes. The sphericity
null, by contrast, is fixed by $(N,p)$ and needs no reference run.

\paragraph{Field coverage and the two terms.}
Field coverage falls from $0.851$ to $0.331$ with $\sigma$ free, and from
$0.857$ to $0.024$ with $\sigma$ pinned, while $\hat\sigma/\sigma$ goes
$0.98\to 16.99$. At $\rho=1.32$, only $32\%$ excess variance outside
$\range(G)$, coverage has already fallen to $0.547$. The pinned arm measures the
basis-restriction term. At $\rho=1.32$ it stands at $0.501$ against $0.857$
in-basis, so the basis restriction alone costs a third of the coverage and most
of the early collapse is not due to $\sigma$. At $\rho=306$ the free arm retains
$0.331$ against the pinned arm's $0.024$, so under this design $93\%$ of the coverage that survives at the far end of the
sweep is attributable to the fitted noise. This is an attribution by comparing the
two arms, not a causal decomposition. The collapse is the
classical oversmoothing effect \citep{knapik2011bayesian}. Its concealment from
observation-space checks, and the partial support of the remaining coverage, are
absorption. The absolute free-minus-pinned difference is bounded by the free-arm
coverage, so beyond $\rho\approx 15$ only the fraction is interpretable. For the
same reason we report no rank correlations between sweep quantities, which are
all monotone in one dial.

\begin{figure}[t]
\centering
\includegraphics[width=\textwidth]{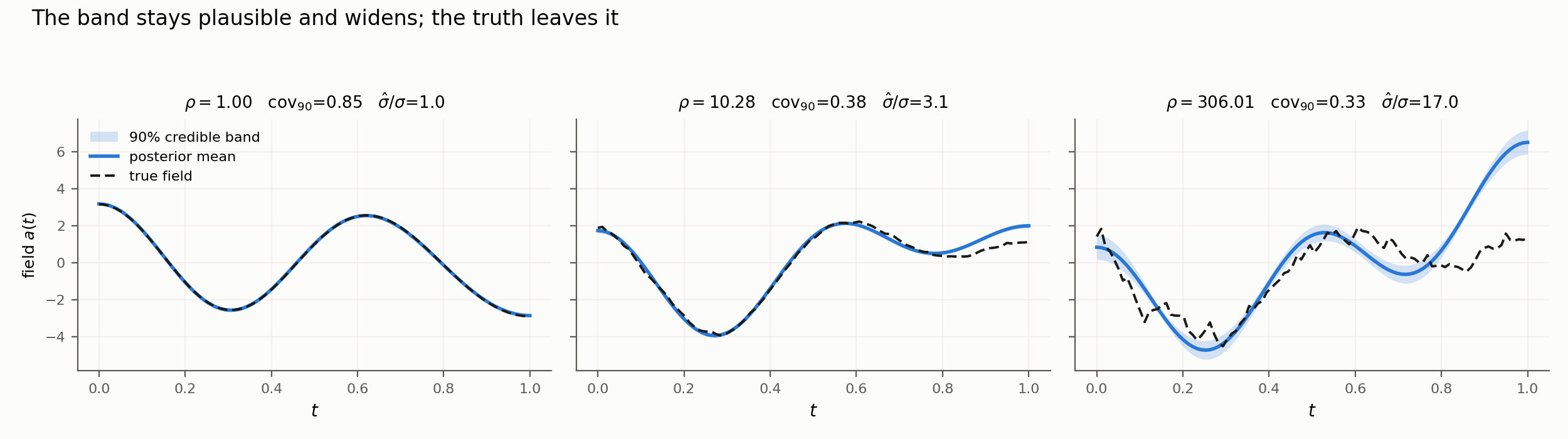}
\caption{Field posteriors as $\rho$ grows, for one benchmark, one seed and one
test case. The fitted noise widens the band, with $\hat\sigma/\sigma$ rising from
$1.0$ to $17.0$, but the truth leaves it. Each panel shows the test case at the
median field RMSE among the 100 held-out cases for its configuration.}
\label{fig:collapse}
\end{figure}

\begin{figure}[t]
\centering
\includegraphics[width=\textwidth]{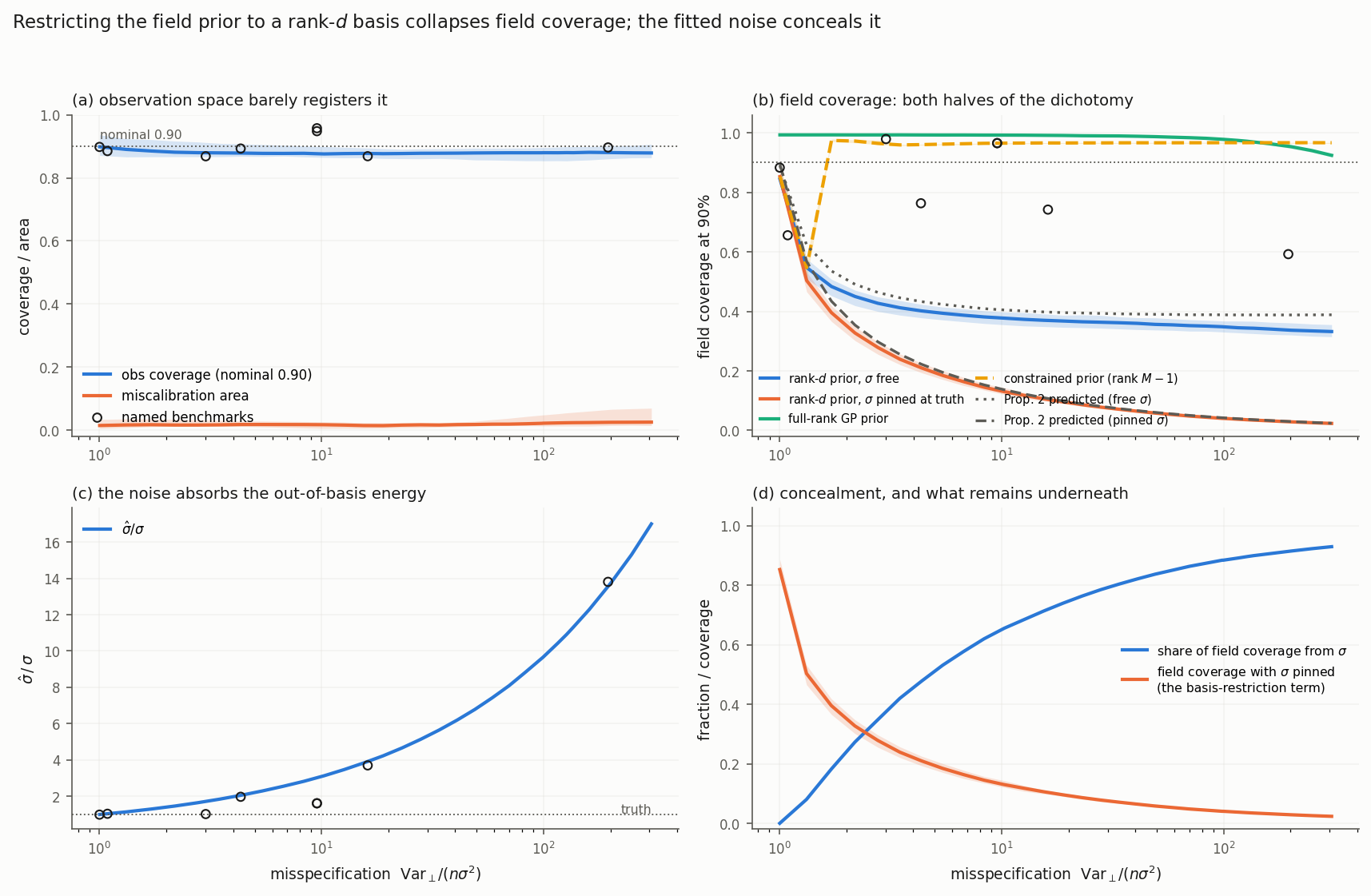}
\caption{The misspecification sweep. (a) Observation coverage and
miscalibration area. (b) Field coverage for the rank-$d$ prior with $\sigma$ free
and pinned, the full-rank GP prior ($0.99$--$0.92$) and the constrained prior at
rank $M-1$, with the prediction of Proposition~\ref{prop:coverage} for both
$\sigma$ arms. (c) $\hat\sigma/\sigma$. (d) The share of field coverage
attributable to $\sigma$, with the pinned curve alongside. The eight named
benchmarks are overlaid as points.}
\label{fig:sweep}
\end{figure}

\section{Proof of Lemma~\ref{lem:nullmean}}\label{app:nullmean}

This appendix proves the exact null mean of Section~\ref{sec:sphericity}.

Let $Q$ be the $N\times N$ Helmert matrix, orthogonal with last row
$N^{-1/2}\mathbf 1^\top$. Under $H_0$, rotating the $N$ complement residuals by $Q$
removes the sample mean and leaves $\nu=N-1$ independent $\Nor_p(0,\sigma^2I_p)$
vectors, the rows of $\tilde Z\in\R^{\nu\times p}$, with $\nu S=\tilde Z^\top\tilde Z$.
$U=p\,\tr(S^2)/(\tr S)^2-1$ is a function of $W=\nu S$ alone and does not depend on
$\sigma$, so take $\sigma=1$. Write $\tilde Z=r\Omega$ with $r=\|\tilde Z\|_F$ and
$\Omega=\tilde Z/\|\tilde Z\|_F$ uniform on the unit sphere of $\R^{\nu p}$ and
independent of $r$. Then $\tr(W)=r^2$ and
$W/\tr(W)=\Omega^\top\Omega$, so $\tr(W)$ is independent of $W/\tr(W)$ at every
$(\nu,p)$. This is the matrix analogue of the independence of a Gaussian vector's
norm and direction, and is why scale-invariant sphericity statistics are pivotal.
Because it does not use the Wishart density, it holds for $p>\nu$, where that
density does not exist. Hence $\E[\tr(W^2)/(\tr W)^2]=\E[\tr W^2]/\E[(\tr W)^2]$.
The numerator is the standard second Wishart moment $\E[W^2]=\nu(\nu+p+1)I_p$
\citep{muirhead1982}, so $\E[\tr W^2]=\nu p(\nu+p+1)$; the denominator follows from
$\tr(W)\sim\chi^2_{\nu p}$, so $\E[(\tr W)^2]=\nu p(\nu p+2)$. Substituting gives
\eqref{eq:nullmean}.

The moments and the independence are classical.
The classical fixed-$p$ approximation $\tfrac{\nu(p+2)}{2}U\to\chi^2_{(p-1)(p+2)/2}$
carries the right degrees of freedom but implies $\E[U]=(p-1)/\nu$, which is wrong by
a factor $(p+2)/p$: $40\%$ at $p=5$ and $2\%$ at $p=96$. At $(p,\nu)=(5,7)$,
simulation gives $0.7575$ against \eqref{eq:nullmean}'s $0.7568$ and the $\chi^2$
approximation's $0.5714$. The null law of $U$ moves with $(N,p)$, its exact mean running
from $47.99$ at $(N,p)=(3,96)$ to $0.121$ at $(800,96)$.

\section{The Test on the Benchmarks}\label{app:practice}

This appendix evaluates the test on the trained benchmark artifacts of
Section~\ref{sec:experiments}. The test is calibrated, detects at the sweep's
first misspecified point, and clause (c) of Proposition~\ref{prop:sphericity}
predicts the leading eigenvalues. We use the full held-out split, $N=800$ and
$p=96$, on the same converged artifacts as the sweep. $\Pperp$ depends only on
$(H,\Phi)$, so nothing is retrained. Appendix~\ref{app:power} covers other
$(N,p)$.

\paragraph{Calibration and detection.}
Against an exact null, with $\xi_\perp$ identically zero, the measured
false-positive rate over 40 trials is $0.025$ at a nominal $0.05$. The sweep's
lowest point is not an exact null: it carries out-of-basis energy of
$1.5\times10^{-6}$ relative to the field, and the test finds it in $2$ seeds of
$10$, a detection rate of $0.20$. At $\rho=1.32$ the detection rate is $1.00$,
and it remains $1.00$ at every larger $\rho$ (Figure~\ref{fig:sphericity}(a)).

\paragraph{Clause (c).}
Proposition~\ref{prop:sphericity}(c) predicts leading eigenvalues at
$\sigma^2(1+(\rho-1)p/k)$. On every operator family here the excess is dominated
by one direction, so $k=1$:

\begin{center}\small
\begin{tabular}{lrrr}
\toprule
benchmark & predicted & measured & error \\
\midrule
Low-SNR & 8.95 & 9.38 & $+4.8\%$ \\
Darcy (1D) & 189.2 & 194.8 & $+3.0\%$ \\
High-dim & 300.7 & 306.4 & $+1.9\%$ \\
Bimodal & 800.9 & 803.4 & $+0.3\%$ \\
Trimodal & 800.9 & 803.4 & $+0.3\%$ \\
Sensors & 237.9 & 215.1 & $-9.6\%$ \\
Deconv & 17863 & 17324 & $-3.0\%$ \\
\bottomrule
\end{tabular}
\end{center}

The median error is $3.0\%$ and the worst $9.6\%$, over the 7 misspecified
benchmarks. The worst is Sensors, where $p=16$ leaves the approximation less room
than at $p=96$. On the sweep, $31.7$ is predicted against $32.4$ measured at
$\rho=1.32$. The second eigenvalue confirms $k=1$. On Low-SNR, the least
concentrated misspecified benchmark with a participation ratio of $6.8$, the ratio
of $\lambda_2$ to the bulk is $1.85$, between the null's own second and leading
ratios of $1.81$ and $1.88$.

\paragraph{Small complement dimension.}
Sensors subsamples to $20$ observations, which leaves $p=16$ with $N\approx800$,
so $p/N=0.02$. Its spectrum has one direction at $215$ times the bulk and the
remaining fifteen spanning $0.76$--$1.53$, consistent with sampling noise about a
flat bulk; detection is $1.00$. Subsampling concentrates the excess, so the test
works at small $p$ for the same reason it works at $p=96$.

\begin{figure}[t]
\centering
\includegraphics[width=\textwidth]{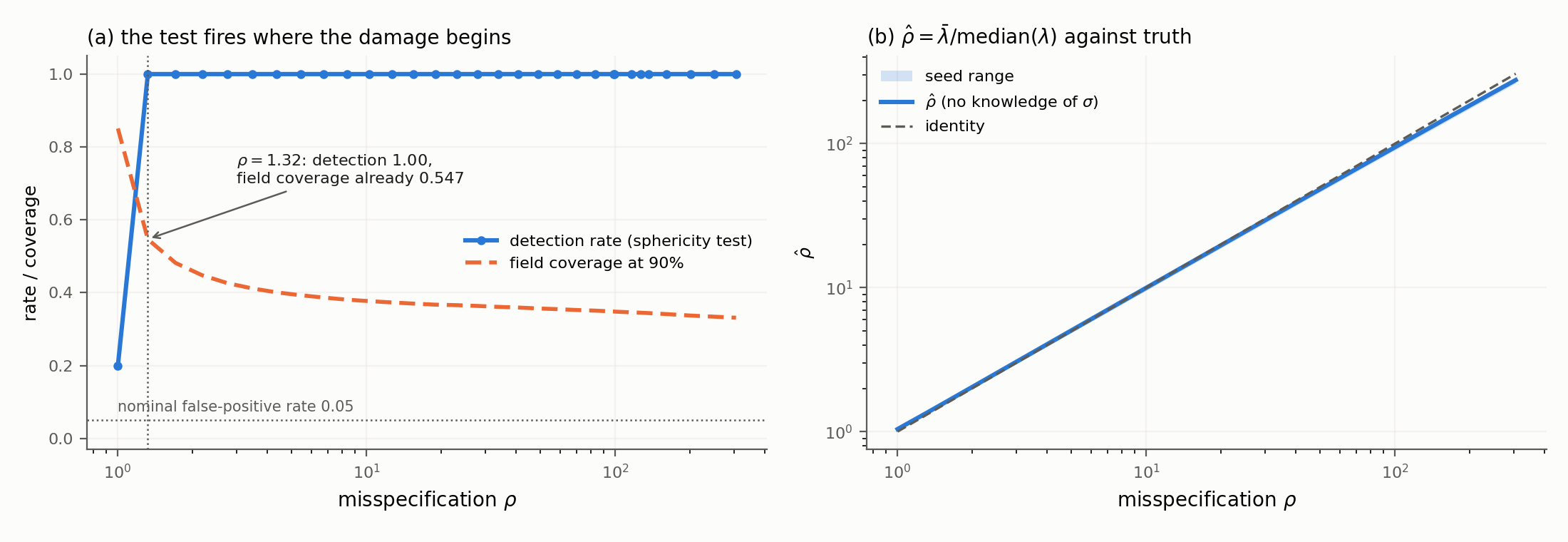}
\caption{(a) Detection rate of the sphericity test against $\rho$, with field
coverage on the same axis. (b) $\hat\rho=\bar\lambda/\mathrm{median}(\lambda)$
against the true ratio, seed ranges shaded, identity dashed. Both are computed
without $\sigma$, at $(N,p)=(800,96)$, over 10 seeds.}
\label{fig:sphericity}
\end{figure}

\section{The Power Study}\label{app:power}

This appendix characterises the test away from the benchmarks: at small $N$,
away from the spiked case, and across $p/N$ including the singular regime. It
supports the Power and Blind spot paragraphs of Section~\ref{sec:experiments}.

No model is trained. The complement residual is drawn from the law
Proposition~\ref{prop:sphericity} assigns it,
\begin{equation}\label{eq:gen}
  z\sim\Nor\!\big(0,\;\sigma^2(I_p+A)\big),\qquad \tr(A)/p=\rho-1,
\end{equation}
with only the \emph{eigenvalues} of $A$ specified. This loses nothing, because
the statistic is a function of the sample covariance spectrum and
$\operatorname{spec}\operatorname{cov}(ZQ)=\operatorname{spec}\operatorname{cov}(Z)$
for orthogonal $Q$. We fix $\sigma=1$, as clause (a) permits: at a fixed seed,
$\sigma\in\{0.01,1,100\}$ give $U$, $\hat\rho$ and $R$ identical to
$4.9\times10^{-16}$ relative. The statistic is imported from the code that runs
on the real artifacts, which reproduces every stored number to a worst relative
disagreement of $2.1\times10^{-15}$ over all $45$ (dataset, seed) pairs. In total
there are $1047$ cells and $240\,600$ replicates.

\paragraph{Calibration.}
The null law of $U$ depends on $(N,p)$: its bootstrapped mean runs from $47.94$ at
$N=3$ (exact value $47.99$, Lemma~\ref{lem:nullmean}) to $0.121$ at $N=800$, a
factor of $400$. Every cell is therefore compared with a
bootstrap at its own $(N,p)$. Over the $98$ exact-null cells, at $59$ distinct
$(N,p)$ pairs and $19\,600$ draws, the false-positive rate is $\mathbf{0.0509}$
against a nominal $0.05$ (binomial standard error $0.0016$). One cell of $98$
exceeds $0.05+3\,\mathrm{s.e.}$, which is what $98$ cells should produce.

\paragraph{Sample size.}
At $p=96$, $k=1$ and $\rho=1.32$, detection exceeds $0.8$ at $\mathbf{N=5}$,
where $p/N=19.2$ and the sample covariance has rank $4$:

\begin{center}\small
\begin{tabular}{lrrrrrrrrr}
\toprule
$N$ & 3 & 4 & \textbf{5} & 6 & 8 & 10 & 15 & 25 & $\ge$50 \\
\midrule
false positives ($\rho=1$) & .050 & .040 & \textbf{.065} & .045 & .070 & .083 & .050 & .050 & .030--.055 \\
detection ($\rho=1.32$) & .495 & .640 & \textbf{.810} & .850 & .950 & .980 & .995 & 1.000 & 1.000 \\
\bottomrule
\end{tabular}
\end{center}

Each cell has $200$ replicates, and $400$ at $N=10$, which two cells cover. The
false-positive rate stays within $0.030$--$0.083$, every cell within about two
standard errors of nominal. By clause (c) the leading eigenvalue sits at
$1+(\rho-1)p\approx31.7$ times the bulk, and one direction at $32\times$ the noise
is visible in a handful of observations.

\paragraph{Structure families and clause (d).}
At $(N,p)=(800,96)$, with every structure normalised to the same
$\tr(A)/p=\rho-1$, detection is $1.000$ for every anisotropic shape: $k$ equal
spikes for $k\le80$, polynomial decay $\lambda_j\propto j^{-\alpha}$ for
$\alpha\in\{0.5,1,2\}$, and exponential decay $\lambda_j\propto e^{-\beta j}$ for
$\beta\in\{0.05,0.2,1\}$, at $\rho=1.32$, $4.36$ and $33.85$ alike. The shapes
separate only at smaller $N$ (Table~\ref{tab:collapse}). Exact isotropy is the
exception clause (d) predicts. Over $98$ isotropic cells at
$\rho\in\{1.05,1.32,4.36,33.85\}$, a factor of $32$ in $\rho$, the detection rate
has median $0.045$, range $[0.015,0.095]$, the nominal rate at every $\rho$. The
calibrated $\hat\rho$ is $0.9999$ and $\Rstd$ is $1.0018$, where clause (d)
predicts exactly $1$.

\paragraph{Dimension and the singular regime.}
Crossing $p\in\{16,48,96,135,250\}$ with $N\in\{25,100,400,1600\}$ spans $p/N$
from $0.01$ to $10$. Of the $40$ misspecified cells, $39$ detect at $1.000$ and
the lowest is $0.985$. The six cells with $p>N$, where the sample covariance is
singular of rank $N-1$, all detect at $1.000$ at $\rho=1.32$, up to
$(p,N)=(250,25)$. The test does not rely on $N>p$: its null is bootstrapped at each $(N,p)$, and the
false-positive rate stays near nominal down to $N=3$ at $p=96$ (see Sample size).
As asymptotic support, the limit $\nu U-p\to\Nor(\E[x^4]-2,4)$, where $\E[x^4]$ is the
fourth moment of a standardised entry, established by
\citet{ledoitwolf2002} for Gaussian data and by \citet{wangyao2013} for general
fourth moment at $p/\nu\to c\in(0,\infty)$, continues to hold as
$p/\nu\to\infty$ \citep{liyao2016}.

\paragraph{Small excess.}
At $\rho=1.05$, a five per cent excess, detection is $1.000$ at every $N\ge100$
with $k=1$, and $0.63$ at $N=25$. At $\rho=1.10$ it is $1.000$ at $N\ge100$ and
$0.965$ at $N=25$. On the benchmark geometries at $\rho=1.05$ it is $0.995$ at
$(p,N)=(16,400)$, $1.000$ at $(96,800)$ and $(135,800)$, and $0.94$ at
$(250,25)$. The sweep did not reach this regime.

\paragraph{Width of the blind spot.}
At $p=96$, the largest fraction of complement directions the excess may occupy
while detection still reaches $0.8$ is:

\begin{center}\small
\begin{tabular}{lrrrrrrrr}
\toprule
$N$ & 20 & 35 & 60 & 100 & 200 & 400 & 800 & 1600 \\
\midrule
$k/p$ at detection $0.8$, $\rho=1.32$ & 0.10 & 0.26 & 0.26 & 0.50 & 0.67 & 0.67 & 0.83 & 0.92 \\
$k/p$ at detection $0.8$, $\rho=4.36$ & --- & --- & 0.83 & 0.83 & 0.92 & 0.98 & 0.98 & 0.99 \\
\bottomrule
\end{tabular}
\end{center}

These are the largest values on our $k$ grid, so each is a lower bound. At
$\rho=1.32$ the test keeps power $0.8$ until the energy is spread over more than
about $90\%$ of the complement at $N=800$, about $70\%$ at $N=200$ and $50\%$ at
$N=100$. Larger $\rho$ pushes the boundary out. The boundary is where $\Rstd$
falls to $1$.

\section{The Auxiliary Statistics $\hat\rho$, $R$ and $\Rstd$}\label{app:aux}

The test runs on $U$. This appendix covers the two auxiliary quantities of
Section~\ref{sec:sphericity}: the magnitude $\hat\rho$, which under-stated the excess
wherever the test fired, and the standardised ratio $\Rstd$, which locates the detection
threshold.

\subsection*{The magnitude $\hat\rho$}

On the benchmarks, over $\rho\in[1,5]$, the median relative error of $\hat\rho$ is
$2.08\%$ and the worst $4.02\%$ (Figure~\ref{fig:sphericity}(b)). That accuracy
holds because the excess occupies one complement direction there. $\hat\rho$
needs $N>p$ and an excess in fewer than half the complement directions; beyond
that, $\operatorname{median}(\lambda)$ stops estimating $\sigma^2$. At
$\rho=1.32$, $N=800$ and $p=96$:

\begin{center}\small
\begin{tabular}{lrrrrrrr}
\toprule
$k$ & 1 & 5 & 10 & 25 & \textbf{48} & 64 & 96 \\
\midrule
detection & 1.000 & 1.000 & 1.000 & 1.000 & \textbf{1.000} & 1.000 & 0.050 \\
error of $\hat\rho-1$ & $-1.3\%$ & $-10.1\%$ & $-19.3\%$ & $-46.2\%$ & $\mathbf{-87.4\%}$ & $-97.3\%$ & $-100\%$ \\
\bottomrule
\end{tabular}
\end{center}

At $k\le25$ the median lands in the clean bulk. At $k=48=p/2$ it lands in the
inflated half, and the test still fires with certainty while $\hat\rho$ recovers
almost none of the excess. On the benchmarks the fraction of complement
eigenvalues above $1.05\times$ the bulk is flat at $0.462$ across the sweep,
inside $k<p/2$ by under four percentage points.

What holds everywhere is one-sidedness. Over \textbf{all $527$ cells of the study in
which the test fires} at detection $\ge0.8$, $\hat\rho-1\le1.10\,(\rho-1)$ in
$\mathbf{100.0\%}$ of them, with median error $-19.1\%$. $\hat\rho-1$ therefore
under-stated $\rho-1$ in every one of the $527$ cells: it is tight when the
misspecification is concentrated and vacuous as it spreads.

$\hat\rho$ drifts to $-9.6\%$ at $\rho=306$. The excess does not spread over more
directions as $\rho$ grows, since the $0.462$ fraction is flat. What moves is the
bulk estimate, $\mathrm{median}(\lambda)/\sigma^2$ rising from $0.962$ to $1.093$,
and $1/1.093 - 1 = -8.5\%$ accounts for almost all of the drift.

On the spiked surface of Figure~\ref{fig:power}, accuracy and detection run close
together. Accuracy is the relative error of the \emph{excess}, $\hat\rho-1$
against $\rho-1$; relative to $\rho$, an estimate of exactly $1$ would score a
near-zero error wherever $\rho\approx1$. Of the $69$ cells with detection
$\ge0.95$, $67$ recover the excess to within $25\%$; of the $31$ with detection
$<0.5$, only $4$ do. Both exceptions are the weakest signal the test can see: at
$(\rho-1)p/k=1$ detection is $0.978$ at $N=800$ while the excess error is
$-37\%$. In Figure~\ref{fig:power} the simulated cells sit at the population excess
$(\rho-1)p/k$, and the benchmarks at their measured excess above the null,
$R-R_0(N,p)$, as a proxy. The sweep point lands at $30.09$ against a population $(\rho-1)p/k$ of
$30.72$.

\subsection*{The standardised ratio $\Rstd$}\label{app:rstd}

$R$ is not comparable across $(N,p)$, because the bulk is estimated and the
sample spectrum spreads as $N$ falls. Under sphericity $R$ has median $1.56$ at
$(N,p)=(1600,96)$, $1.86$ at $(800,96)$, $2.36$ at $(400,96)$ and $5.70$ at
$(100,96)$. Standardising is necessary. At $(N,p)=(800,96)$, unmodelled energy
spread over $48$ of the $96$ complement directions gives $R=2.14$ and detection
$1.000$, while energy spread over all $96$ gives $R=1.86$ and detection $0.040$.
The raw ratio moves by $15\%$ between the two, and $R_0(800,96)=1.86$ is what
distinguishes them.

Table~\ref{tab:collapse} gives each candidate variable at the $0.8$-detection
crossing, over the $21$ series whose crossing lies in the domain $N>p/2$: $14$
misspecification shapes at $p=96$ and $\rho\in\{1.05,1.32\}$, and $p$ from $16$ to
$250$ at fixed excess. Only $\Rstd$ collapses. The collapse survives being split
along either axis: CV $0.068$ over the $11$ series that vary the shape at fixed
$p$, and CV $0.113$ over the $10$ that vary $p$ from $16$ to $250$.

\begin{table}[t]
\caption{Each candidate variable at the $0.8$-detection crossing, over $21$ series
with $200$ replicates per cell. The raw ratio's null level moves with $(N,p)$, and
the population dispersion carries a factor of $p$ that $\Rstd$ does not.}
\label{tab:collapse}
\centering\small
\begin{tabular}{lrrrr}
\toprule
candidate variable & median & range & max/min & CV \\
\midrule
$R=\lambda_1/\operatorname{median}(\lambda)$, raw & 4.90 & 1.88--$5.5\!\times\!10^{4}$ & $2.9\!\times\!10^{4}$ & 3.44 \\
$N\,U_{\mathrm{pop}}/p$, whole-spectrum dispersion & 0.060 & 0.020--0.650 & 31.7 & 1.39 \\
\textbf{$\Rstd=R/R_0(N,p)$} & \textbf{1.193} & \textbf{1.026--1.688} & \textbf{1.65} & \textbf{0.115} \\
\bottomrule
\end{tabular}

\end{table}

The threshold is a band that moves with $(N,p)$:

\begin{center}\small
\begin{tabular}{lccc}
\toprule
 & $N>2p$ & $p/2<N\le p$ & $N\le p/2$ \\
\midrule
$\Rstd$ at detection $0.8$ & $1.17$ & $1.26$ & $1.52$ \\
\bottomrule
\end{tabular}
\end{center}

The CV across series is $0.072$ for $N>2p$, $0.067$ for $p/2<N\le p$ and $0.296$
for $N\le p/2$. A value of $\Rstd$ above $1.3$ indicated power of at least $0.8$
anywhere in $N>p/2$; between $1.0$ and $1.3$ the answer depends on $N/p$. $R_0$
has Monte-Carlo error below $0.5\%$ of $R_0$ at every $N$ from $1600$ down to
$50$. What moves the band is the bulk: $\operatorname{median}(\lambda)/\bar\lambda$
falls from $0.979$ at $N=16.7p$ to $0.646$ at $N=p$ and $0.213$ at $N=0.52p$.
Below $N=p/2$ more than half the sample spectrum is exactly zero and $\Rstd$
should not be read: at $p=96$, $R_0$ is $25.6$ at $N=50$ and $3.9\times10^{15}$ at
$N=48$.

Reading $\Rstd$ off each benchmark's own complement spectrum places the suite on
one scale: In-basis $1.01$ and Darcy 2D $1.01$, the two correctly specified
problems, then Low-SNR $5.1$, the sweep's $\rho=1.32$ point $17.2$, Sensors
$169$, Darcy 1D $105$, High-dim $180$, Bimodal and Trimodal $433$, and Deconv
$9363$. Every misspecified benchmark is one to four orders of magnitude above the
band, which is why detection is $1.00$ on all of them and why the sweep alone
could not locate the boundary.

\section{Asymptotics}\label{app:asymptotics}

This appendix gives the limit theory behind the power law of
Section~\ref{sec:experiments} and the link from field error to detection used in
Section~\ref{sec:real}. Limit theory accounts for the null entirely, for the power
law almost entirely for Gaussian data (Theorem~\ref{thm:prop}), and for the spread
of the $\Rstd$ threshold but not its closed form.

\begin{proposition}[Consistency in $N$ at fixed $p$]\label{prop:consistency}
[\textup{\textsc{asymptotic}} in $N$ at fixed $p$; \emph{not} a finite-sample statement.]
Fix $p\ge2$ and let $\Sigma=\sigma^2(I_p+A)$ with $A\succeq0$ \emph{not} proportional to
$I_p$. Let $U_N$ be computed from $N$ i.i.d.\ $\Nor(0,\Sigma)$ complement residuals and let
$u_{\alpha,N}$ be the $(1-\alpha)$ quantile of its null law at $(N,p)$. Then
$\Pr(U_N>u_{\alpha,N})\to1$ as $N\to\infty$.
\end{proposition}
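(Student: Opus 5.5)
The plan is to show that the null critical value $u_{\alpha,N}$ tends to zero while $U_N$ converges in probability to a strictly positive constant. Consistency of the test then follows at once.

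\emph{Step 1: the null quantile vanishes.} Under $H_0$, Lemma~\ref{lem:nullmean} gives $\E[U\mid H_0]=(p-1)(p+2)/(\nu p+2)=O(1/N)$ at fixed $p$. Since $U\ge0$, Markov's inequality gives $\Pr(U>t\mid H_0)\le \E[U\mid H_0]/t$. Choosing $t=\E[U\mid H_0]/\alpha$ then shows
\[
u_{\alpha,N}\le \frac{(p-1)(p+2)}{\alpha(\nu p+2)}\to0 .
\]
No distributional limit is needed for this bound. The classical $\chi^2$ approximation would sharpen it to order $1/N$, but that is not required.

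\emph{Step 2: the alternative converges to a positive constant.} Under the alternative, the sample covariance $S_N$ converges almost surely to $\Sigma$ by the strong law, with $p$ fixed. The map
\[
S\mapsto p\,\tr(S^2)/(\tr S)^2-1
\]
is continuous wherever $\tr S>0$, and $\tr\Sigma=\sigma^2(p+\tr A)>0$. The continuous mapping theorem therefore gives $U_N\to U_{\mathrm{pop}}:=\tfrac1p\sum_j(\mu_j/\bar\mu-1)^2$, where $\mu_j$ are the eigenvalues of $\Sigma$. This limit is $\ge0$, and it equals $0$ if and only if all $\mu_j$ are equal, that is, if and only if $\Sigma\propto I_p$, which is equivalent to $A\propto I_p$. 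The hypothesis excludes this case, so $U_{\mathrm{pop}}>0$. Scale invariance (Proposition~\ref{prop:sphericity}(a)) lets us ignore $\sigma$ throughout.

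\emph{Step 3: conclude.} For $N$ large enough, $u_{\alpha,N}<U_{\mathrm{pop}}/2$. Then
\[
\Pr(U_N>u_{\alpha,N})\ge\Pr(U_N>U_{\mathrm{pop}}/2)\to1 .
\]

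The only delicate point is the phrase ``not proportional to $I_p$''. The argument needs $I_p+A$ to be non-proportional to $I_p$, which is equivalent to $A$ being non-proportional to $I_p$ (with $A=0$ counted as proportional). A mean in the residuals would also enter only through centring, since the sample covariance removes it. I expect no real obstacle beyond confirming that the deterministic Markov bound in Step 1 suffices; it does, because $U\ge0$.
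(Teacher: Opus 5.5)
Your proposal is correct and follows the paper's proof essentially step for step. It applies Markov's inequality to the exact null mean of Lemma~\ref{lem:nullmean} to send $u_{\alpha,N}\to0$, uses the strong law plus continuity of $U$ in the spectrum to get $U_N\to U_{\mathrm{pop}}>0$ almost surely, and notes that $U_{\mathrm{pop}}=0$ exactly when $\Sigma\propto I_p$. Your explicit bound $u_{\alpha,N}\le(p-1)(p+2)/\{\alpha(\nu p+2)\}$ simply makes the paper's Markov step quantitative.
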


\begin{proof}
(i) $U\ge0$ and $\E[U\mid H_0]=(p-1)(p+2)/(\nu p+2)\to0$ by Lemma~\ref{lem:nullmean}, so
Markov's inequality gives $\Pr(U>\varepsilon\mid H_0)\to0$ for every $\varepsilon>0$ and
hence $u_{\alpha,N}\to0$. (ii) $S_N\to\Sigma$ a.s., and $U$ is continuous in the spectrum
wherever $\bar\lambda>0$, so $U_N\to U_{\mathrm{pop}}(\Sigma)$ a.s., where
\begin{equation}\label{eq:upop}
  U_{\mathrm{pop}}(\Sigma) := \frac{p\,\tr(\Sigma^2)}{(\tr\Sigma)^2}-1
\end{equation}
is $U$ evaluated on the population spectrum. (iii) $U_{\mathrm{pop}}(\Sigma)=0$ if and only
if all eigenvalues of $\Sigma$ coincide, i.e.\ $\Sigma\propto I_p$; by hypothesis
$U_{\mathrm{pop}}>0$. Combining, $\Pr(U_N>u_{\alpha,N})\to1$.
\end{proof}

The exclusion in Proposition~\ref{prop:consistency}, $A$ not proportional to
$I_p$, is the blind spot of Proposition~\ref{prop:sphericity}(d). Finite-sample
power, consistency and the blind spot therefore share one condition:
$U_{\mathrm{pop}}>0$, that is $\Sigma\not\propto I_p$, that is the unmodelled
energy is not white in the complement.

\paragraph{The scalar the power depends on.}
Two results give the power of John's test in terms of the population spectrum, and both depend on $\Sigma$ only through the single scalar
$U_{\mathrm{pop}}$ of \eqref{eq:upop}. \citet{wangyao2013} give, for real Gaussian data at
$p/\nu\to y\in(0,\infty)$ against a spiked alternative with finitely many spikes $a_i$ of
multiplicity $n_i$, the signal term $\tfrac{\nu}{p}\sum_i n_i(a_i-1)^2$, which is
$\nu\,U_{\mathrm{pop}}$ to leading order, and hence
\begin{equation}\label{eq:wy}
  \beta \;=\; \Phi\!\left(\tfrac{1}{2}\,\nu\,U_{\mathrm{pop}} - z_\alpha\right).
\end{equation}
\citet{liyao2016} give, for general $\Sigma$ of bounded spectral norm at $p/\nu\to\infty$ with
$\nu^3/p=O(1)$, the signal term $(\theta/\gamma^2-1)\nu$ with $\gamma=\tr(\Sigma)/p$ and
$\theta=\tr(\Sigma^2)/p$, so that $\theta/\gamma^2=1+U_{\mathrm{pop}}$ exactly, and hence
\begin{equation}\label{eq:ly}
  \beta \;=\; \Phi\!\left(\frac{(\nu+1)U_{\mathrm{pop}}-2z_\alpha}{2\,(1+U_{\mathrm{pop}})}\right).
\end{equation}
The two agree to leading order and differ only in the $(1+U_{\mathrm{pop}})$ factors, which
matter when the misspecification is large.

This scalar explains why the candidate variables of Table~\ref{tab:collapse}
failed. $N U_{\mathrm{pop}}/p$ carries a spurious factor of $p$ that neither
\eqref{eq:wy} nor \eqref{eq:ly} contains, and $\nu U_{\mathrm{pop}}$ alone omits the
$(1+U_{\mathrm{pop}})$ correction, so its crossing drifts with $U_{\mathrm{pop}}$
as \eqref{eq:ly} predicts.

\paragraph{Coverage of the cited regimes.}
\citet{wangyao2013} is general in $p/\nu$ but assumes finitely many spikes;
\citet{liyao2016} is general in $\Sigma$ but assumes $p/\nu\to\infty$. Neither covers
general spectral alternatives at proportional $p/\nu$, the case our benchmarks
occupy (Table~\ref{tab:regimes}). Theorem~\ref{thm:prop} covers it for Gaussian
data; the non-Gaussian behaviour of Section~\ref{sec:real} is measured, not
derived.

\begin{table}[htbp]
\caption{The two cited power functions and the region between them. All three give the
power through the scalar $U_{\mathrm{pop}}$; Theorem~\ref{thm:prop} adds the spectral
functional $\kappa$ of \eqref{eq:kappa}, which both cited results drop.}
\label{tab:regimes}
\centering\footnotesize\setlength{\tabcolsep}{3pt}
\begin{tabular}{lcccc}
\toprule
 & \citet{wangyao2013} & \citet{liyao2016} & Theorem~\ref{thm:prop} & our cells \\
\midrule
$p/\nu$ & $\to c\in(0,\infty)$ & $\to\infty$, $\nu^3/p=O(1)$ & $\to y\in(0,\infty)$ & $0.01$ to $31$ \\
spectrum of $\Sigma$ & finitely many spikes & general, bounded norm & general, bounded norm & $1$ to $p$ spikes; poly., exp.\ decay \\
\bottomrule
\end{tabular}
\end{table}

\subsection*{A power theorem at proportional $p/\nu$, Gaussian case}\label{app:propthm}

Throughout, $z_1,\dots,z_N$ are i.i.d.\ $\Nor_p(\mu,\Sigma_p)$ with $\Sigma_p$ real,
$\nu=N-1$, $S=\frac1\nu\sum_i(z_i-\bar z)(z_i-\bar z)^\top$ and
$U=p\,\tr(S^2)/(\tr S)^2-1$. Since $U$ is invariant under $\Sigma_p\mapsto c\,\Sigma_p$,
we normalise: $T_p=p\,\Sigma_p/\tr\Sigma_p$, $\mathcal{L}_p=F^{T_p}$ its spectral distribution,
$m_k(\mathcal{L})=\int t^k\,d\mathcal{L}(t)$, so that $m_1(\mathcal{L}_p)=1$ and
$U_{\mathrm{pop},p}=m_2(\mathcal{L}_p)-1$ is \eqref{eq:upop}. Define
\begin{align}
  \kappa(\mathcal{L})&=\int\big(t^2-m_2(\mathcal{L})\,t\big)^2\,d\mathcal{L}(t)\nonumber\\
            &=m_4-2m_2m_3+m_2^3 ,\label{eq:kappa}\\
  \tau^2(\mathcal{L},y)&=m_2(\mathcal{L})^2+\tfrac{2}{y}\,\kappa(\mathcal{L}).\label{eq:tau}
\end{align}
$\kappa\ge0$, with equality iff $\mathcal{L}$ is supported on $\{0,m_2\}$, that is iff $\Sigma$ is
proportional to a projection. This is not a blind spot: there
$U_{\mathrm{pop}}=p/\rank\Sigma-1>0$, the test has power, and only the $\kappa$
correction vanishes, leaving the $1+U_{\mathrm{pop}}$ scale of \eqref{eq:ly}. In the
complement model $\Sigma=\sigma^2(I_p+A)$ every eigenvalue of $T_p$ is at least
$1/\rho$, so an excess flat on a subspace, $A=c\,P$, has $\kappa>0$, and $\kappa=0$
iff $\Sigma\propto I_p$.

\begin{theorem}[John's $U$ at proportional $p/\nu$, Gaussian case]\label{thm:prop}
[\textup{\textsc{asymptotic}} as $p\to\infty$ with $p/\nu\to y$; Gaussian data only.]
Suppose, as $p\to\infty$,
(H1) $p/\nu\to y\in(0,\infty)$;
(H2) $\sup_{p}\,\|T_p\|<\infty$ (spectral norm);
(H3) $\mathcal{L}_p\to \mathcal{L}$ weakly, with $\mathcal{L}$ a proper c.d.f.
Then
\begin{equation}\label{eq:propclt}
  \tfrac \nu2\big(U-U_{\mathrm{pop},p}-\tfrac p\nu\big)-\tfrac12\big(1+U_{\mathrm{pop},p}\big)
  \;\Rightarrow\;\Nor\big(0,\tau^2(\mathcal{L},y)\big).
\end{equation}
\end{theorem}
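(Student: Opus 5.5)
The plan is to reduce to a centred Wishart matrix, write $U$ as a smooth function of two normalised traces, and combine a joint central limit theorem for those traces with the delta method.

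\emph{Reduction.} As in the proof of Lemma~\ref{lem:nullmean}, rotating the $N$ observations by the Helmert matrix removes $\mu$. It leaves $\nu$ i.i.d.\ $\Nor_p(0,\Sigma_p)$ rows, so that $\nu S=W=\tilde Z^\top\tilde Z$. Because $U$ is scale invariant we may replace $\Sigma_p$ by $T_p$, whose normalised moments are $m_1=1$ and $m_k=m_k(\mathcal{L}_p)$. Set
\[
  a:=\tr(S)/p,\qquad b:=\tr(S^2)/p,\qquad\text{so that}\qquad U+1=b/a^2 .
\]

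\emph{Exact low-order moments.} Since $W$ is Wishart, the needed moments are polynomial identities and hold for every $(\nu,p)$, including $p>\nu$. The starting points are
\[
  \E a=1,\qquad \Var a=\frac{2m_2}{\nu p},
\]
and, from $\E\tr W^2=\nu(\nu+1)\tr T^2+\nu(\tr T)^2$,
\[
  \E b=\Big(1+\frac1\nu\Big)m_2+\frac p\nu .
\]
I then compute $\Var b$ and $\Cov(a,b)$ to leading order $O(1/p^2)$ under (H1). This uses the standard Wishart third- and fourth-order trace moments, e.g.\ $\E(\tr W)\tr W^2$ and $\E(\tr W^2)^2$ from \citet{muirhead1982}. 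The results are expressed in $m_2,m_3,m_4$ and $y$; (H2) keeps every $m_k$ bounded, and (H3) lets $m_k(\mathcal{L}_p)\to m_k(\mathcal{L})$.

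\emph{Delta method.} Write $a=1+\delta_a$ with $\delta_a=O_P((\nu p)^{-1/2})$. Expanding $b/a^2$ gives
\[
  U+1=b-2\,\E b\,\delta_a+R,
\]
where $R$ collects $(b-\E b)\delta_a$ and $\delta_a^2\,\E b$. Under (H1) both are $O_P(1/(\nu p))$, so $\nu R\to0$. Consequently
\[
  \tfrac\nu2\big(U-U_{\mathrm{pop},p}-\tfrac p\nu\big)-\tfrac12 m_2=\tfrac\nu2\big[(b-\E b)-2\,\E b\,\delta_a\big]+o_P(1),
\]
because $\tfrac\nu2\big(\E b-1-U_{\mathrm{pop},p}-\tfrac p\nu\big)=\tfrac12m_2=\tfrac12(1+U_{\mathrm{pop},p})$. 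This is exactly the centring in \eqref{eq:propclt}.

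\emph{Variance of the linear part.} The limiting variance is
\[
  \lim\tfrac{\nu^2}{4}\Var\big(b-2(m_2+y)a\big).
\]
Inserting the moment formulas, I expect it to reduce to $m_2^2+\tfrac2y\big(m_4-2m_2m_3+m_2^3\big)=\tau^2(\mathcal{L},y)$. I will check this reduction first in the null case $\mathcal{L}=\delta_1$, where $\kappa=0$ and $\tau^2=1$; this must agree with the Ledoit--Wolf limit $\nu U-p\to\Nor(1,4)$. I will then check a two-atom spectrum.

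\emph{Asymptotic normality.} This is the main obstacle. Under (H1)--(H3), $(\tr S,\tr S^2)$ is a pair of linear spectral statistics with polynomial test functions $f(x)=x$ and $f(x)=x^2$. For real Gaussian entries, the Bai--Silverstein CLT gives joint Gaussian fluctuations of $\big(p(a-\E a),\,p(b-\E b)\big)$ at order one. Only the normality is taken from it, since the exact finite-$p$ moments above already supply the centring. The real difficulty is reconciling its contour-integral covariance with the explicit trace-moment variance, or equivalently confirming the algebra that produces $\kappa$. If citing the theorem proves awkward, the fallback is a direct proof. Because $a$ and $b$ are polynomials of degree at most four in Gaussians, one can bound the cumulants of the linear combination: show that the third and higher cumulants of $\nu\big(b-2(m_2+y)a\big)$ vanish, using (H2) to control the traces. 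This is the method of moments for Gaussian polynomial chaos, and it also avoids the $p>\nu$ singularity.
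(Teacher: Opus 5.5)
Your route is the paper's. You use the Helmert reduction to $\nu$ centred Gaussian rows, Bai--Silverstein Theorem~9.10 only for the joint normality of $(\tr S,\tr S^2)$, exact trace moments for the centring and the covariance, and a delta method with remainder $O_P(\nu/p^2)$. Your centring identity is right, and so is the variance you expect. Take the limiting covariances of $(\tr S,\tr S^2)$: $V_{aa}=2ym_2$, $V_{ab}=4(ym_3+y^2m_2)$ and $V_{bb}=4(2ym_4+4y^2m_3+y^2m_2^2+2y^3m_2)$. Then $V_{bb}-4(m_2+y)V_{ab}+4(m_2+y)^2V_{aa}=4y^2m_2^2+8y\kappa$, because the $y^2m_3$ and $y^3m_2$ terms cancel. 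Dividing by $4y^2$ gives $\tau^2$.

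The step you leave open is the one the paper's Step~3 exists for, and ``confirming the algebra that produces $\kappa$'' is not equivalent to it. Theorem~9.10 gives $(\tr S,\tr S^2)-c_p\Rightarrow G$ with its own deterministic centring $c_p$, not the exact mean. Bounded variances do settle the centring. The exactly centred traces are then uniformly integrable, so along any subsequence on which $\E(\tr S,\tr S^2)-c_p$ converges, the limit has mean zero. But bounded variances do not identify the covariance of the limit with the limit of your exact covariances. Fatou gives only $\Var(\ell^\top G)\le\liminf_p\Var(\ell^\top(\tr S,\tr S^2))$, and equality needs uniform integrability of the squares. The paper supplies this through the fourth-moment clause of Lemma~\ref{lem:tracemoments}. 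In your framework it costs one line. $\tr S$ and $\tr S^2$ are polynomials of degree at most four in the Gaussian entries, so Gaussian hypercontractivity bounds their fourth central moments by a constant multiple of the squared variances you already compute. With that added you need neither the contour-integral covariance nor the cumulant fallback. The fallback is not needed for $p>\nu$ either, since $x$ and $x^2$ are entire and Theorem~9.10 covers $y\ge1$.
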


The proof assembles classical ingredients.
In the complement model, $T_p=(I_p+A_p)/\rho_p$ and (H2) is $\sup_p\|A_p\|<\infty$,
which excludes only spikes that grow with $p$. Under (H2) the measures $\mathcal{L}_p$ live on
one compact interval, so (H3) asks only that the limit exist. At $\Sigma_p\propto I_p$, $\tau=1$ and \eqref{eq:propclt} reads
$\nu U-p\Rightarrow\Nor(1,4)$, the Gaussian case $\E[x^4]=3$ of the cited null law
\citep{ledoitwolf2002}.

\begin{lemma}[Exact trace moments]\label{lem:tracemoments}
[\textup{\textsc{exact}} at every $(\nu,p)$, $p>\nu$ included; Gaussian.] Let
$W=\Sigma^{1/2}XX^\top\Sigma^{1/2}$ with $X$ a $p\times \nu$ matrix of i.i.d.\ $\Nor(0,1)$
entries, and $T_k=\tr\Sigma^k$. Then
\begin{align*}
  \E\tr W&=\nu T_1, \qquad \Var\tr W=2\nu T_2,\\
  \E\tr W^2&=\nu(\nu+1)T_2+\nu T_1^2,\\
  \Cov(\tr W,\tr W^2)&=4\nu\big((\nu+1)T_3+T_1T_2\big),\\
  \Var\tr W^2&=4\nu\big((2\nu^2+5\nu+5)T_4\\
  &\quad+4(\nu+1)T_1T_3+(\nu+1)T_2^2\\
  &\quad+2T_1^2T_2\big).
\end{align*}
With $T_k=p\,m_k$ and $p/\nu\to y$, the fourth central moments of $\tr(W)/\nu$ and $\tr(W^2)/\nu^2$
converge, each to three times the square of its limiting variance.
\end{lemma}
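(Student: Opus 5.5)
Since $W\sim W_p(\nu,\Sigma)$ and both traces are invariant under $\Sigma\mapsto O\Sigma O^\top$ for orthogonal $O$, I will take $\Sigma=\operatorname{diag}(\lambda_1,\dots,\lambda_p)$ without loss of generality. Write $x_1,\dots,x_p\in\R^\nu$ for the rows of $X$; they are i.i.d.\ $\Nor(0,I_\nu)$. Then
\[
\tr W=\sum_j\lambda_j\|x_j\|^2,\qquad \tr W^2=\sum_{j,k}\lambda_j\lambda_k\,(x_j^\top x_k)^2 .
\]
Nothing here needs a Wishart density, so the identities hold for $p>\nu$ as well.

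The first two lines follow quickly. The terms $\|x_j\|^2\sim\chi^2_\nu$ are independent, which gives $\E\tr W=\nu T_1$ and $\Var\tr W=2\nu T_2$. For $\E\tr W^2$, the diagonal terms have $\E\|x_j\|^4=\nu(\nu+2)$. Off the diagonal, conditioning on $x_k$ gives $x_j^\top x_k\mid x_k\sim\Nor(0,\|x_k\|^2)$, so $\E(x_j^\top x_k)^2=\nu$. Summing, $\E\tr W^2=\nu(\nu+2)T_2+\nu(T_1^2-T_2)=\nu(\nu+1)T_2+\nu T_1^2$.

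For $\Cov(\tr W,\tr W^2)$ and $\Var\tr W^2$ I would expand each product into a sum over index tuples and group the tuples by their coincidence pattern. For the covariance these are $(i;j,k)$ with $i\in\{j,k\}$ or not; for the variance they are $(j,k;l,m)$ classified by how the two pairs overlap. Each expectation is then a joint moment of inner products of independent standard Gaussian vectors. I would evaluate these by successive conditioning combined with the rotation-invariance trick: for independent $x_j,x_k$, the product $(x_j^\top x_k)^2$ has the law of $\|x_j\|^2 g^2$ with $g\sim\Nor(0,1)$ independent of $x_j$. Moments of the type $\E[(x_j^\top x_k)^2(x_j^\top x_l)^2]$ reduce the same way to $\chi^2$ moments. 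After subtracting the products of means, each overlap pattern contributes a polynomial in $\nu$ times a trace monomial $T_4$, $T_1T_3$, $T_2^2$ or $T_1^2T_2$. Collecting these gives the stated coefficients.

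The main obstacle is this bookkeeping for $\Var\tr W^2$, where seven or so overlap patterns must be tracked, with sums over distinct indices converted into power sums by inclusion–exclusion. As a check I would confirm the result at $\Sigma=I$ against the known $\chi^2$-type moments, and against $p=1$, where $\tr W^2=\lambda_1^2\|x_1\|^4$. A cleaner alternative is to apply Wick's formula directly to the degree-$4$ polynomial, counting connected pairings; I would fall back on it if the case split becomes error-prone.

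For the fourth-moment claim, set $\tilde X_1=\tr W/\nu$ and $\tilde X_2=\tr W^2/\nu^2$. Each fourth central moment equals $3\Var^2+\kappa_4$, so it suffices to show $\kappa_4\to0$ while the variances have finite limits. For $\tr W$ the cumulants are explicit: $\kappa_r(\tr W)=2^{r-1}(r-1)!\,\nu T_r$. Hence $\Var\tilde X_1=2pm_2/\nu\to2ym_2$, while $\kappa_4(\tilde X_1)=48\,pm_4/\nu^3\to0$, using (H2) to bound $m_4$.

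For $\tilde X_2$, the exact variance formula with $T_k=pm_k$ gives $\Var\tilde X_2=O(p\nu^3+p^2\nu^2)/\nu^4=O(1)$. I would bound $\kappa_4(\tr W^2)$ by the connected Wick diagrams of a product of four degree-$4$ Gaussian chaoses. Connectivity forces at least three identifications among the free row and column indices beyond the variance diagrams. Each diagram is then at most a trace monomial of total $p$-weight one or two times a power of $\nu$ that is at least two lower than in $(\Var\tr W^2)^2\asymp\nu^8$. This gives $\kappa_4(\tilde X_2)=O(\nu^{-1})\to0$ under (H1)–(H2).
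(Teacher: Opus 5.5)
Your plan for the five identities is sound, and it takes a different route from the paper's. The paper applies Isserlis' theorem directly to products of traces. Each perfect matching of the entries of $X$ contributes a factor $\nu$ per class of column labels and a factor $\tr\Sigma^{\ell}$ per cycle of row labels. The enumeration is done symbolically by script, including the full fourth moments ($15!!$ matchings for $\E(\tr W^2)^4$), and the limits are read off the exact expressions.

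Your diagonalisation and conditioning on rows can be checked by hand, and it is shorter than you expect. In $\Var\tr W^2=\sum\lambda_j\lambda_k\lambda_l\lambda_m\Cov\big((x_j^\top x_k)^2,(x_l^\top x_m)^2\big)$ only four overlap patterns survive:
\begin{itemize}
\item $j=k=l=m$: covariance $8\nu(\nu+2)(\nu+3)$, multiplicity $1$;
\item a diagonal pair whose index lies in an off-diagonal pair: covariance $4\nu(\nu+2)$, multiplicity $4$;
\item equal off-diagonal pairs: covariance $2\nu(\nu+3)$, multiplicity $2$;
\item off-diagonal pairs sharing exactly one index: covariance $2\nu$, multiplicity $4$.
\end{itemize}
Inclusion--exclusion over these gives exactly the stated coefficients.

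Your two sanity checks cannot separate the $T_1T_3$ and $T_2^2$ coefficients, because both $\Sigma=I$ and $p=1$ see only their sum. The rank-one case $\nu=1$ does separate them: there $\tr W^2=(x^\top\Sigma x)^2$ and $\kappa_r(x^\top\Sigma x)=2^{r-1}(r-1)!\,T_r$, giving $48T_4+32T_1T_3+8T_2^2+8T_1^2T_2$.

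For the limit, writing $\mu_4=3\kappa_2^2+\kappa_4$ and showing $\kappa_4\to0$ is more conceptual than exact enumeration. Your treatment of $\tr W/\nu$ is complete.

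The step whose justification does not hold as written is the bound on $\kappa_4(\tr W^2)$. Connected diagrams are not confined to trace monomials of $p$-weight one or two. At $\nu=1$, $\kappa_4\big((x^\top\Sigma x)^2\big)$ contains $16T_1^4\,\kappa_4(x^\top\Sigma x)=768\,T_1^4T_4$, which has $p$-weight five and cannot cancel. At $\Sigma=I$ the identity $\tr(XX^\top)^2=\tr(X^\top X)^2$ forces terms of order $\nu^5p$ and $\nu p^5$ alike. What must be bounded is the combined exponent.

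Each pairing of the $16$ entries of $X$ in $\prod_{i=1}^4\tr(\Sigma XX^\top\Sigma XX^\top)$ is a map on a closed surface, non-orientable ones included since $X$ is real. It has $V=4$ vertices of degree four and $E=8$ edges. Its value is $\nu^{F_c}\prod_f\tr\Sigma^{\ell_f}$ over its $F_c$ column faces and $F_r$ row faces, hence at most $\nu^{F_c}p^{F_r}\|\Sigma\|^8$. The joint cumulant keeps only connected pairings (the Leonov--Shiryaev diagram formula). For these, Euler's formula gives $F_c+F_r=\chi-V+E\le6$.

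With the fixed number $15!!$ of pairings, (H1) and (H2) then give $\kappa_4(\tr W^2)=O(\nu^6)$. So $\kappa_4(\tr W^2/\nu^2)=O(\nu^{-2})$, rather than the $O(\nu^{-1})$ you state; either rate suffices. The same count with $V=2$, $E=4$ gives $\Var\tr W^2=O(\nu^4)$, which is your bounded-variance claim. Your displayed order for that variance drops the $8\nu T_1^2T_2\asymp\nu p^3$ term, which is harmless at $p\asymp\nu$.
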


\begin{proof}
Isserlis' theorem: $\E\prod\tr(W^{m})$ is a sum over perfect matchings of the entries of
$X$; a matching forces equal row labels within each pair, and each connected cycle of
column labels, alternating matched pairs and factors of $\Sigma$, contributes
$\tr\Sigma^{\text{length}}$. Each row class contributes $\nu$. The enumeration is mechanical
and is carried out exactly in \texttt{experiments/theory/n3\_variance\_derivation.py}
(second moments) and \texttt{n3\_fourth\_moments.py} (fourth moments, $15!!$ matchings for
$\E(\tr W^2)^4$). At $\Sigma=I_p$ the first two lines reduce to the moments used in
Lemma~\ref{lem:nullmean}: $\E\tr W^2=\nu p(\nu+p+1)$.
\end{proof}

\begin{proof}[Proof of Theorem~\ref{thm:prop}]
\emph{Step 0: scale.} $U$ is scale-invariant, so take $\Sigma_p=T_p$, $\tr T_p=p$.

\emph{Step 1: the centring bridge (Gaussian only).} Let $Q$ be an $N\times N$ orthogonal
matrix whose last row is $N^{-1/2}\mathbf 1^\top$, and $w_1,\dots,w_N$ the columns of
$[z_1,\dots,z_N]Q^\top$. For Gaussian $z_i$ the $w_i$ are independent, $w_1,\dots,w_\nu$
are $\Nor_p(0,T_p)$, and $\sum_i(z_i-\bar z)(z_i-\bar z)^\top=\sum_{i\le \nu}w_iw_i^\top$.
Hence $S$ has the law of the \emph{non-centred}
$B_\nu=\frac1\nu T_p^{1/2}X_\nu X_\nu^\top T_p^{1/2}$ with $X_\nu$ a $p\times \nu$ matrix of i.i.d.\
$\Nor(0,1)$ entries. For non-Gaussian $z_i$ the $w_i$ are uncorrelated but not independent,
and this step fails.

\emph{Step 2: asymptotic normality.} We apply \citet[Theorem~9.10]{baisilverstein2010} to
$B_\nu$ with $T_\nu=T_p$ and $f_1(x)=x$, $f_2(x)=x^2$, and check its hypotheses in turn.
(9.7.2): $\frac1{\nu p}\sum_{ij}\E|x_{ij}|^4I(|x_{ij}|\ge\sqrt \nu\eta)
=\E\,x^4I(|x|\ge\sqrt \nu\eta)\to0$ for every $\eta>0$, by dominated convergence for
$x\sim\Nor(0,1)$.
(a): the $x_{ij}$ are independent with $\E x_{ij}=0$, $\E|x_{ij}|^2=1$,
$\E|x_{ij}|^4=3$, and $p/\nu\to y$ by (H1).
(b): $T_p$ is nonrandom, real symmetric and nonnegative definite, bounded in spectral norm
by (H2), with $F^{T_p}\to \mathcal{L}$ proper by (H3).
Analyticity: with $\lambda_{\max}^+=\limsup_p\|T_p\|<\infty$, the interval
\[
  \big[\liminf_p\lambda_{\min}^{T_p}I_{(0,1)}(y)(1-\sqrt y)^2,\ \lambda_{\max}^+(1+\sqrt y)^2\big]
\]
has lower endpoint in $[0,\lambda_{\max}^+]$ when $y<1$ and equal to $0$ when $y\ge1$; in
both cases the interval lies inside $[0,\lambda_{\max}^+(1+\sqrt y)^2]$, and $f_1,f_2$ are
polynomials, hence analytic on the open disc of radius $\lambda_{\max}^+(1+\sqrt y)^2+1$.
The $y\ge1$ case, $p>\nu$, where $S$ is singular, is therefore covered. The indicator would
bind for a function singular at $0$, such as the $\log$ in the likelihood-ratio statistic,
which is in any case undefined for $p>\nu$.
Part (2) of the theorem requires in addition $x_{ij}$ and $T_p$ real and $\E x_{ij}^4=3$,
which hold. It gives deterministic $c_p\in\R^2$ with
$(\tr B_\nu,\tr B_\nu^2)-c_p\Rightarrow G$, a bivariate Gaussian. We use only this
conclusion and not the theorem's formulas for the limiting mean and covariance.

\emph{Step 3: identifying the limit.} Let $(a_p,b_p)=(\tr S-\E\tr S,\ \tr S^2-\E\tr S^2)$.
By Lemma~\ref{lem:tracemoments} their fourth moments are bounded, so $a_p^2$, $b_p^2$ and
$a_pb_p$ are uniformly integrable. Then $\E\tr S-c_{p,1}$ and $\E\tr S^2-c_{p,2}$ are
bounded; along any subsequence where they converge to $d$, $(a_p,b_p)\Rightarrow G-d$ with
$\E(G-d)=\lim\E(a_p,b_p)=0$, so $d=\E G$ on every such subsequence. Hence
$(a_p,b_p)\Rightarrow\Nor(0,V)$, with $V$ the limit of the exact covariances. With
$T_k=p\,m_k$, Lemma~\ref{lem:tracemoments} gives
\begin{align*}
  V_{aa}&=2y\,m_2,\qquad V_{ab}=4\big(y\,m_3+y^2m_2\big),\\
  V_{bb}&=4\big(2y\,m_4+4y^2m_3+y^2m_2^2+2y^3m_2\big),
\end{align*}
with $m_k=m_k(\mathcal{L})$, since (H2) and (H3) give $m_k(\mathcal{L}_p)\to m_k(\mathcal{L})$.

\emph{Step 4: the delta method, with its remainder.} Let
$q_p=\E\tr S^2/p=(1+\frac1\nu)m_2(\mathcal{L}_p)+\frac p\nu$. Then
$U+1=(q_p+b_p/p)(1+a_p/p)^{-2}$ exactly, and
\begin{align*}
  \tfrac \nu2\big(U+1-q_p\big)&=\tfrac{\nu}{2p}\big(b_p-2q_pa_p\big)+R_p,\\
  R_p=\tfrac \nu2\Big[\big(q_p+\tfrac{b_p}p\big)&\Big(\big(1+\tfrac{a_p}p\big)^{-2}-1+\tfrac{2a_p}p\Big)
  -\tfrac{2a_pb_p}{p^2}\Big].
\end{align*}
Since $a_p,b_p=O_P(1)$ and $(1+x)^{-2}-1+2x=O(x^2)$, $R_p=O_P(\nu/p^2)=o_P(1)$. By Step 3
and $q_p\to m_2+y$, the leading term converges to $\Nor(0,\tau^2)$ with
\begin{align*}
  \tau^2&=\tfrac1{4y^2}\big(V_{bb}-4(m_2+y)V_{ab}+4(m_2+y)^2V_{aa}\big)\\
        &=\tfrac1{4y^2}\big(4y^2m_2^2+8y\,\kappa\big)=m_2^2+\tfrac2y\kappa .
\end{align*}
The $y^3$ and $y^2m_3$ terms, which carry the Marchenko--Pastur bulk, cancel identically;
\texttt{n3\_variance\_derivation.py} checks the algebra symbolically. Finally
$U+1-q_p=U-U_{\mathrm{pop},p}-\frac p\nu-\frac1\nu(1+U_{\mathrm{pop},p})$, which is
\eqref{eq:propclt}.
\end{proof}

\begin{corollary}[Power]\label{cor:proppower}
Under (H1)--(H3), let $\beta_p$ be the power of the level-$\alpha$ test that rejects when
$U$ exceeds its exact null quantile at $(\nu,p)$, which the parametric bootstrap estimates.
Let $\tau_p^2=m_2(\mathcal{L}_p)^2+\frac{2\nu}{p}\kappa(\mathcal{L}_p)$ and
\begin{equation}\label{eq:proplaw}
  \hat\beta_p=\Phi\!\left(\frac{\tfrac{\nu+1}{2}\,U_{\mathrm{pop},p}-z_\alpha}{\tau_p}\right).
\end{equation}
(i) \emph{Local alternatives.} If $\frac{\nu+1}2U_{\mathrm{pop},p}\to h\in[0,\infty)$, then
$\mathcal{L}=\delta_1$, $\tau=1$ and $\beta_p\to\Phi(h-z_\alpha)$, for every spectrum satisfying
(H2).
(ii) \emph{Fixed alternatives.} If $U_{\mathrm{pop}}(\mathcal{L})>0$, then $\beta_p\to1$, and
\eqref{eq:propclt} holds with $\tau^2-(1+U_{\mathrm{pop}}(\mathcal{L}))^2=\frac2y\kappa(\mathcal{L})$, which
is strictly positive in the complement model.
(iii) In every case $\beta_p-\hat\beta_p\to0$.
\end{corollary}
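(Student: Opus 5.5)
The plan is to turn the distributional limit \eqref{eq:propclt} of Theorem~\ref{thm:prop} into a statement about rejection probabilities. I would do this by locating the exact null quantile to order $1/\nu$ and then comparing thresholds on the standardised scale.

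\emph{Step 1: the null quantile.} Apply Theorem~\ref{thm:prop} at $\Sigma_p\propto I_p$. There $\mathcal{L}=\delta_1$, $U_{\mathrm{pop}}=0$ and $\tau=1$, so $\nu U-p\Rightarrow\Nor(1,4)$. The limit law is continuous, so by P\'olya's theorem the convergence of distribution functions is uniform. The exact null $(1-\alpha)$ quantile at $(\nu,p)$, which is what the bootstrap targets, therefore satisfies
\[
u_{\alpha,p}=\frac p\nu+\frac{1+2z_\alpha+o(1)}{\nu}.
\]

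\emph{Step 2: the alternative.} Write $h_p=\frac{\nu+1}{2}U_{\mathrm{pop},p}$. Let $Z_p$ denote the left side of \eqref{eq:propclt} divided by $\tau_p$, where $\tau_p\to\tau(\mathcal{L},y)>0$ because $m_2\ge1$. Rearranging gives $\{U>u_{\alpha,p}\}=\{\tau_p Z_p>\frac\nu2(u_{\alpha,p}-U_{\mathrm{pop},p}-\frac p\nu)-\frac12(1+U_{\mathrm{pop},p})\}$. Substituting Step 1, the right side equals $z_\alpha-h_p+o(1)$. By Slutsky and P\'olya again, uniformly in the moving threshold,
\[
\beta_p=1-\Phi\big((z_\alpha-h_p)/\tau_p\big)+o(1)=\hat\beta_p+o(1).
\]
This uniformity is valid whenever $h_p$ stays bounded. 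If $h_p\to\infty$, the threshold on the $Z_p$ scale tends to $-\infty$ while $Z_p$ is tight, so $\beta_p\to1$. Also $\hat\beta_p\to1$ in that case, since $\tau_p$ is bounded under (H2).

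\emph{Step 3: the three clauses.}

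For (i), $h_p\to h<\infty$ forces $U_{\mathrm{pop},p}=m_2(\mathcal{L}_p)-1\to0$. Under (H2) the moments converge, so $\mathcal{L}$ has mean $1$ and variance $0$. Hence $\mathcal{L}=\delta_1$, $\kappa=0$, $\tau=1$, and Step 2 gives $\beta_p\to\Phi(h-z_\alpha)$.

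For (ii), $U_{\mathrm{pop}}(\mathcal{L})>0$ gives $h_p\to\infty$, so $\beta_p\to1$. The identity $\tau^2-(1+U_{\mathrm{pop}})^2=\frac2y\kappa$ is just \eqref{eq:tau} with $m_2=1+U_{\mathrm{pop}}$. Strict positivity follows from the remark after \eqref{eq:kappa}: in the complement model $\kappa=0$ only if $\Sigma\propto I_p$, which is excluded here.

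For (iii) in general, I would argue by subsequences. Under (H2) the $\mathcal{L}_p$ live on a compact interval, so every subsequence has a further subsequence along which $h_p$ converges in $[0,\infty]$. On that subsequence either (i) or (ii) applies, and in both cases $\hat\beta_p$ has the same limit as $\beta_p$. Hence $\beta_p-\hat\beta_p\to0$ along the full sequence.

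The main obstacle is the uniformity in Step 2. Both the threshold and the scale $\tau_p$ move with $p$, and the null quantile comes from a different sequence of laws than the alternative. This is why I insist on continuity of the Gaussian limit and invoke P\'olya's theorem, rather than only convergence at fixed points.
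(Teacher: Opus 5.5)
Your proposal is correct and follows the paper's proof essentially step for step. The shared steps are the null critical value from Theorem~\ref{thm:prop} at $\Sigma_p\propto I_p$, the substituted threshold $z_\alpha-h_p+o(1)$, moment convergence under (H2) forcing $\mathcal{L}=\delta_1$ in case (i), tightness when $h_p\to\infty$, and a subsequence split for (iii). Two small differences are worth noting. First, your appeal to P\'olya's theorem makes explicit the uniformity over the moving threshold that the paper leaves to continuity of $\Phi$, and it in fact yields (iii) directly without needing $h_p$ bounded. Second, in your subsequence step the branch $h_p\to\infty$ is covered by your Step~2 tightness argument rather than by clause (ii), because $h_p\to\infty$ does not force $U_{\mathrm{pop}}(\mathcal{L})>0$.
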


\begin{proof}
At $\Sigma_p\propto I_p$, \eqref{eq:propclt} with $\tau=1$ and continuity of $\Phi$ give a
critical value $u_{\alpha,p}$ with $\frac \nu2(u_{\alpha,p}-\frac p\nu)-\frac12\to z_\alpha$.
Under the alternative, by \eqref{eq:propclt},
$\beta_p=\Pr\big(\tau Z>z_\alpha-\frac{\nu+1}2U_{\mathrm{pop},p}+o(1)\big)+o(1)$ with
$Z\sim\Nor(0,1)$. In (i), $U_{\mathrm{pop},p}=m_2(\mathcal{L}_p)-1\to0$. Under (H2) the
$\mathcal{L}_p$ live on one compact interval, so the weak convergence in (H3)
carries the moments: $m_1(\mathcal{L})=1$ and $m_2(\mathcal{L})=1$. Then
$\mathcal{L}$ has variance $m_2(\mathcal{L})-m_1(\mathcal{L})^2=0$, so it is the
point mass at $1$ and $\kappa(\mathcal{L})=0$. In (ii), $\frac{\nu+1}2U_{\mathrm{pop},p}\to\infty$.
For (iii), split any sequence into subsequences on which
$\frac{\nu+1}2U_{\mathrm{pop},p}$ converges in $[0,\infty]$. Where the limit is finite,
(i) applies and $\tau_p\to1$; where it is infinite, $\beta_p\to1$ because the left side
of \eqref{eq:propclt} is tight, and $\hat\beta_p\to1$ because $\tau_p$ is bounded.
\end{proof}

\paragraph{Scope of the theorem.}
The corollary's part (iii) also holds with $\tau_p$ replaced by $1$ or by
$1+U_{\mathrm{pop},p}$: the limiting \emph{power} does not select $\tau_p$. At local
alternatives the $\kappa$ term is $O(1/p)$ and vanishes, since
$\kappa=\frac1p\sum_jx_j^2(x_j-m_2)^2\le\|T_p\|^2m_2\,U_{\mathrm{pop},p}$ and
$U_{\mathrm{pop},p}=O(1/\nu)$ there; at fixed alternatives it is of
order one and is the correct fluctuation scale of $U$ by \eqref{eq:propclt}, but there the
power tends to one. So $\kappa$ is the finite-$p$ correction at fixed alternatives, and the
advantage of \eqref{eq:proplaw} over its $\kappa=0$ versions at finite $(\nu,p)$ is
established by the surface below, not by the theorem.

Both cited laws are special cases. For one fixed spike $a$, $U_{\mathrm{pop},p}=(a-1)^2/p$
to leading order, and (i) gives $\Phi\big(\nu(a-1)^2/(2p)-z_\alpha\big)$, which is
\eqref{eq:wy} and \citet[eqs.~4.7--4.8]{wangyao2013}; part (i) extends it from finitely
many spikes to every bounded spectrum. The $\kappa$ term for that spike is
$\frac{2\nu}{p}\kappa_p\approx2\nu a^2(a-1)^2/p^2$, which vanishes in their limit but is about
$10a^2/p$ at the power transition: at $a=4$, $p=96$ the law crosses $0.8$ at $N=72$ with
$\tau_p=1.73$ rather than $1$. Letting $y\to\infty$ formally in \eqref{eq:proplaw} sends $\frac2y\kappa\to0$ and
$\tau\to1+U_{\mathrm{pop}}$, which is \eqref{eq:ly}; the theorem itself needs $y<\infty$
and does not prove that limit.

\paragraph{Accuracy at finite $(\nu,p)$.}
We evaluate three laws on all $949$ misspecified cells of the surface---$k=1$ to $p$ equal
spikes, polynomial decay $\alpha\in\{0.5,1,2\}$, exponential decay
$\beta\in\{0.05,0.2,1\}$, $p\in[16,250]$, $N\in[3,3200]$, $\rho\in[1.05,306]$:
\eqref{eq:ly}; \eqref{eq:proplaw}; and \eqref{eq:proplaw} with a third-order term. The third-order term is a finite-sample refinement, not part of the theorem. It replaces the normal by a shifted gamma matched to three moments:
the null's standard deviation and skewness exactly at each $(\nu,p)$, from the Wick
enumeration of $\E(\tr W^2)^k$, $k\le3$, with the independence of Lemma~\ref{lem:nullmean};
and the alternative's mean shift $\frac{\nu+1}2U_{\mathrm{pop}}$, scale $\tau_p$ and third
cumulant from the second-order delta method. Mean absolute errors in the detection rate:

\begin{center}\small\setlength{\tabcolsep}{4pt}
\begin{tabular}{lrrr}
\toprule
cells & \eqref{eq:ly} & \eqref{eq:proplaw} & $+$3rd order \\
\midrule
all $949$                         & 0.019 & 0.016 & 0.011 \\
transition band$^*$ & 0.044 & 0.035 & 0.025 \\
$\nu\ge10$                          & 0.013 & 0.011 & 0.008 \\
$p/\nu<0.1$                         & 0.004 & 0.004 & 0.003 \\
$0.1\le p/\nu<2$                    & 0.011 & 0.009 & 0.008 \\
$p/\nu\ge2$                         & 0.036 & 0.032 & 0.019 \\
few spikes, $k\le5$               & 0.019 & 0.015 & 0.009 \\
many spikes, $5<k<p$              & 0.014 & 0.013 & 0.011 \\
poly.\ or exp.\ decay   & 0.024 & 0.021 & 0.014 \\
\bottomrule
\end{tabular}
\end{center}

$^*$Measured detection in $(0.05,0.95)$. These averages involve no interpolation. The median bias
is $0.000$ for all three. In the transition band at $\nu\ge10$ ($259$ cells) the third-order
error is $0.021$, against $0.019$ expected from binomial noise alone at the surface's
replicate counts if the law were exact: the surface cannot resolve a difference at this
scale. The largest miss at $\nu\ge10$ is a real difference. At $p=96$, $N=100$, polynomial decay $0.5$, the
surface reads $0.315$ and the law $0.411$; re-simulated with $20{,}000$ replicates the cell
is $0.377$ (s.e.\ $0.003$). The surface value was low by noise, and the law is also high,
by $0.035$. Every law fails at $\nu\le4$, and there all are optimistic: at $N=3$, $p=96$
with one spike at $\rho=1.32$ the three predict $0.85$, $0.73$ and $0.67$ against a
measured $0.50$. We state $\nu\ge10$ as the usable range.

Comparing a law inverted exactly with a measured sample size that is log-linear
interpolation between grid points biases the comparison. We therefore interpolate each law
on the same grid. The sample size at power $0.8$ for one spike at $p=96$ is then:

\begin{center}\small
\begin{tabular}{lrrrr}
\toprule
$(\rho-1)p/k$ & measured & \eqref{eq:ly} & \eqref{eq:proplaw} & $+$ third order \\
\midrule
$1$  & 564.6 & 545.8 & 568.6 & 575.4 \\
$3$  & 77.3  & 64.4  & 74.7  & 76.4 \\
$10$ & 14.9  & $\le10$ & 13.5 & 14.6 \\
\bottomrule
\end{tabular}
\end{center}

\eqref{eq:ly} is optimistic, by $3\%$ at $(\rho-1)p/k=1$. At $0.3$, \eqref{eq:proplaw} puts the crossing at $N=5552$ and the power
at $N=1600$ at $0.19$, consistent with no detection on the grid.

Proposition~\ref{prop:sphericity}(d) is the law's own boundary case rather than a separate
exception: $U_{\mathrm{pop}}=0$ exactly when $\Sigma\propto I_p$, so \eqref{eq:ly} returns
$\alpha$ and can return nothing else. Over the $98$ isotropic cells
$\max U_{\mathrm{pop}}=4.9\times10^{-32}$ and the measured median detection is $0.045$.

\paragraph{The collapse of $\Rstd$.}
The law's crossing must be compared with the measured one like for like: the same
series, the same $N>p/2$ rule and the same interpolation, with only the variable that
locates the crossing changed. Otherwise a difference in CV can be mistaken for Monte-Carlo
error. The measured crossing is $1.193$ with CV $0.115$ over $21$ series, and the laws give, over
$22$ series: \eqref{eq:ly} $1.178$, CV $0.087$; \eqref{eq:proplaw} $1.184$, CV $0.110$;
with the third order $1.185$ $[1.019,1.650]$, CV $\mathbf{0.114}$. Paired series by
series, the third-order crossing's median ratio to the measured one is $1.000$, and it
falls below the measured value in $48\%$ of series, against $95\%$ for \eqref{eq:ly}.

The law reproduces the measured spread, so the CV of $0.115$ is real dependence of the
threshold on $N/p$, not Monte-Carlo error. This explains the band of
Appendix~\ref{app:rstd}; the reported threshold remains the measured one.

\emph{Deriving} the band in closed form is open, for a structural reason. $U$ is a linear spectral statistic, governed by
\citet[Theorem~9.10]{baisilverstein2010}, whereas $\Rstd$ is a functional of the
\emph{largest} eigenvalue, governed by Tracy--Widom fluctuations and the BBP transition;
\citet{onatski2013} show the two come apart, in that Tracy--Widom-type tests have
asymptotically trivial power in a contiguity region where the eigenvalue likelihood ratio
does not. A derivation would need the joint behaviour of a linear spectral statistic and
the extreme eigenvalue.

\subsection*{From field error to detection}\label{app:fielderror}

Theorem~\ref{thm:prop} gives power through $U_{\mathrm{pop}}$, a property of the complement
covariance. Two exact identities connect it to the field error.

\begin{proposition}[From field error to $U_{\mathrm{pop}}$]\label{prop:fielderror}
[\textup{\textsc{exact}}; any law of $\xi_\perp$ with finite second moments.] Let the out-of-span
field error $\xi_\perp$ have covariance $K$ in field space, independent of
$\varepsilon\sim\Nor(0,\sigma^2I_n)$, so that $\Cov(z)=\sigma^2(I_p+A)$ with
$\sigma^2A=U_\perp^\top HKH^\top U_\perp$. Write $\mathrm{PR}=(\tr A)^2/\tr(A^2)$ for the
participation ratio of $A$,
\[
  \ell=\frac{\tr(P_\parallel HKH^\top)}{\tr(HKH^\top)},\qquad
  g=\frac{\tr(HKH^\top)}{\tr K}
\]
for the energy-weighted leak into $\range(G)$ and the operator's energy gain on the error.
Then
\begin{align}
  \rho-1&=(1-\ell)\,g\,\frac{\tr K}{p\,\sigma^2},\label{eq:bridge}\\
  U_{\mathrm{pop}}&=\Big(\frac{\rho-1}{\rho}\Big)^2\Big(\frac{p}{\mathrm{PR}}-1\Big).\label{eq:upoppr}
\end{align}
\end{proposition}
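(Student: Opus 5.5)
Both identities follow from trace algebra on the definitions, and I would prove them in turn.

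\emph{Identity \eqref{eq:bridge}.} First I pin down $\rho$ in terms of $A$. From Section~\ref{sec:sphericity}, $\Cov(z)=\sigma^2(I_p+A)$ with $\tr(A)/p=\rho-1$. This also follows directly from the definition of $\sigma_\perp^2$: since $U_\perp^\top U_\perp=I_p$, we have $\tr[\Pperp(HKH^\top+\sigma^2 I)\Pperp]=\tr(U_\perp^\top HKH^\top U_\perp)+\sigma^2 p$. Hence
\[
  p\,\sigma^2(\rho-1)=\tr(U_\perp^\top HKH^\top U_\perp)=\tr(\Pperp HKH^\top),
\]
using cyclicity and $\Pperp=U_\perp U_\perp^\top$. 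Next, write $\tr(\Pperp HKH^\top)=\tr(HKH^\top)-\tr(\Ppar HKH^\top)=(1-\ell)\tr(HKH^\top)$. Substituting $\tr(HKH^\top)=g\tr K$ and dividing by $p\sigma^2$ gives \eqref{eq:bridge}. The degenerate cases $\tr(HKH^\top)=0$ or $\tr K=0$, where $\ell$ or $g$ is undefined, I would treat by convention: both sides are then zero.

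\emph{Identity \eqref{eq:upoppr}.} By scale invariance, take $\Sigma=I_p+A$ in \eqref{eq:upop}, so that
\[
  U_{\mathrm{pop}}=\frac{p\,\tr(\Sigma^2)}{(\tr\Sigma)^2}-1.
\]
Set $a=\tr A=p(\rho-1)$ and $b=\tr A^2=a^2/\mathrm{PR}$. Then $\tr\Sigma=p+a=p\rho$ and $\tr\Sigma^2=p+2a+b$. The numerator becomes
\[
  p(p+2a+b)-(p+a)^2=pb-a^2=a^2\Big(\frac{p}{\mathrm{PR}}-1\Big).
\]
Dividing by $(p\rho)^2$ and using $a^2/(p\rho)^2=((\rho-1)/\rho)^2$ yields \eqref{eq:upoppr}. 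When $A=0$, $\mathrm{PR}$ is undefined, but both sides vanish if the product is read as $0$; I would state that convention.

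The only step needing care is the independence assumption. Cross terms between $H\xi_\perp$ and $\varepsilon$ must vanish for $\Cov(z)$ to take the stated form, and the mean assumption $\Pperp H\,\E[\xi_\perp]=0$ of Section~\ref{sec:setting} keeps $\Cov$ and the second moment in agreement. No Gaussianity is used anywhere, which matches the claim that the result holds for any law of $\xi_\perp$ with finite second moments. I expect no real obstacle beyond bookkeeping, and would check the algebra of the second identity numerically on a random $A$.
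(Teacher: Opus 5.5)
Your proof is correct and follows essentially the paper's argument. The first identity comes from cyclicity of the trace and splitting $\tr(HKH^\top)$ into its $\Ppar$ and $\Pperp$ parts. The second comes from expanding $U_{\mathrm{pop}}$ in $\tr A$ and $\tr A^2$, which is the paper's computation written as the variance of the eigenvalues $1+a_j$ about their mean $\rho$. Your derivation of $\tr A/p=\rho-1$ from the definition of $\sigma_\perp^2$ and your conventions for the degenerate cases are small additions that the paper leaves implicit.
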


\begin{proof}
$\tr A=\tr(U_\perp^\top HKH^\top U_\perp)/\sigma^2=\tr(P_\perp HKH^\top)/\sigma^2
=(1-\ell)\tr(HKH^\top)/\sigma^2$, and $\rho-1=\tr A/p$. For \eqref{eq:upoppr}, the
eigenvalues of $I_p+A$ are $1+a_j$ with mean $\rho$, so
$U_{\mathrm{pop}}=\frac1p\sum_j(a_j-\bar a)^2/\rho^2
=\big(\tr(A^2)/p-(\rho-1)^2\big)/\rho^2$, and $\tr(A^2)/p=(\rho-1)^2p/\mathrm{PR}$.
\end{proof}

On GEBCO both identities hold to rounding in every one of the $36$ cells, the first to
$6.7\times10^{-16}$ and the second to $1.1\times10^{-15}$, with $K$ built from the
evaluation fields and $\tr K/M$ the squared in-span floor.

\paragraph{When the test fires.}
Together with Corollary~\ref{cor:proppower}, the proposition says when the test fires. To
first order, power $\beta$ at level $\alpha$ needs
$\frac{n+1}2U_{\mathrm{pop}}\ge z_\alpha+z_\beta\tau$. Since $((\rho-1)/\rho)^2<1$, this
has two parts. First, the complement excess must be anisotropic enough,
$p/\mathrm{PR}-1>2(z_\alpha+z_\beta\tau)/(n+1)$, and no amount of field error of the same
shape substitutes for that. Second, given the anisotropy, $\rho$ must exceed the $\rho^*$ at which
\eqref{eq:upoppr} reaches the threshold, and by \eqref{eq:bridge} that holds as soon as
\[
  \tr K\;\ge\;\frac{p\,\sigma^2(\rho^*-1)}{(1-\ell)\,g}.
\]
So the test fires when the field error is large enough, \emph{provided} the operator
carries it into the observations ($g$ bounded below) and the fit does not absorb it ($\ell$
bounded away from one).

There is no bound in the field error alone. The worst-case gain
$\min\{\|Hv\|^2/\|v\|^2: v\in\range(K)\}$ is zero here, and necessarily so: $H$ is
$196\times1024$ with a null space of dimension at least $828$, and $\range(K)$ has
dimension $1024-d$, so the two subspaces intersect. What exists is the exact factorisation
\eqref{eq:bridge}, whose two operator factors are computable from $H$, $\Phi$ and any
assumed error covariance. Computed from the basis split rather than the evaluation fields,
with the nominal $\sigma$, \eqref{eq:upoppr} and the local power law predict detection at
$N=100$ with mean absolute error $0.017$ and the right fire/no-fire call in $35$ of $36$
cells. The prediction is conservative by construction: $\Phi$ is fitted to the basis
split, so $\rho-1$ there is a median $0.85$ of its evaluation value.

\paragraph{Leakage across $d$.} As the basis grows the in-span floor
falls from $98$ to $24$ over $d\in\{4,\dots,96\}$, identically under all three operators. A
growing share of what survives leaks into $\range(G)$, where the coefficients absorb it. The
leaked fraction $\|\Ppar H\xi_\perp\|^2/\|H\xi_\perp\|^2$ rises with $d$ under gravity, from
$0.050$ to $0.336$, and under pointwise subsampling, from $0.006$ to $0.239$. $\Upop$ falls
monotonically in $d$ under all three. Field error reaches the test only through the
operator's gain on it and this leak, so large field error need not be detected.

\paragraph{The converse.}
A lower bound on $\rho-1$ gives ``large field error $\Rightarrow$ detection''. It does not
give ``no detection $\Rightarrow$ small field error''. Error in
$\operatorname{null}(H)$ leaves the law of $y$ unchanged. Error that $H$ maps into
$\range(G)$ is absorbed by the coefficients, and under an unconstrained coefficient law it
is indistinguishable from signal. Both
act through \eqref{eq:bridge}, through $g$ and $\ell$ respectively. GEBCO measures both.
Along the gravity sweep at $\sigma=1$~mGal, $\rho-1$ falls by a factor of $598$ from
$d=4$ to $d=96$, and \eqref{eq:bridge} splits that exactly into four factors:
\begin{itemize}
  \item field-error energy $16.3$ (the floor falls from $98.4$ to $24.4$~m);
  \item gain $49.4$, because the low-pass operator barely sees the high-wavenumber error
        that a large basis leaves;
  \item $1-\ell$ only $1.43$, as the leak rises from $0.050$ to $0.336$;
  \item and $p$ from $192$ to $100$, a factor $0.52$ the other way.
\end{itemize}
At $d=96$ that cell is harmed, with the field error at $24.4$~m, and detected in only
$0.20$ of runs at $N=100$: it is the single harmed cell the test misses there. Pointwise
sampling at the same $d$ and the same field error has $2.1\times10^4$ times the gain and
fires at $1.00$. Under the gradient operator the leak spans $0.100$ to $0.241$ over the
same range but falls to $0.051$ at $d=8$ before rising. The leak dose-response is therefore
real but secondary. What mainly
defeats the converse on this problem is the gain, a property of the instrument rather than
of the model, and one that no test on the data could recover.

\section{Real Topography: Design and Further Results}\label{app:real}

This appendix gives the design of the GEBCO study and the detail behind
Section~\ref{sec:real}.

\paragraph{Pre-registration.} The design, the split, the operator constants, the
grid over $d$ and $\sigma$, and the three admissible outcomes were fixed and
committed before any GEBCO file was read. The registered outcomes were: (A) the
complement excess concentrates in few directions; (B) it is close to isotropic,
in which case the diagnostic is blind to real misspecification and the paper
reports that; (C) the answer depends on the operator, concentrated for gravity
and near-isotropic for pointwise subsampling. The registered
prediction was (C). The outcome was (A): $\mathrm{PR}/p<1$ under every operator at
every $d$. The operator dependence appears as an ordering within (A), not as the
isotropy (C) expected.

Four corrections were made after registration and each is recorded with its
date in the committed pre-registration file. The frozen configuration named the
wrong GEBCO type-identifier codes for direct measurement and would have selected
an empty tile set; this was caught before any data was read. The evaluation set
was enlarged from $720$ to $8\,108$ tiles because $\mathrm{PR}/p$ had not
converged at the smaller size; $\mathrm{PR}/p$ rises with sample size, so the
enlargement moves the headline number against the method. Three of the $18$
cells remain unconverged at the $1\%$ tolerance at $8\,108$ tiles and are marked
in Table~\ref{tab:real-placement}. The registered prediction (C) was recorded as
refuted.

\paragraph{Data and split.} The data are the GEBCO 2024 Grid \citep{gebco2024},
which GEBCO places in the public domain; it may be used free of charge with
acknowledgement of the source, and not for navigation. Cells whose GEBCO type
identifier is not $0$, that is
anything other than a direct measurement, are removed before tiling, so no
interpolated or predicted bathymetry enters the field. Whole-block assignment and
the $32$-cell buffer mean that a basis tile and an evaluation tile are never
adjacent and never sample the same continental margin at the seam. A tile-wise
random split would leave the two sets sharing the same ridges and inflate every
agreement reported here.

\paragraph{Operators.} The gravity operator is the linearised Bouguer
attraction of the topographic mass at $14\times14=196$ stations on a plane above
the tile, with $G=6.674\times10^{-11}\,\mathrm{m^3kg^{-1}s^{-2}}$, crustal
density $2670\,\mathrm{kg\,m^{-3}}$ and a cell size of $463\,\mathrm{m}$,
reported in mGal. The gradient operator is its vertical derivative, in E\"otv\"os.
The pointwise operator samples the field at the same $196$ locations. The
noise scale is set in physical units and converted per operator so that the
signal-to-noise ratio is matched across the three.

\paragraph{Absorption.} The GEBCO fits use a single Gaussian coefficient prior,
with covariance set to the principal-component variances of the basis split, and
$\hat\sigma$ is the closed-form complement estimator
$\hat\sigma^2=\|\Pperp y\|^2/(n-r)$ averaged over realisations, so the $(n-r)$ denominator of
Proposition~\ref{prop:absorption} is already carried by the $p$ of the complement
and no finite-sample correction is made. The four arms of
Table~\ref{tab:real-absorption} differ only in which tiles supply the covariance
that defines $\rho$. The first compares a quantity with itself: it returns
$1.0000$ in all $36$ cells and departs from $1$ by a median of $0.08$ times the
sampling error of a single sample. The second, a disjoint half of every block, is
the arm Section~\ref{sec:real} quotes. Geographic separation degrades the
agreement: $\rho$ from alternate whole blocks gives $0.961$, and from the held-out
evaluation blocks $0.930$. On held-out tiles $\rho$ is computed from the second moment
of the fields about the basis-split mean, so it would include any regional mean
shift. The shift lies almost entirely in $\spn(\Phi)$: its out-of-span part carries at
most $2.2\times10^{-4}$ of the complement excess in any cell and arm, and $\rho$
computed about each subset's own mean changes no ratio by more than $10^{-4}$.

\begin{table}[h]
\centering
\caption{$\hat\sigma/(\sigma\sqrt{\rho})$ over the $36$ cells, by which tiles
define $\rho$. Agreement degrades with geographic separation between the two
sides, not with $d$.}
\label{tab:real-absorption}
\small
\begin{tabular}{lrrr}
\toprule
tiles defining $\rho$ & min & median & max \\
\midrule
same tiles & 0.9998 & 1.0000 & 1.0004 \\
disjoint half of each block & 0.9975 & 1.0090 & 1.0164 \\
alternate whole blocks & 0.9281 & 0.9615 & 1.0063 \\
evaluation split & 0.8958 & 0.9296 & 0.9990 \\
\bottomrule
\end{tabular}

\end{table}

\paragraph{Placement and the sweep.} Table~\ref{tab:real-placement} gives the
population quantities, all computed from the evaluation fields with no model
fitted. $\mathrm{PR}/p$ rises with sample size: $15$ of the $18$ cells
have moved by less than $1\%$ over the last doubling of the sample, and the
three that have not are still rising. Table~\ref{tab:real-sweep} gives all $36$
cells, with detection at $N=100$ over $20$ resamples of the evaluation set. Over
the $36$ cells at $N=100$ the mean
absolute gap between measured detection and the power law's prediction is
$0.012$. One cell departs by more than $0.15$: gravity at
$d=64$, where $\Upop=0.040$ and the normal approximation to $U$ is weakest.

\begin{table}[h]
\centering
\caption{Population placement of real topography. $\mathrm{PR}$ is the
participation ratio of the complement excess eigenvalues, $p$ its dimension,
leak the fraction of out-of-basis signal energy that falls inside
$\range(G)$, and $\Upop$ the population value of the statistic. The last column
records whether $\mathrm{PR}/p$ moved by less than $1\%$ over the final
doubling of the evaluation set.}
\label{tab:real-placement}
\small
\begin{tabular}{llrrrrrrl}
\toprule
operator & $d$ & $p$ & PR & PR$/p$ & leak & $\Upop$ & $\rho$ & conv. \\
\midrule
gravity & 4 & 192 & 6.8 & 0.036 & 0.050 & 25.002 & 25.13 & yes \\
gravity & 8 & 188 & 13.0 & 0.069 & 0.109 & 10.783 & 9.60 & yes \\
gravity & 16 & 180 & 16.2 & 0.090 & 0.109 & 5.797 & 4.13 & yes \\
gravity & 32 & 164 & 24.5 & 0.149 & 0.181 & 1.062 & 1.76 & yes \\
gravity & 64 & 132 & 33.1 & 0.251 & 0.249 & 0.040 & 1.13 & yes \\
gravity & 96 & 100 & 39.4 & 0.394 & 0.336 & 0.002 & 1.04 & yes \\
gradient & 4 & 192 & 22.9 & 0.119 & 0.100 & 7.374 & 2556.18 & yes \\
gradient & 8 & 188 & 25.3 & 0.134 & 0.051 & 6.428 & 1861.90 & yes \\
gradient & 16 & 180 & 31.6 & 0.176 & 0.071 & 4.681 & 1073.28 & yes \\
gradient & 32 & 164 & 37.4 & 0.228 & 0.101 & 3.371 & 484.47 & yes \\
gradient & 64 & 132 & 47.3 & 0.358 & 0.161 & 1.767 & 152.53 & yes \\
gradient & 96 & 100 & 49.3 & 0.493 & 0.241 & 1.000 & 68.32 & yes \\
pointwise & 4 & 192 & 20.6 & 0.107 & 0.006 & 8.175 & 124.35 & yes \\
pointwise & 8 & 188 & 32.4 & 0.172 & 0.010 & 4.694 & 83.13 & yes \\
pointwise & 16 & 180 & 48.0 & 0.267 & 0.022 & 2.643 & 52.62 & yes \\
pointwise & 32 & 164 & 64.3 & 0.392 & 0.052 & 1.453 & 31.44 & no \\
pointwise & 64 & 132 & 80.2 & 0.607 & 0.132 & 0.575 & 17.39 & no \\
pointwise & 96 & 100 & 70.9 & 0.709 & 0.239 & 0.348 & 12.62 & no \\
\bottomrule
\end{tabular}

\end{table}

\paragraph{Gradient operator.} Under the gradient operator, field coverage rises
as the reconstruction gets worse. Over the twelve gradient cells field coverage is
$0.821$--$0.954$ and field RMSE is
$10.5$ to $27.0$ times the in-span floor: coverage is $0.930$ at $d=8$ with the
error at $10.5$ times the floor, and $0.954$ at $d=96$ at $27.0$ times. The
absorbed $\hat\sigma$ inflates the posterior spread enough to cover a displaced
mean, so field coverage is uninformative here too. The sphericity test fires at
$1.000$ in all twelve gradient cells at every $N\ge25$.

\begin{table}[t]
\centering
\caption{The $d$-and-$\sigma$ sweep on GEBCO topography. $\sigma$ is the nominal
noise in mGal for the gravity arm and the matched value for the others; f.cov
and o.cov are field and observation coverage at the nominal $0.90$; RMSE/floor
is field RMSE divided by the in-span floor $\|a-P_{\spn(\Phi)}a\|$, the error an
oracle restricted to the same basis would make.}
\label{tab:real-sweep}
\small
\begin{tabular}{llrrrrrrrrrr}
\toprule
operator & $d$ & $\sigma$ & $p$ & $\hat\sigma/\sigma$ & $U$ & $\Rstd$ & $\hat\rho$ & f.cov & o.cov & RMSE/floor & det. \\
\midrule
gravity & 4 & 1 & 192 & 4.69 & 25.004 & 815.17 & 21.26 & 0.586 & 0.929 & 1.02 & 1.00 \\
gravity & 8 & 1 & 188 & 2.82 & 10.786 & 157.28 & 8.16 & 0.619 & 0.933 & 1.02 & 1.00 \\
gravity & 16 & 1 & 180 & 1.87 & 5.796 & 49.30 & 3.57 & 0.698 & 0.939 & 1.03 & 1.00 \\
gravity & 32 & 1 & 164 & 1.27 & 1.077 & 8.91 & 1.57 & 0.757 & 0.946 & 1.06 & 1.00 \\
gravity & 64 & 1 & 132 & 1.06 & 0.056 & 1.69 & 1.06 & 0.837 & 0.963 & 1.24 & 0.95 \\
gravity & 96 & 1 & 100 & 1.02 & 0.015 & 1.06 & 1.01 & 0.880 & 0.971 & 1.52 & 0.20 \\
gravity & 4 & 0.25 & 192 & 18.36 & 26.978 & 5511.39 & 138.43 & 0.583 & 0.929 & 1.02 & 1.00 \\
gravity & 8 & 0.25 & 188 & 10.60 & 13.243 & 1146.97 & 53.84 & 0.612 & 0.931 & 1.02 & 1.00 \\
gravity & 16 & 0.25 & 180 & 6.41 & 9.703 & 378.74 & 21.41 & 0.685 & 0.935 & 1.03 & 1.00 \\
gravity & 32 & 0.25 & 164 & 3.31 & 4.871 & 72.99 & 6.50 & 0.733 & 0.943 & 1.04 & 1.00 \\
gravity & 64 & 0.25 & 132 & 1.71 & 1.376 & 10.28 & 1.97 & 0.786 & 0.961 & 1.06 & 1.00 \\
gravity & 96 & 0.25 & 100 & 1.27 & 0.247 & 3.25 & 1.24 & 0.822 & 0.978 & 1.13 & 1.00 \\
gradient & 4 & 1 & 192 & 46.27 & 7.375 & 249.35 & 16.27 & 0.821 & 0.924 & 16.37 & 1.00 \\
gradient & 8 & 1 & 188 & 38.65 & 6.428 & 173.82 & 12.40 & 0.930 & 0.928 & 10.50 & 1.00 \\
gradient & 16 & 1 & 180 & 29.43 & 4.681 & 82.55 & 8.23 & 0.936 & 0.932 & 12.58 & 1.00 \\
gradient & 32 & 1 & 164 & 20.01 & 3.372 & 38.64 & 4.67 & 0.947 & 0.939 & 15.59 & 1.00 \\
gradient & 64 & 1 & 132 & 11.80 & 1.767 & 11.58 & 2.28 & 0.954 & 0.952 & 21.39 & 1.00 \\
gradient & 96 & 1 & 100 & 8.11 & 1.001 & 5.96 & 1.59 & 0.954 & 0.964 & 27.02 & 1.00 \\
gradient & 4 & 0.25 & 192 & 185.03 & 7.380 & 250.89 & 16.37 & 0.821 & 0.924 & 16.37 & 1.00 \\
gradient & 8 & 0.25 & 188 & 154.57 & 6.434 & 174.78 & 12.47 & 0.930 & 0.928 & 10.50 & 1.00 \\
gradient & 16 & 0.25 & 180 & 117.64 & 4.689 & 83.13 & 8.28 & 0.936 & 0.932 & 12.58 & 1.00 \\
gradient & 32 & 0.25 & 164 & 79.94 & 3.384 & 38.90 & 4.69 & 0.947 & 0.939 & 15.60 & 1.00 \\
gradient & 64 & 0.25 & 132 & 47.05 & 1.789 & 11.74 & 2.30 & 0.954 & 0.952 & 21.39 & 1.00 \\
gradient & 96 & 0.25 & 100 & 32.20 & 1.028 & 6.07 & 1.61 & 0.954 & 0.964 & 27.03 & 1.00 \\
pointwise & 4 & 1 & 192 & 10.37 & 8.178 & 122.84 & 6.24 & 0.647 & 0.928 & 1.00 & 1.00 \\
pointwise & 8 & 1 & 188 & 8.32 & 4.693 & 55.62 & 4.35 & 0.727 & 0.930 & 1.01 & 1.00 \\
pointwise & 16 & 1 & 180 & 6.63 & 2.643 & 21.09 & 2.94 & 0.787 & 0.933 & 1.01 & 1.00 \\
pointwise & 32 & 1 & 164 & 5.24 & 1.454 & 9.21 & 2.01 & 0.835 & 0.940 & 1.03 & 1.00 \\
pointwise & 64 & 1 & 132 & 4.08 & 0.575 & 3.75 & 1.39 & 0.875 & 0.952 & 1.09 & 1.00 \\
pointwise & 96 & 1 & 100 & 3.55 & 0.351 & 2.91 & 1.23 & 0.897 & 0.962 & 1.18 & 1.00 \\
pointwise & 4 & 0.25 & 192 & 41.30 & 8.301 & 129.26 & 6.52 & 0.647 & 0.928 & 1.00 & 1.00 \\
pointwise & 8 & 0.25 & 188 & 33.04 & 4.801 & 58.63 & 4.54 & 0.726 & 0.930 & 1.01 & 1.00 \\
pointwise & 16 & 0.25 & 180 & 26.23 & 2.740 & 22.30 & 3.05 & 0.786 & 0.933 & 1.01 & 1.00 \\
pointwise & 32 & 0.25 & 164 & 20.62 & 1.543 & 9.74 & 2.07 & 0.834 & 0.939 & 1.03 & 1.00 \\
pointwise & 64 & 0.25 & 132 & 15.85 & 0.642 & 3.99 & 1.42 & 0.874 & 0.951 & 1.08 & 1.00 \\
pointwise & 96 & 0.25 & 100 & 13.65 & 0.407 & 3.13 & 1.26 & 0.896 & 0.961 & 1.16 & 1.00 \\
\bottomrule
\end{tabular}

\end{table}

\paragraph{Spatial dependence.} Neighbouring evaluation tiles sit about $15$~km
apart, and the raw fields are strongly correlated at that range. The test reads
only complement residuals. For tiles $i\neq j$,
$\Cov(z_i,z_j)=U_\perp^\top H\,\Cov(\xi_{\perp,i},\xi_{\perp,j})\,H^\top U_\perp$
exactly, because $U_\perp^\top H\Phi=0$, so in-span content contributes nothing to
the cross-covariance under either hypothesis. Over all $18$
operator-and-$d$ cells, Table~\ref{tab:real-autocorr} separates first- from
second-moment dependence in the noiseless complement signal. The normalised
cross-covariance,
$\overline{(z_i-\bar z)^\top(z_j-\bar z)}/\overline{\|z-\bar z\|^2}$, is the
quantity the identity concerns and the one the sample covariance weights. Below
$20$~km it lies between $-0.012$ and $+0.026$, against $0.963$ for
the raw fields. The direction column is the mean cosine between centred vectors;
it is unweighted, so the many low-energy tiles dominate it. The dependence lives
in the log-energies. Tiles close together share a roughness level that no linear
projection removes: their log complement energies correlate at $0.859$ to $0.894$,
and still at $0.625$ to $0.690$ at $160$--$320$~km. This costs effective sample
size under the alternative. Under $H_0$ the complement is $U_\perp^\top\varepsilon$,
and an exact-null arm that keeps the tiles and replaces only the out-of-span part
is calibrated (Table~\ref{tab:real-thinning}).

\begin{table}[t]
\centering
\caption{Dependence between pairs of evaluation tiles by great-circle
separation. Complement columns give the range over the $18$ operator-and-$d$
cells. First moments decorrelate; log-energies do not.}
\label{tab:real-autocorr}
\small
\begin{tabular}{lrrrrrr}
\toprule
 & & \multicolumn{2}{c}{cross-covariance} & \multicolumn{2}{c}{direction} & log-energy \\
\cmidrule(lr){3-4}\cmidrule(lr){5-6}\cmidrule(lr){7-7}
separation (km) & pairs & raw & complement & raw & complement & complement \\
\midrule
0--20 & 2740 & $+0.963$ & $-0.012$ to $+0.026$ & $+0.871$ & $+0.033$ to $+0.117$ & $+0.859$ to $+0.894$ \\
20--40 & 6653 & $+0.854$ & $-0.019$ to $+0.003$ & $+0.836$ & $+0.031$ to $+0.099$ & $+0.770$ to $+0.801$ \\
40--80 & 23200 & $+0.787$ & $-0.003$ to $+0.005$ & $+0.797$ & $+0.034$ to $+0.101$ & $+0.706$ to $+0.750$ \\
80--160 & 53849 & $+0.686$ & $-0.001$ to $+0.002$ & $+0.749$ & $+0.032$ to $+0.089$ & $+0.632$ to $+0.673$ \\
160--320 & 59707 & $+0.511$ & $-0.003$ to $+0.002$ & $+0.662$ & $+0.038$ to $+0.097$ & $+0.625$ to $+0.690$ \\
320--1000 & 69308 & $+0.372$ & $-0.001$ to $+0.001$ & $+0.496$ & $+0.023$ to $+0.049$ & $+0.045$ to $+0.104$ \\
1000--20000 & 2908293 & $-0.041$ & $-0.000$ to $+0.000$ & $+0.051$ & $+0.012$ to $+0.026$ & $-0.038$ to $-0.034$ \\
\bottomrule
\end{tabular}

\end{table}

Table~\ref{tab:real-thinning} thins the evaluation set to a minimum separation
and pairs each thinned subset with a random subset of the same size, which
separates dependence from sample size. Thinning to a minimum
separation of $120$~km discards $98.5\%$ of the tiles and leaves detection at
$1.000$ in four of the five cells thinned; those four are
saturated at every level. In the informative cell, gravity at $d=96$, thinning
detects better than a size-matched subset where the comparison discriminates,
$0.63$ against $0.47$ at $30$~km and $0.38$ against $0.28$ at $60$~km; at the two
levels where it is lower, the difference is $0.05$, about one standard error at
$60$ trials. At one tile per block, $N=23$ falls below the $N>p/2$ condition, so
$\Rstd$ is undefined there, while the test on $U$ is unaffected.

\begin{table}[h]
\centering
\caption{Detection under spatial thinning, as thinned$/$matched-$N$. Each
column is one (operator, $d$) cell. The exact-null false-positive rate over
these six levels and the three complement dimensions has median $0.040$ against
a nominal $0.05$.}
\label{tab:real-thinning}
\small
\begin{tabular}{lrrrrrr}
\toprule
thinning & $N$ & grav~32 & grav~64 & grav~96 & grad~96 & poin~96 \\
\midrule
$\ge 0$ km & 8108 & 1.00/1.00 & 1.00/1.00 & 1.00/1.00 & 1.00/1.00 & 1.00/1.00 \\
$\ge 15$ km & 3035 & 1.00/1.00 & 1.00/1.00 & 1.00/0.95 & 1.00/1.00 & 1.00/1.00 \\
$\ge 30$ km & 1065 & 1.00/1.00 & 1.00/1.00 & 0.63/0.47 & 1.00/1.00 & 1.00/1.00 \\
$\ge 60$ km & 336 & 1.00/1.00 & 1.00/1.00 & 0.38/0.28 & 1.00/1.00 & 1.00/1.00 \\
$\ge 120$ km & 119 & 1.00/1.00 & 1.00/0.98 & 0.12/0.17 & 1.00/1.00 & 1.00/1.00 \\
one per block & 23 & 1.00/0.98 & 0.53/0.65 & 0.10/0.15 & 1.00/1.00 & 1.00/1.00 \\
\bottomrule
\end{tabular}

\end{table}

\paragraph{Competitors.} Table~\ref{tab:real-competitors} counts how often each
diagnostic fires over the $36$ cells. The four unharmed cells still have
$\rho>1$, so the right-hand column is not a false-positive rate and specificity
is not estimable on these data. Cross-validated predictive coverage and held-out
observation coverage both miss the entire gradient arm below $d=96$.

\begin{table}[h]
\centering
\caption{What fires, on real topography. The per-index check is shown after
null calibration; before it, the largest of $p$ binomial coverages is $0.10$
from nominal on every draw under correct specification, and the uncalibrated
version fires in all $36$ cells.}
\label{tab:real-competitors}
\small
\begin{tabular}{lrr}
\toprule
diagnostic & fires when harmed & fires when not \\
\midrule
held-out observation coverage & 5/32 & 2/4 \\
$K$-fold cross-validation & 8/32 & 2/4 \\
per-index calibration, null-calibrated & 0/32 & 0/4 \\
full-covariance goodness of fit & 32/32 & 4/4 \\
sphericity test & 31/32 & 4/4 \\
\bottomrule
\end{tabular}

\end{table}

The full-covariance goodness-of-fit test is the competitor closest to ours.
Fixing $\Sigma_y=GSG^\top+\hat\sigma^2
I$ from the basis split and testing held-out observations against it is
correctly calibrated, with $0.055$ rejection at a nominal $0.05$ over $400$ draws
from the fitted model, and it fires in every one of the $36$ cells. It is not
scale-free: drawing from a covariance with the correct shape and the field
variance inflated by $15\%$ makes it reject on $99\%$ of draws. The
basis-to-evaluation gap in the excess is $15.7\%$ in median across the $36$
cells, so a harmless scale error of exactly that size is present everywhere, and
the test's $36/36$ does not show that it found the misspecification. In the
complement the test is exactly $\|Z\|_F^2/\hat\sigma^2_{\mathrm{ref}}$.
Referred to a $\hat\sigma$ fitted on the same sample it fires in no
cell, as Proposition~\ref{prop:forced}(ii) requires. Referred to a disjoint half of
the evaluation split it still fires at a median rate of
$0.94$. The dispersion of per-tile energy alone predicts that rate to a
median of $0.017$, against $0.55$ from the covariance anisotropy.

\paragraph{Range-block bound.} The bound
$\sigma^2\le\lambda_{\min}(\Cov(U_\parallel^\top y))$ has slack at least $1.04$ and
median $8.8$ over the $36$ cells, so it never binds. For the single Gaussian used
here the fitted range covariance is the sample one, so this is also the
per-component condition of Proposition~\ref{prop:absorption}; because $\hat\sigma$ is
the closed-form estimator, the bound says when a free single-Gaussian fit would
return the same $\hat\sigma$. Its minimum, at $d=96$, shows
that the dominance condition of Proposition~\ref{prop:absorption} is only marginal
there.

\section{Experimental Setup}\label{app:setup}

This appendix specifies the synthetic benchmarks and training behind
Section~\ref{sec:experiments}. All training uses the closed-form E-step available
for linear operators. A Metropolis-adjusted Langevin E-step was $40$--$55\times$
slower for results identical to three decimals. All experiments ran on one
workstation with an AMD Ryzen 9 7950X CPU and an NVIDIA GeForce RTX 4090 GPU, in
double precision. For $N$ realisations, complement dimension $p$ and $B$ bootstrap
replicates, the test costs $O(Nnp)$ for the projection, $O(Np^2)$ for the sample
covariance and $O(p^3)$ for its eigendecomposition, and the bootstrap repeats the
last two $B$ times, for $O(Nnp+B(Np^2+p^3))$ time and $O(Np+p^2)$ memory.

\begin{table}[t]
\caption{Benchmarks, ordered by misspecification ratio. Two are in-basis.
Fields are generated as $a=\Phi z+\xi$ with $\xi$ stationary Mat\'ern-1/2, so
out-of-basis content is present by construction on the remaining seven.}
\label{tab:benchmarks}

\centering
\begin{tabular}{lllrrr}
\toprule
Problem & operator & $d/K$ & $n_{\text{obs}}$ & $\sigma$ & $\rho$ \\
\midrule
\texttt{syn\_inbasis} & integral   & 4/2  & 100 & 0.01  & 1.00 \\
Darcy (2D)            & 2D PDE     & 9/2  & 144 & 0.005 & 1.00 \\
Low-SNR               & integral   & 4/2  & 100 & 0.1   & 1.09 \\
Darcy (1D)            & 1D PDE     & 4/2  & 100 & 0.01  & 3.01 \\
High-dim              & integral   & 12/3 & 100 & 0.01  & 4.32 \\
Bimodal               & integral   & 4/2  & 100 & 0.01  & 9.54 \\
Trimodal              & integral   & 4/3  & 100 & 0.01  & 9.54 \\
Sensors               & 20 sensors & 4/2  & 20  & 0.01  & 16.15 \\
Deconv                & smoothing  & 5/2  & 100 & 0.01  & 195.26 \\
\bottomrule
\end{tabular}

\end{table}

\paragraph{Generation.}
Every benchmark is produced by the shipped generator, whose fields lie at a
distance of $1.4$--$3.6\times10^{-3}$ from $\spn(\Phi)$. Six of seven benchmarks
other than Trimodal regenerate bit-exactly. Trimodal shares all generator settings
with Bimodal except $K$ and component separation. Its field RMSE is $0.3672$ and
its $\hat\sigma/\sigma$ is $2.98$, and it is statistically indistinguishable from
Bimodal.

\paragraph{Baselines.}
\textsc{gpr-spectral} is GP regression with the basis-restricted kernel under
test (rank $d$), so it is numerically the same estimator as the model.
\textsc{gpr-fullrank} uses a stationary Mat\'ern kernel at full rank.

\begin{figure}[t]
\centering
\includegraphics[width=\textwidth]{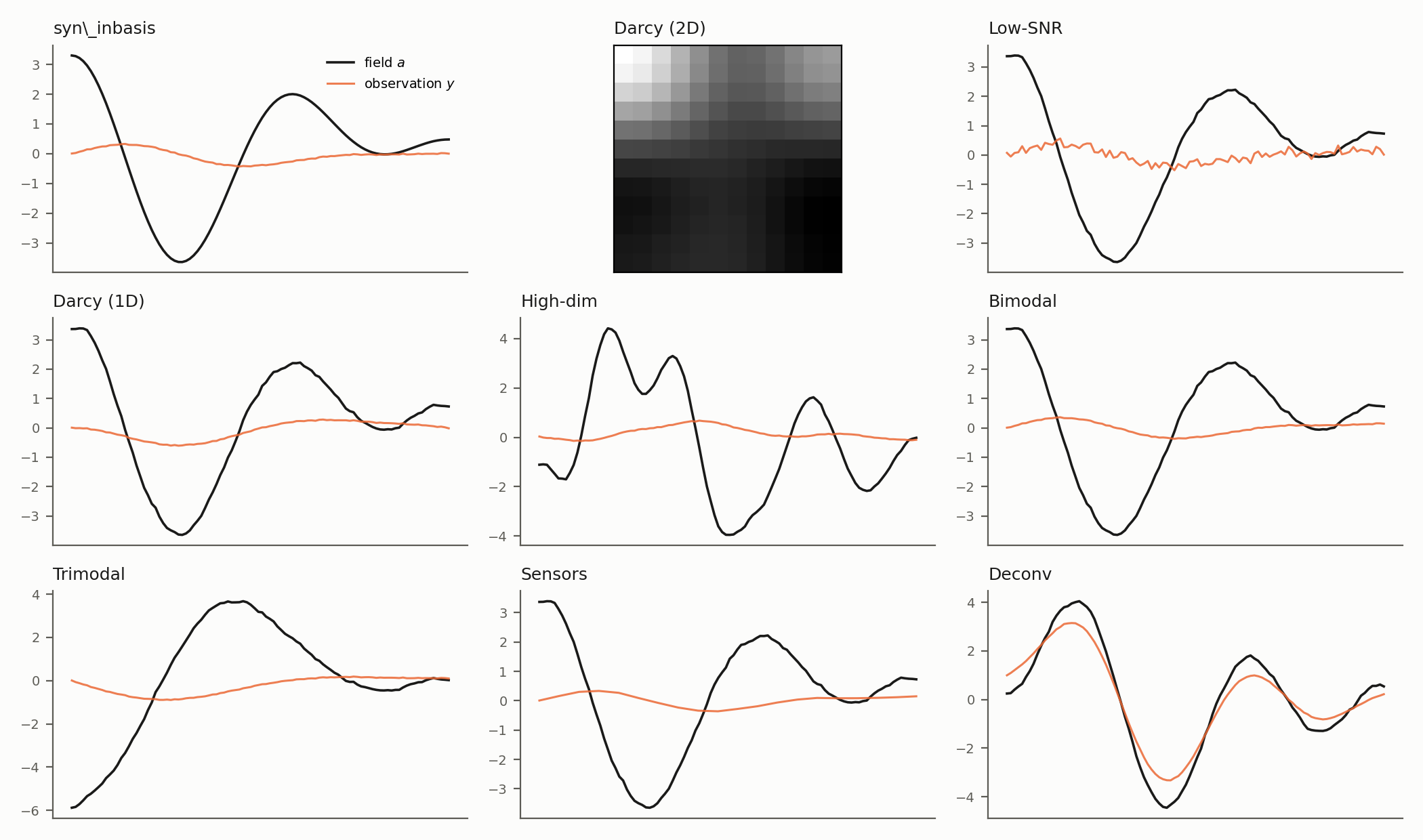}
\caption{The nine benchmarks: the true field $a$ and the observation $y$ it
produces under each operator. Darcy (2D) is shown as an image; the rest are 1D.}
\label{fig:suite}
\end{figure}

\section{Scoping and the Constrained Prior}\label{app:scoping}

This appendix supports the Discussion's statement that the restriction helps
in-basis and hurts out of basis, and reports a constrained prior as a candidate
remedy.

\begin{table}[t]
\centering
\small
\begin{tabular}{l|cccc|ccc}
\toprule
& \multicolumn{4}{c|}{field RMSE} & \multicolumn{3}{c}{field cov$_{90}$}\\
Problem & ours & \textsc{gpr-spec} & \textsc{gpr-full} & in-span floor & ours & \textsc{gpr-spec} & \textsc{gpr-full}\\
\midrule
\texttt{syn\_inbasis} & 0.0190 & \textbf{0.0189} & 0.1547 & 0.0028 & 0.859 & 0.870 & 0.994\\
Darcy (2D) & \textbf{0.1954} & 0.2187 & 0.7717 & 0.0001 & 0.874 & 0.874 & 0.922\\
Low-SNR & 0.4614 & 0.4658 & \textbf{0.4554} & 0.1618 & 0.662 & 0.657 & 0.985\\
Darcy (1D) & 0.4449 & 0.4448 & \textbf{0.3153} & 0.1618 & 0.180 & 0.187 & 0.963\\
High-dim & 0.6889 & 0.6768 & \textbf{0.3686} & 0.1377 & 0.768 & 0.795 & 0.962\\
Bimodal & 0.4419 & 0.4404 & \textbf{0.1583} & 0.1618 & 0.377 & 0.364 & 0.993\\
Trimodal & 0.4419 & 0.4405 & \textbf{0.1499} & 0.1618 & 0.377 & 0.366 & 0.994\\
Sensors & 0.4363 & 0.4407 & \textbf{0.1771} & 0.1618 & 0.693 & 0.592 & 0.995\\
Deconv & 0.1529 & 0.1527 & \textbf{0.0787} & 0.1526 & 0.318 & 0.093 & 1.000\\
\bottomrule
\end{tabular}

\caption{Field RMSE and coverage of the rank-$d$ prior and the two GP
baselines, ordered by $\rho$ as in Table~\ref{tab:benchmarks}. The in-span floor
is $\|a-P_{\spn(\Phi)}a\|_{\mathrm{rms}}$, the out-of-span residual norm alone.}
\label{tab:scoping}
\end{table}

\begin{figure}[t]
\centering
\includegraphics[width=\textwidth]{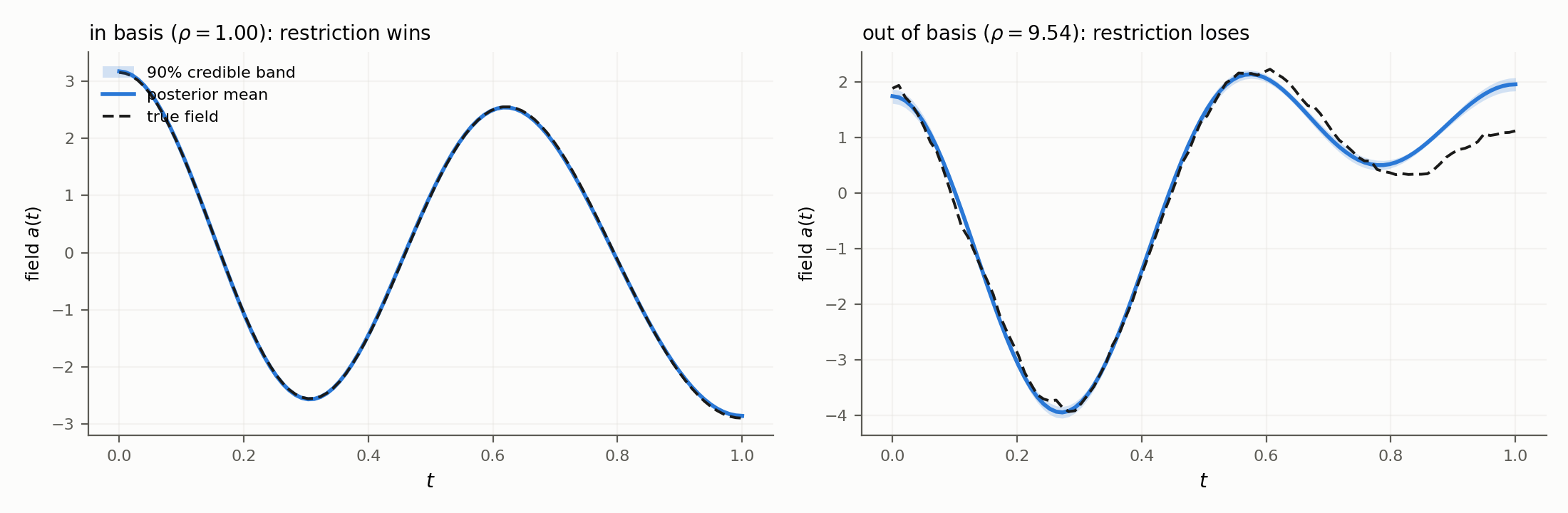}
\caption{The rank-$d$ prior on \texttt{syn\_inbasis} ($\rho=1.00$) and Bimodal
($\rho=9.54$): same operator family, model and nominal level. Each panel shows the
test case at the median field RMSE among the 100 held-out cases for its
configuration.}
\label{fig:invsout}
\end{figure}

\paragraph{Correspondence with $\rho$.}
On our nine benchmarks, the restriction wins on exactly the two problems with
$\rho\approx 1$. They span
two operator families, a 1D integral operator and a 2D elliptic PDE, so the effect
comes from being in-basis rather than from a particular operator. The margins are
$8.2\times$ on \texttt{syn\_inbasis} and $3.9\times$ on Darcy (2D). On the seven
out-of-basis problems the full-rank prior wins by $1.02\times$ to $2.95\times$.
What transfers is the sign of the effect, and the fact that a statistic
computable without ground truth separates the two regimes. The suite spans nine
problems, so it does not locate the boundary in general, and we propose no
threshold on $\rho$.

\paragraph{Coverage of the two priors.}
The full-rank prior covers at $0.92$--$1.00$ throughout, that is, conservatively;
the rank-$d$ prior covers at $0.18$--$0.87$. This is the oversmoothing/rougher
dichotomy of \citet{knapik2011bayesian}. It is visible even on
\texttt{syn\_inbasis}, where the rank-$d$ prior is
correct ($0.859$) and the full-rank prior over-covers ($0.994$).

\paragraph{Composition of the excess over the floor.}
For any estimator confined to $\spn(\Phi)$,
$\|a-\hat a\|^2 = \text{floor}^2 + \|P_\Phi a-\hat a\|^2$ exactly, case by case. We
define the in-span excess as the root mean square of $\|P_\Phi a-\hat a\|$ over the
held-out cases. Because field RMSE and the floor are root mean squares over the
same cases, the excess equals $\sqrt{\text{RMSE}^2-\text{floor}^2}$ for each seed,
and we report its mean over seeds. On Bimodal it is $0.411$ for our model and
$0.410$ for \textsc{gpr-spectral}. \textsc{gpr-spectral} has no latent variable,
so the shared excess is not latent-estimation error. The basis restriction
contributes only the $0.14$--$0.16$ floor; the rest is failure to estimate the
component \emph{inside} the basis, where both estimators fail identically. This is the $\sigma_\parallel^2$ term of
Proposition~\ref{prop:coverage}, measured directly: out-of-span field content
passes through $H$ and lands inside $\range(G)$ as correlated error.

\subsection*{The constrained prior}\label{app:constrained}

A candidate remedy enforces the operator-induced linear constraint at rank $M-c$
instead of collapsing to rank $d$ \citep{jidling2017linearly}; it is the
constrained-prior curve at rank $M-1$ in Figure~\ref{fig:sweep}(b). We use the
integral constraint $\int_0^1\phi_j=0$ for the 1D problems and a Dirichlet
boundary constraint for Darcy (2D), over 31 sweep points and 8 named benchmarks,
10 seeds and both $\sigma$ arms (780 runs).

\begin{table}[t]
\centering
\small
\begin{tabular}{lrrrrr}
\toprule
Problem & RMSE constr. & RMSE \textsc{gpr-full} & ratio & cov$_{90}$ constr. & cov$_{90}$ rank-$d$ \\
\midrule
Darcy (2D) & 0.1950 & 0.7717 & \textbf{0.25} & 0.881 & 0.874 \\
Low-SNR & 0.4630 & 0.4554 & 1.02 & 0.659 & 0.662 \\
Deconv & 0.1490 & 0.0787 & 1.89 & 0.596 & 0.318 \\
High-dim & 0.6984 & 0.3686 & 1.89 & 0.765 & 0.768 \\
Sensors & 0.4372 & 0.1771 & 2.47 & 0.744 & 0.693 \\
Darcy (1D) & 0.8055 & 0.3153 & 2.55 & 0.978 & 0.180 \\
Bimodal & 0.9684 & 0.1583 & 6.12 & 0.965 & 0.377 \\
Trimodal & 0.9675 & 0.1499 & 6.46 & 0.965 & 0.377 \\
\bottomrule
\end{tabular}

\caption{Constrained prior at rank $M-1$, trained to convergence. Constraint
satisfied to $5.2\times10^{-15}$ over all 780 runs.}
\label{tab:constrained}
\end{table}

The constrained prior over-covers rather than under-covering
(Table~\ref{tab:constrained}), and pinning $\sigma$ at truth leaves this intact
(field coverage $0.948$ against $0.965$ with $\sigma$ free on Bimodal), so it is
the rougher-prior half of the classical dichotomy \citep{knapik2011bayesian},
not absorption. It does not restore accuracy: field RMSE is $1.9$--$6.5\times$
the full-rank baseline on six of eight benchmarks, although the constrained
kernel has rank $99$ against the baseline's $100$. Its coverage is conservative
rather than correct, so we recommend it only as the safe failure direction, with
the physical constraint satisfied exactly.

\section{Negative Results}\label{app:negative}

This appendix shows that the coefficient prior adds little beyond the restriction
(Section~\ref{sec:setting}) and gives the in-span floor cited in the Discussion.
Our model and \textsc{gpr-spectral} agree to
within $0.003$ field RMSE on five of nine benchmarks, and to three significant
figures on \texttt{syn\_inbasis} ($0.0190$ versus $0.0189$;
Table~\ref{tab:scoping}). The wins of Appendix~\ref{app:scoping} therefore belong
to the \emph{rank-$d$ basis restriction}, which plain GP regression with the same
kernel achieves equally. Across all nine benchmarks the latent apparatus buys a
measurable improvement only on Darcy (2D), $0.196$ against $0.220$, about $11\%$.
The results are therefore stated for a general basis-restricted Gaussian prior.
The coefficient prior affects what the model reports about latent structure, but
not reconstruction, coverage or either diagnostic.

\paragraph{The in-span floor.} The in-span floor
$\|a-P_{\spn(\Phi)}a\|_{\mathrm{rms}}$ is the out-of-span residual alone, the best any
basis-restricted estimator could achieve. On the Bimodal benchmark that floor is $0.162$,
and the in-span excess over it (Appendix~\ref{app:scoping}) is $0.411$ for our model
and $0.410$ for Gaussian process regression with the same kernel. That excess lies inside $\spn(\Phi)$, where no statistic computed on the complement
can reach it. The floor is defined for basis-restricted estimators only, and
\textsc{gpr-fullrank} reaches $0.158$---below the
floor of $0.162$---because it is not confined to $\spn(\Phi)$. The gap is therefore a cost of
the restriction rather than of either fit: the two restricted estimators pay it identically
and the unrestricted one does not pay it.

\section{Status of Every Result}\label{app:status}

{\footnotesize
\setlength{\LTcapwidth}{\textwidth}
\begin{longtable}{>{\raggedright\arraybackslash}p{0.30\textwidth}%
                  >{\raggedright\arraybackslash}p{0.17\textwidth}%
                  >{\raggedright\arraybackslash}p{0.44\textwidth}}
\caption{Every result in the paper with its status, extending Table~\ref{tab:status}. \textsc{exact}: holds at every finite
$(N,p)$, no asymptotics, no unstated condition. \textsc{approximation}: exact in a regime we
name, approximate outside it, departure measured. \textsc{asymptotic}: holds in a stated
limit and says nothing at finite $(N,p)$---it never stands in for an exact row.
\textsc{empirical}: measured over a grid we state, not extrapolated beyond it.
\textsc{measured}: a property of our own suite, reported as evidence of an effect's size
rather than as a law.}\label{tab:status-full}\\
\toprule
result & status & condition it rests on \\
\midrule
\endfirsthead
\multicolumn{3}{l}{\emph{Table~\ref{tab:status-full}, continued}}\\
\toprule
result & status & condition it rests on \\
\midrule
\endhead
\midrule
\multicolumn{3}{r}{\emph{continued on the next page}}\\
\endfoot
\bottomrule
\endlastfoot
\multicolumn{3}{l}{\emph{Exact at every finite $(N,p)$}}\\*
Prop.~\ref{prop:sphericity}(a), scale invariance & \textsc{exact} & none; algebraic \\
Prop.~\ref{prop:sphericity}(b), the null model $\Cov(z)=\sigma^2I_p$ & \textsc{exact} & Gaussian $\varepsilon$; $\xi_\perp=0$ and the model mean in $\range(G)$ \\
Calibration of that null & \textsc{empirical} & parametric Monte-Carlo bootstrap, $1000$ replicates at each $(N,p)$; not exact \\
Prop.~\ref{prop:sphericity}(c), $\lambda_1=\sigma^2(1+(\rho-1)p/k)$ & \textsc{exact} & $A$ has $k$ equal nonzero eigenvalues \\
Prop.~\ref{prop:forced}, given $\|Z\|_F^2$ every exact test on the complement is a test of direction & \textsc{exact} & Gaussian $\varepsilon$; completeness and Neyman structure \citep{lehmann2005testing} \\
Prop.~\ref{prop:sphericity}(d), no power under isotropy & \textsc{exact} & Gaussian $\xi_\perp$ and $A=(\rho-1)I_p$; a statement about $\operatorname{median}(\Rstd\mid H_0)$, not its mean \\
Likelihood separation $\ell=\ell_\parallel+\ell_\perp$ (Prop.~\ref{prop:absorption}) & \textsc{exact} & linear $H$, basis-restricted covariance \\
Lem.~\ref{lem:nullmean}, $\E[U\mid H_0]=\frac{(p-1)(p+2)}{\nu p+2}$ & \textsc{exact} & Gaussian $\varepsilon$; every $(\nu,p)$, $p>\nu$ included; median $0.041\%$ over $59$ pairs \\
\midrule
\multicolumn{3}{l}{\emph{True in a limit, not at finite $(N,p)$}}\\*
Prop.~\ref{prop:consistency}, power $\to1$ & \textsc{asymptotic} & $N\to\infty$ at fixed $p$; $\Sigma\not\propto I_p$ \\
Null law $\nu U-p\to\Nor(\E[x^4]-2,4)$, $\E[x^4]$ the fourth moment of a standardised entry & \textsc{asymptotic}, cited & any $p/\nu\to[0,\infty]$; \citet{ledoitwolf2002,wangyao2013,liyao2016} \\
Power $\Phi\big(\tfrac{(\nu+1)U_{\mathrm{pop}}-2z_\alpha}{2(1+U_{\mathrm{pop}})}\big)$ & \textsc{asymptotic}, cited & \citet{wangyao2013} ($p/\nu\to c$, few spikes), \citet{liyao2016} ($p/\nu\to\infty$, general $\Sigma$) \\
Thm.~\ref{thm:prop}, $U$ at proportional $p/\nu$, and Cor.~\ref{cor:proppower} & \textsc{asymptotic} & Gaussian data; $p/\nu\to y\in(0,\infty)$; $\sup_{p}\,\|T_p\|<\infty$; via \citet[Thm.~9.10]{baisilverstein2010} \\
Law \eqref{eq:proplaw} and its third-order refinement at finite $(\nu,p)$ & \textsc{empirical} & our $949$ Gaussian cells: mean abs.\ error $0.016$ and $0.011$, against $0.019$ for \eqref{eq:ly} \\
\midrule
\multicolumn{3}{l}{\emph{Approximations valid in a stated regime}}\\*
Prop.~\ref{prop:absorption}, $\hat\sigma/\sigma=\sqrt\rho$ & \textsc{approximation} & in-span prior scale dominates the noise; implemented update normalises by $n$; matched to $3.4\%$, or $0.5\%$ with the $\sqrt{(n-r)/n}$ factor (Rem.~\ref{rem:breakdown}) \\
Prop.~\ref{prop:coverage}, two-term coverage & \textsc{approximation} & likelihood-dominated, $\sigma$ small, $n\gg d$; up to a constant $1.22$ on $\hat v/\Var(e)$ (Rem.~\ref{rem:offset}) \\
Cor.~\ref{cor:identity}, the free/pinned identity & \textsc{approximation} & inherits the above; within $9\%$ for $\rho\lesssim10$, $28\%$ at $\rho=306$ (Rem.~\ref{rem:regime}) \\
\midrule
\multicolumn{3}{l}{\emph{Empirical regularities, measured over the grid stated in Appendix~\ref{app:power}}}\\*
$\Rstd$ at detection $0.8$: a band, $1.17$ ($N>2p$) to $1.26$ ($p/2<N\le p$), median $1.19$ & \textsc{empirical} & $N>p/2$, $p\in[16,250]$; CV $0.115$ over $21$ series; \textbf{not a universal cutoff};\ a consequence of the power law (\S\ref{app:asymptotics}) \\
$\hat\rho$ accurate to $2.1\%$ over $\rho\in[1,5]$ & \textsc{empirical} & our benchmarks, where $k\approx1$; needs $N>p$ and $k<p/2$ \\
$\hat\rho-1$ under-stated $\rho-1$ in every cell where the test fired & \textsc{empirical} & the $527$ simulated cells with detection $\ge0.8$; median error $-19.1\%$ \\
Blind-spot width, $k/p$ at detection $0.8$ & \textsc{empirical} & $p=96$; $0.83$ at $N=800$, $0.50$ at $N=100$ \\
Clause (d) at the nominal rate & \textsc{empirical} & $98$ isotropic cells; detection $0.045$, $\Rstd=1.0018$ \\
False-positive rate $0.0509$ & \textsc{empirical} & $98$ null cells, $59$ $(N,p)$ pairs, $19\,600$ draws \\
$1.22$ constant on $\hat v/\Var(e)$ & \textsc{empirical} & free-$\sigma$ arm, $\rho\ge2$; CV $0.016$ \\
\midrule
\multicolumn{3}{l}{\emph{Controlled measurements on our own suite, not general laws}}\\*
Observation coverage $0.878$--$0.897$ while field coverage falls $0.851\to0.331$ & \textsc{measured} & 31 sweep points, 10 seeds, one model family \\
Basis restriction helps/hurts by $8.2\times$/$3.0\times$ & \textsc{measured} & our nine benchmarks \\
In-span excess over the floor, $0.411$ and $0.410$ on Bimodal & \textsc{measured} & Bimodal; our model and \textsc{gpr-spectral} \\
Coefficient prior adds little beyond the restriction; in-span excess unreachable by the complement test; constrained prior restores conservatism, not accuracy (Appendices~\ref{app:scoping}, \ref{app:negative}) & \textsc{measured} & as stated in each section \\
\end{longtable}}

\end{document}